\documentclass[11pt]{article}
\usepackage{textcomp}
\usepackage{tabularx,booktabs}
\usepackage[english]{babel}
\usepackage[utf8]{inputenc}
\usepackage{amsmath,amssymb,amsfonts,amsthm,titling}
\usepackage{tikz}
\usepackage{graphicx}
\usepackage{subcaption}
\usepackage{pdfpages}
\usepackage{multicol}
\usepackage{mathtools}
\usepackage{rotating}
\usepackage[colorinlistoftodos]{todonotes}
\usepackage[margin=1in]{geometry}
\usepackage{setspace}
\graphicspath{{img/}}
\usepackage{float}
\usepackage[margin = 1in]{caption}
\usepackage{fancyhdr}
\usepackage{float}
\usepackage{url}
\usepackage[square, numbers]{natbib}
\bibliographystyle{plainnat}
\usepackage{multirow, siunitx, booktabs,array}
\usepackage[font=scriptsize,labelfont=bf]{caption}
\makeatletter 
\def\@eqnnum{{\normalsize \normalcolor (\theequation)}} 
\makeatother

\doublespacing
\allowdisplaybreaks
\pagestyle{fancy}
\headheight 11pt
\fancyhead{} 
\fancyhead[L]{\textsl{\rightmark}}



\begin{document}
\begin{titlepage}

\newcommand{\HRule}{\rule{\linewidth}{0.6mm}}
\newcommand{\HRules}{\rule{\linewidth}{0.3mm}} 
\center 

\textsc{\LARGE Partially Specified Space Time Autoregressive Model with Artificial Neural Network}\\[1.5cm] 

 
{Wenqian Wang}\\[0.3cm]
{Beth Andrews}\\[1cm]

{Department of Statistics}\\[0.3cm]
{Northwestern University}\\[1cm]

\end{titlepage}

\section{Introduction}
In Chapter 1, we proposed a PSAR model enhanced by a neural network component which aims at explaining the spatial dependence through a nonlinear approach. However, sometimes we may collect data across time as well as space. For this type of data, we want to construct a model with dependence over time taken into consideration which has a broad application especially in environmental sciences. One interesting application is forecasting the weather. For example, in a fixed location, the everyday temperature will change from time to time but in the meanwhile, it would also be affected by temperatures in the neighboring locations.

A class of such linear models known as space-time autoregressive (STAR) and space-time autoregressive moving average (STARMA) models was introduced by Cliff and Ord (1973) and Martin and Oeppen (1975) in 1970s. 
In general, STAR models contain a hierarchical ordering of ``neighbors" of each site. For instance, on a regular grid, one can categorize neighbors of a site as first-order and second-order neighborhoods and so on. An observation at each site is then modeled as a linear function of the previous time observations at the same site and of the weighted previous observations at the neighboring sites of each order. Let $\{Y_t: t=0,\pm1,\pm2,\ldots\}$ be a multivariate time series of $n$ location components. Weights are incorporated in weight matrices $W^{(k)}$ for order $k$. An STAR model with autoregressive order $p$ and spatial order $(\lambda_1,\ldots, \lambda_p)$ considerded in Borovkova \emph{et.al} (2008) is defined as
\begin{equation*}
Y_t= \sum_{i=1}^{p}\sum_{k=0}^{\lambda_i}\phi_{ik}W^{(k)}Y_{t-i} + \varepsilon_t
\end{equation*}
where $\lambda_i$ is the spatial order of the $i$th autoregressive term, $\phi_{ik}$ is the autoregressive
parameter at time lag $i$ and spatial lag $k$.
Similarly an STAR model with $n$ space locations and $q$ exogenous variables is given by Stoffer (1985) as, for $Y_t\in\mathbb{R}^{n}$,
\begin{equation*}
Y_t= \sum_{i=1}^{p}\sum_{k=0}^{\lambda_i}\phi_{ik}W^{(k)}Y_{t-i}+ \sum_{i=0}^{p^{\prime}}X_{t-i}\beta_i+\varepsilon_t
\end{equation*} 
where values of the exogenous variables $\{X_t: t=0,\pm1,\pm2,\ldots\}$ are $n\times q$ covariate matrices containing $q$ values of exogenous variables for all $n$ locations at time $t$. $X_t =(x_{1,t},\ldots, x_{n,t})^{\prime}$ and $x_{s,t}\in\mathbb{R}^{q}$. $p^{\prime}$ is the autoregressive order for $\{X_t\}$ and $\beta_i$ is a $q\times 1$ model parameter.

STAR models have been widely applied in many areas of science. In genomics, Epperson (1993) analyzed population gene frequencies using STAR models where he assumed genes may vary over space and time. This model is also well known in economics (Giacomini and Granger, 2004) and has been applied to forecasting regional employment (Hernandez and Owyang, 2004) as well as traffic flow (Garrido 2000; Kamarianakis and Prastacos, 2004). For instance, the traffic flow of a road network observed at different fixed locations can be simultaneously modelled as a linear combination of past observations and current observations at neighboring sites. Through weight matrices, an STAR model assumes that near sites exert more influence on each other than distant ones.   

In this chapter, we want to extend an STAR model to a semi-parametric model such that this new model can capture nonlinear dependence between covariates and the spatial observations of interest. 
\section{PSTAR-ANN$(p)$ model}
We define a Partially Specified Space-Time Autoregressive model with Artificial Neural Network (PSTAR-ANN$(p)$) as follows. 
{\small\setlength{\abovedisplayskip}{3pt}
	\setlength{\belowdisplayskip}{\abovedisplayskip}
	\setlength{\abovedisplayshortskip}{0pt}
	\setlength{\belowdisplayshortskip}{3pt}
\begin{gather}\label{pstar-ann}
Y_{t} =  \sum_{i =0}^p \phi_iW_nY_{t-i} + X_{t}\beta+ \boldsymbol{F}(X_{t}\boldsymbol{\gamma}^{\prime})\lambda+\boldsymbol{\varepsilon}_{t},\quad T = 1,\ldots,T
\end{gather}}%
where $Y_t=\{y_{s,t}\}_{s=1}^{n}$ contains observations of dependent variables at $n$ locations and at time $t$. The independent variable matrix $X_{t} =(x_{1,t},\ldots, x_{n,t})^{\prime}$ is the covariate matrix at time $t$, where $x_{s,t}\in \mathbb{R}^{q\times 1}$ is a vector containing exogenous regressors at location $s$ and time $t$, $s=1,\ldots,n$. $\boldsymbol{\varepsilon}_{t} = \{\varepsilon_{s,t}\}_{s=1}^n$ denote a vector of $n$ noise terms which are independent identically distributed across  $s$ and $t$ with density function $f$, mean $0$ and variance $\sigma^2 =1$.

Exogenous parameters $\beta=(\beta_1,\ldots, \beta_q)^{\prime}\in\mathbb{R}^{q}$ and scalars $\phi_{i}, i=0,1,\ldots,p$, the spatial/space-time autoregressive parameters, are assumed to be the same over all regions. $W_n= \{w_{ij}\}\in \mathbb{R}^{n\times n}$ is a known spatial weight matrix which characterizes the connection between neighboring regions. For the ease of illustration, we define some notations. Given a function $f\in C^{1}(R^{1})$ continuous in $\mathbb{R}$, we define a new matrix map $R^{n}\rightarrow R^{n}$ as $\boldsymbol{f}$ s.t. $\boldsymbol{f}(x_1,\ldots,x_n)=(f(x_1),\ldots,f(x_n))^{\prime}$.

Using the notation defined above, the artificial neural network component (Medeiros {\em et al.} \cite{medeiros2006building}) can be written as $ \boldsymbol{F}(X_t\boldsymbol{\gamma}^{\prime})\lambda= $
{\small\setlength{\abovedisplayskip}{3pt}
	\setlength{\belowdisplayskip}{\abovedisplayskip}
	\setlength{\abovedisplayshortskip}{0pt}
	\setlength{\belowdisplayshortskip}{3pt}
	\begin{gather*}
\begin{bmatrix}
	F(x_{1,t}^{\prime}\boldsymbol{\gamma}_1) & F(x_{1,t}^{\prime}\boldsymbol{\gamma}_2)&
	\ldots& F(x_{1,t}^{\prime}\boldsymbol{\gamma}_h)\\
	F(x_{2,t}^{\prime}\boldsymbol{\gamma}_1) & F(x_{2,t}^{\prime}\boldsymbol{\gamma}_2)&
	\ldots& F(x_{2,t}^{\prime}\boldsymbol{\gamma}_h)\\
	\vdots&\vdots&\cdots& \vdots\\
	F(x_{n,t}^{\prime}\boldsymbol{\gamma}_1) & F(x_{n,t}^{\prime}\boldsymbol{\gamma}_2)&
	\ldots& F(x_{n,t}^{\prime}\boldsymbol{\gamma}_h)
\end{bmatrix}
\begin{bmatrix}
\lambda_1\\
\lambda_2\\
\vdots\\
\lambda_h
\end{bmatrix}\in \mathbb{R}^{n}
\end{gather*}}%
$\boldsymbol{F}(X_t\boldsymbol{\gamma}^{\prime})\lambda$ represents two layer NN component where the first layer has $h$-neurons with the sigmoid activation function and the second layer has only one neuron with an identity activation function. In the first layer, the input is $X_{t}$ and weights are $\boldsymbol{\gamma}=(\boldsymbol{\gamma}^{\prime}_1, \ldots, \boldsymbol{\gamma}^{\prime}_h) \in \mathbb{R}^{h\times q}$ where $\boldsymbol{\gamma}_i = (\gamma_{i1},\ldots, \gamma_{iq})^{\prime}$ is the weights in the $i$th neuron. $F(\cdot)$ is the sigmoid activation function in this layer. 
{\small
	\setlength{\abovedisplayskip}{3pt}
	\setlength{\belowdisplayskip}{\abovedisplayskip}
	\setlength{\abovedisplayshortskip}{0pt}
	\setlength{\belowdisplayshortskip}{3pt}
	\begin{equation*}
	F(x_{s,t}^{\prime}\boldsymbol{\gamma}_i)=(1+e^{-x_{s,t}^{\prime}\boldsymbol{\gamma}_i})^{-1}, \quad s=1,2,\ldots,n,\,i=1,2,\ldots,h
	\end{equation*}
} %
In the second layer, the inputs are $F(x_{s,t}^{\prime}\boldsymbol{\gamma}_i),i=1,\ldots, h$ and the weights are $\lambda_1,\ldots,\lambda_h$. So final output is $\sum_{i=1}^{h}\lambda_i F(x_{s,t}^{\prime}\boldsymbol{\gamma}_i)$ for each $x_{s,t}$.

The weight matrix $W_n$ is a measure of distance between the spatial units, and in our application, we begin by using a square symmetric matrix with $(i,j)$ element equals to 1 if regions $i$ and $j$ are neighbors and $0$ otherwise. The diagonal elements of the matrix are set to zero. Then we row standardize this matrix denoted by $W_n$. For more details on construction of the weight matrix, you can refer to the previous chapter or LeSage \cite{lesage1999spatial}.
The following plot provides a preview of the data we are working with. This data is generated from a PSTAR-ANN$(2)$ model in a 10 by 10 lattice. The model equation is shown below:
{\small\setlength{\abovedisplayskip}{3pt}
	\setlength{\belowdisplayskip}{\abovedisplayskip}
	\setlength{\abovedisplayshortskip}{0pt}
	\setlength{\belowdisplayshortskip}{3pt}
	\begin{gather}\label{equ-heatmap}
	Y_t = 0.6W_nY_t -0.274W_nY_{t-1} + X_t\begin{pmatrix}
	0.24\\
	-0.7
	\end{pmatrix} +1.5\boldsymbol{F}(X_t\boldsymbol{\gamma}^{\prime})+\boldsymbol{\varepsilon}_t, \quad X_t=\begin{pmatrix}
	x_{11,t}&\ldots&x_{1n,t}\\
	x_{21,t}&\ldots&x_{2n,t}
	\end{pmatrix}^{\prime}
	\end{gather}}
$\boldsymbol{\gamma}=(0.75,-0.35)$, with $\{x_{1i,t}\}_{i=1}^{n}, \{x_{2i,t}\}_{i=1}^{n}$ are generated i.i.d from $N(0,1.5^2)$, $N(0,3^2)$ and the error $\boldsymbol{\varepsilon}_t$ is from $N(0,1)$. Figure \ref{pstar_example} shows the heatmaps of $Y_t$ simulated at $t=30,29,28$ using (\ref{equ-heatmap}).
\begin{figure}[H]
\begin{center}
  \includegraphics[width=0.95\linewidth]{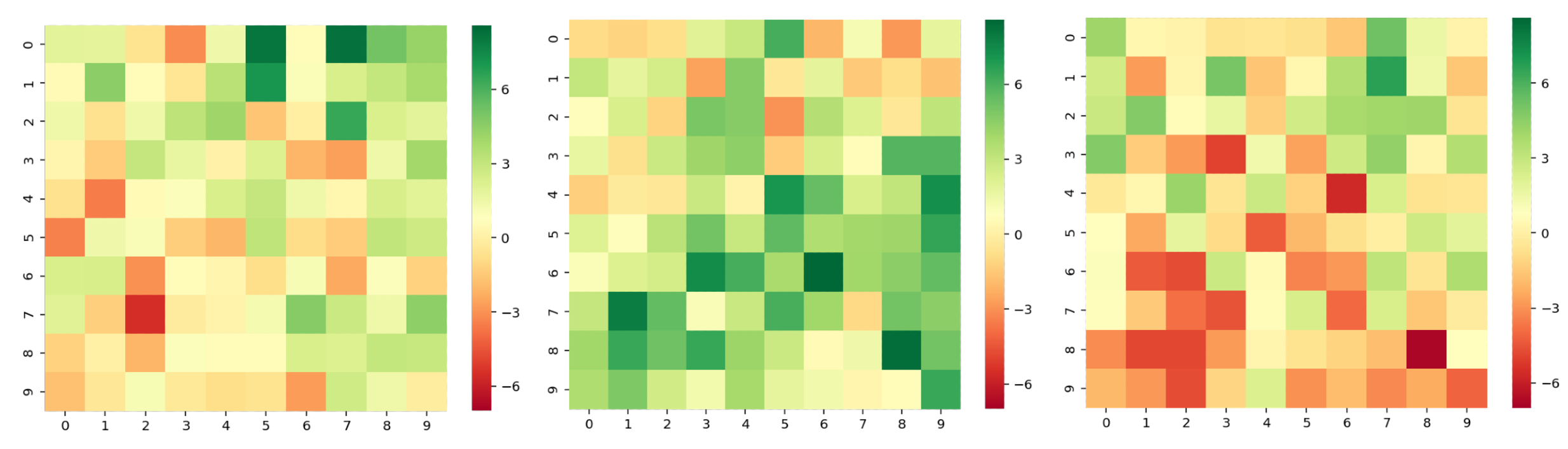}
  \caption{Heat map of $Y_{30}$, $Y_{29}$ and $Y_{28}$ simulated from a PSTAR-ANN$(2)$ model}
  \label{pstar_example}
  \end{center}
\end{figure}
The color scale represents the value in each cell. We can observe colors in cells changing gradually with the spatial and time dependence ($\phi_1 =-0.274$, there is a little flip in cell color comparing the left figure with the middle one).
\theoremstyle{definition}
\newtheorem{pro}{Proposition}
\newtheorem{ass}{Assumption}
\newtheorem{thm}{Theorem}
\newtheorem{lem}{Lemma}
\newtheorem{defn}{Definition}

\section{The Model and the Likelihood Function}
\subsection{The Model}
Let
{\small\setlength{\abovedisplayskip}{3pt}
	\setlength{\belowdisplayskip}{\abovedisplayskip}
	\setlength{\abovedisplayshortskip}{0pt}
	\setlength{\belowdisplayshortskip}{3pt}
\begin{gather*}
A_0 = I_{n}-\phi_0W_n, \quad A_i = \phi_i W_n \quad  i = 1,\ldots,p
\end{gather*}}
Suppose $A_0$ is invertible, then model (\ref{pstar-ann}) can be rewritten as:
{\small\setlength{\abovedisplayskip}{3pt}
	\setlength{\belowdisplayskip}{\abovedisplayskip}
	\setlength{\abovedisplayshortskip}{0pt}
	\setlength{\belowdisplayshortskip}{3pt}
\begin{gather*}
	A_0Y_t = \sum_{i = 1}^{p} A_i Y_{t-i} +X_t\beta +\boldsymbol{F}(X_t\boldsymbol{\gamma}^{\prime})\lambda + \boldsymbol{\varepsilon}_t\\
	Y_t = A_0^{-1}\sum_{i = 1}^{p} A_i Y_{t-i}+A_0^{-1}X_t\beta+ A_0^{-1}\boldsymbol{F}(X_t\boldsymbol{\gamma}^{\prime})\lambda+ A_0^{-1}\boldsymbol{\varepsilon}_t
\end{gather*}}
Let $L$ be the usual backshift operator such that $L^{i}Y_t=Y_{t-i}$, $A(L) = A_0-\sum_{i = 1}^{p}A_iL^{i}$. Assuming that $A^{-1}(L)$ exists, we can rewrite $Y_t$ as
{\small\setlength{\abovedisplayskip}{3pt}
	\setlength{\belowdisplayskip}{\abovedisplayskip}
	\setlength{\abovedisplayshortskip}{0pt}
	\setlength{\belowdisplayshortskip}{3pt}
	\begin{gather}\label{backshift-y}
    Y_t = A(L)^{-1}(X_t\beta +\boldsymbol{F}(X_t\boldsymbol{\gamma}^{\prime})\lambda + \boldsymbol{\varepsilon}_t)
	\end{gather}}
In order to derive asymptotic properties, we also need $Y_t$ to be a causal spatial temporal process. Referring to the definition in Brockwell and Davis \cite{brockwell2002introduction}, the process $Y_t$ is causal if there exists matrices $\{\Psi_j\}$ with absolutely summable components such that $A^{-1}(L)=\sum_{j=0}^{\infty}\Psi_jA_0^{-1}L^j$. Let $A(z) = A_0-A_1z-A_2z^2-\cdots-A_pz^p=A_0(I_n-A_0^{-1}A_1z-A_0^{-1}A_2z^2-\cdots-A_0^{-1}A_pz^p)$ be a matrix-valued polynomial. Causality is equivalent to the condition $det(A(z))\neq0$ for all $z \in \mathbb{C}$ such that $|z|\leq 1$.

The matrices $\Psi_j$ can be found recursively from the equations
{\small\setlength{\abovedisplayskip}{3pt}
	\setlength{\belowdisplayskip}{\abovedisplayskip}
	\setlength{\abovedisplayshortskip}{0pt}
	\setlength{\belowdisplayshortskip}{3pt}
	\begin{gather}\label{backshift-expansion}
	\Psi_j = \Theta_j +\sum_{k=1}^{\infty}A_0^{-1}A_k\Psi_{j-k}
	\end{gather}}
where we define $\Theta_0 = I_n$, $\Theta_j = 0_n$ for $j>0$, $A_j = 0_n$ for $j>p$ and $\Psi_j = 0_n$ for $j <0$. Therefore, this gives us
{\small\setlength{\abovedisplayskip}{3pt}
	\setlength{\belowdisplayskip}{\abovedisplayskip}
	\setlength{\abovedisplayshortskip}{0pt}
	\setlength{\belowdisplayshortskip}{3pt}
	\begin{align*}
	\Psi_0 &=I_n\\
	\Psi_1 &= A_0^{-1}A_1\\
	\Psi_2 &= (A_0^{-1}A_1)^2+ A_0^{-1}A_2\\
	\cdots
	\end{align*}}
Then
{\small\setlength{\abovedisplayskip}{3pt}
	\setlength{\belowdisplayskip}{\abovedisplayskip}
	\setlength{\abovedisplayshortskip}{0pt}
	\setlength{\belowdisplayshortskip}{3pt}
	\begin{align}\label{model-sum}
	Y_t = \sum_{j=0}^{\infty}\Psi_jA_0^{-1}(X_{t-j}\beta+\boldsymbol{F}(X_{t-j}\boldsymbol{\gamma}^{\prime})+\boldsymbol{\varepsilon}_{t-j})
	\end{align}}
With this expansion, we need few assumptions on $\sum_{j=0}^{\infty}\Psi_jA_0^{-1}$ and will be discussed later.
\subsection{Likelihood Function}
Denote $\boldsymbol{\theta}=(\phi_0, \phi_1,\ldots, \phi_p, \beta_1,\ldots,\beta_q, \lambda, \boldsymbol{\gamma}_1^{\prime},\ldots, \boldsymbol{\gamma}_h^{\prime})^{\prime}\in \boldsymbol{\Theta}$. Since $\varepsilon_{s,t}$ has an identical density function $f$, the conditional joint density of $Y_T,Y_{T-1},\ldots,Y_{1} $ conditioned on a finite number of past values $\{Y_{0},\ldots,Y_{1-p}\}$ and $\{X_t\}_{t=1}^{T}$ is 
{\small\setlength{\abovedisplayskip}{3pt}
	\setlength{\belowdisplayskip}{\abovedisplayskip}
	\setlength{\abovedisplayshortskip}{0pt}
	\setlength{\belowdisplayshortskip}{3pt}
	\begin{align*}
	f_{Y_T,Y_{T-1},\ldots,Y_{1}}(\boldsymbol{\theta}|Y_{0},\ldots,Y_{1-p},\{X_t\})=&\prod_{t=1}^{T}f_{Y_t}(\boldsymbol{\theta}|Y_{t-1},\ldots,Y_{1-p},\{X_t\})
	\end{align*}}
Since
{\small\setlength{\abovedisplayskip}{3pt}
	\setlength{\belowdisplayskip}{\abovedisplayskip}
	\setlength{\abovedisplayshortskip}{0pt}
	\setlength{\belowdisplayshortskip}{3pt}
	\begin{align*}
f_{Y_t}(\boldsymbol{\theta}|Y_{t-1},\ldots,Y_{1-p},\{X_t\})=&|A_0|\prod_{s=1}^{n}f(\varepsilon_{s,t}(\boldsymbol{\theta}))
	\end{align*}}
we have 
{\small\setlength{\abovedisplayskip}{3pt}
	\setlength{\belowdisplayskip}{\abovedisplayskip}
	\setlength{\abovedisplayshortskip}{0pt}
	\setlength{\belowdisplayshortskip}{3pt}
	\begin{align*}
f_{Y_T,Y_{T-1},\ldots,Y_{1}}(\boldsymbol{\theta}|Y_{0},\ldots,Y_{1-p},\{X_t\})=&|A_0|^{T}\prod_{t=1}^{T}\prod_{s=1}^{n}f(\varepsilon_{s,t}(\boldsymbol{\theta}))
	\end{align*}}
Hence, the log-likelihood function of $\boldsymbol{\theta}$ is given by \cite[p.~63]{anselin2013spatial},
{\small\setlength{\abovedisplayskip}{3pt}
	\setlength{\belowdisplayskip}{\abovedisplayskip}
	\setlength{\abovedisplayshortskip}{0pt}
	\setlength{\belowdisplayshortskip}{3pt}
	\begin{gather}\label{pstar-loglikelihood}
		\mathcal{L}_{n,T}(\boldsymbol{\theta}) = T\ln |A_0| + \sum_{t=1}^{T}\sum_{s=1}^{n}\ln f(\varepsilon_{s,t}(\boldsymbol{\theta}))
	\end{gather}}
where $\boldsymbol{\varepsilon}_t(\boldsymbol{\theta})=\{\varepsilon_{s,t}(\boldsymbol{\theta})\}_{s=1}^n= A(L)Y_t- X_{t}\beta-\boldsymbol{F}(X_{t}\boldsymbol{\gamma})\lambda$ for $t=1,\ldots,T$.

For the analysis of identification and estimation of the PSTAR-ANN$(p)$ model, we adopt the following assumptions:
\begin{ass}\label{compact-space}
	The $p + (q+1)(h+1)$ parameter vector $\boldsymbol{\theta}=(\phi_0, \phi_1,\ldots, \phi_p,\beta^{\prime},\lambda^{\prime},\boldsymbol{\gamma}_1^{\prime},\ldots,\boldsymbol{\gamma}_h^{\prime})^{\prime} \in \boldsymbol{\Theta}$, where $\boldsymbol{\Theta}$ is a subset of the $p+(q+1)(h+1)$ dimensional Euclidean space, $\mathbb{R}^{p+(q+1)(h+1)}$. $\boldsymbol{\Theta}$ is a closed and bounded compact set and contains the true parameter value $\boldsymbol{\theta}_0$ as an interior point.
\end{ass}
\begin{ass}\label{phi0-range}
	The spatial correlation coefficient $\phi_0$ satisfies $|\phi_0|<1$ and
	$\phi_0 \in (-1/\tau,1/\tau)$, where $\tau= max\{|\tau_1|,|\tau_2|,\ldots,|\tau_n|\}$, $\tau_1,\ldots, \tau_n$ are eigenvalues of spatial weight matrix $W_n$. To avoid the non-stationarity issue when $\phi_0$ approaches to 1, we assume $\sup_{\phi_0\in\boldsymbol{\Theta}}|\phi_0|<1$.
\end{ass}
\begin{ass}\label{ub-weight-matrix}
	We assume $W_n$ is defined by queen contiguity and is uniformly bounded in row and column sums in absolute value as $n\rightarrow\infty$ so $A_0^{-1}$ is also uniformly bounded in both column and row sums as $n\rightarrow\infty$. 
\end{ass}
\begin{ass}\label{ub-backshift-matrix}
	We assume a causal spatial process $Y_t$ which means that every $z$ which solves
	{\small\setlength{\abovedisplayskip}{3pt}
		\setlength{\belowdisplayskip}{\abovedisplayskip}
		\setlength{\abovedisplayshortskip}{0pt}
		\setlength{\belowdisplayshortskip}{3pt}
		\begin{gather*}
		\det \left[ z^pA_0-\sum_{i=1}^{p}\phi_iW_nz^{p-i} \right]=0
		\end{gather*}}%
	lie inside a unit circle. So the operator $A(L)$ is causal \cite{pfeifer1980three}.
\end{ass}
\begin{ass}\label{x-boundness}
	$X_t$ is stationary, ergodic satisfying $\mathbb{E}\,|x_{s,t}|^2 < \infty$ and $X_t$ is full column rank for $t=1,2\ldots,T$.
\end{ass}
\begin{ass}\label{error-dist}
	The error terms $\varepsilon_{s,t}$, $s=1,2,\ldots,n$, $t=1,2\ldots,T$ are independent and identically distributed with density function $f(\cdot)$, zero mean and unit variance $\sigma^2=1$. The moment $\mathbb{E}(|\varepsilon_{s,t}|^{2+r})$ exists for some $r>0$ and $\mathbb{E}|\ln f(\varepsilon_{s,t})|<\infty$.
\end{ass}
Assumption \ref{phi0-range} defines the parameter space for $\phi_0$ such that $A_0$ is strictly diagonally dominant. By the Levy-Desplanques theorem \cite{taussky1949recurring}, it follows that $A_0^{-1}$ exists for any values $\phi_0$ in $(-1/\tau, 1/\tau)$. In real applications, since $W_n$ is row standardized, one just searches $\hat{\phi}_0$ over a parameter space on $(-1,1)$ to find the optimizer \cite[p.~749-754]{gershgorin1931uber}.

It is natural to consider the neighborhood by connections and in many practical studies, since entries scaled to sum up to 1, each row of $W_n$ sums up to 1, which guarantees that all nonzero weights are in $(0,1]$. For simplicity, we define the weight matrix $W_n$ using the queen criterion and do row standardization. Assumption \ref{ub-weight-matrix} is originated by Kelejian and Prucha \cite{kelejian1998generalized,kelejian1999generalized} and is also used in Lee \cite{lee2004asymptotic}. 
With $W_n$ to be uniformly bounded, we can prove that $(I_n - \phi_0 W_n)^{-1}$ is also uniformly bounded in row and column sums for $\phi_0 \in (-1/\tau,1/\tau)$ and $\sup_{\phi_0\in\boldsymbol{\Theta}}|\phi_0|<1$, by Lemma A.4 in Lee\cite{lee2004asymptotic}. This result is a necessary condition for Assumption \ref{ub-backshift-matrix}.

From Assumption \ref{phi0-range} and \ref{ub-weight-matrix}, we can decompose $W_n$ by its eigenvalue and eigenvector pairs $\tau_i, v_i$: $W_n = P\Lambda P^{-1}$, where $\Lambda$ is a diagonal matrix with eigenvalues $\tau_i$ on its diagonals and $P=[v_1,v_2,\ldots, v_n]$ (we assume $v_i$'s are normalized eigenvectors). So
{\small\setlength{\abovedisplayskip}{3pt}
	\setlength{\belowdisplayskip}{\abovedisplayskip}
	\setlength{\abovedisplayshortskip}{0pt}
	\setlength{\belowdisplayshortskip}{3pt}
	\begin{align}\label{matrix-decomposition}
	W=P\begin{pmatrix}
	\tau_1& 0 &\cdots&0\\
	0& \tau_2&\cdots&0\\
	0&0&\ddots&0\\
	0&0 &\cdots&\tau_n
	\end{pmatrix}P^{-1},
		A_0^{-1}=P\begin{pmatrix}
		\frac{1}{1-\phi_0\tau_1}& 0 &\cdots&0\\
		0& \frac{1}{1-\phi_0\tau_2}&\cdots&0\\
		0&0&\ddots&0\\
		0&0 &\cdots&\frac{1}{1-\phi_0\tau_n}
		\end{pmatrix}P^{-1}
	\end{align}}%
It is trivial that $A_0^{-1}W_n = W_nA_0^{-1}$.

Assumption \ref{ub-backshift-matrix} guarantees that $A(L)$ is a causal operator and there exists a casual solution $\{Y_t\}$ to the system of the model equation (\ref{pstar-ann}). Then $\sum_{j=0}^{\infty}\Psi_j A_0^{-1}$ is absolutely summable. This requirement serves to determine a region of possible $\phi_i$ values that will result in a stationary process $\{Y_t\}$.

Assumption \ref{x-boundness} is a trivial one when exogenous variables are included in a space time model. Similar to previous chapter, the stationarity of $\{x_{s,t}\}$ is necessary in the ergodic theorem in later proofs.

Assumption \ref{error-dist} imposes restrictions for the random error. In this paper we mainly consider the heavy tailed density functions such scaled $t$ distributions and Laplace distributions. When the degrees of freedom goes to infinity, the scaled $t$ distribution would approximate a standard normal distribution. So we would like to concentrate more on the scaled $t$ distribution with lower degrees of freedom.

\section{Model Identification}
In the previous section, we have some restrictions on the weight matrices $W_n$ and $A_{i}$'s to guarantee the identification of a classical spatial time autoregressive model. We now investigate the conditions under which PSTAR(p)-ANN model is identified. By Rothenberg \cite{rothenberg1971identification}, a parameter $\theta_0\in\boldsymbol{\Theta}$ is {\it globally identified} if there is no other $\theta$ in $\boldsymbol{\Theta}$ that observationally equivalent to $\theta_0$ such that $f(y,\theta)=f(y,\theta_0)$; or the parameter $\theta_0$ is {\it locally identified} if there is no such $\theta$ in an open neighborhood of $\theta_0$ in $\boldsymbol{\Theta}$.  
The model (\ref{pstar-ann}), in principle, is neither globally nor locally identified due to the neural network component. The lack of identification of neural network models has been discussed in many papers (Hwang and Ding \cite{hwang1997prediction}; Medeiros \emph{et al.} \cite{medeiros2006building}). Here we extend the discussion to our proposed PSTAR(p)-ANN model. Three characteristics imply non-identification of our model:
(a) the interchangeable property: the value of the likelihood function may remain unchanged if we permute the hidden units. For a model with $h$ neurons, this will result in $h!$ different models that are indistinguishable from each other and have equal local maximums of the log-likelihood function;
(b) the ``symmetry" property: for a logistic function, $F(x)=1-F(-x)$ allows two equivalent parametrization for each hidden unit;
(c) the reducible property: the presence of irrelevant neurons in model (\ref{pstar-ann}) happens when $\lambda_i=0$ for at least one $i$ and parameters $\boldsymbol{\gamma}_i$ remain unidentified. Conversely, if $\boldsymbol{\gamma}_{i} =\mathbf{0}$, $F(X_t\boldsymbol{\gamma}_i)$ is a constant and $\lambda_i$ can take any value without affecting the value of likelihood functions. 

The problem of interchangeability (as mentioned in (a)) can be solved by imposing the following restriction, as in Medeiros \emph{et al.} \cite{medeiros2006building}:\\
{\bf Restriction 1.} {\it parameters $\lambda_1, \ldots, \lambda_h $  are restricted such that: $\lambda_1\geq\cdots\geq \lambda_h$.
}\\
And to tackle (b) and (c), we can apply another restriction:\\
{\bf Restriction 2.} {\it The parameters $\lambda_i$ and $\gamma_{i1}$ should satisfy:\\
	(1) $\lambda_i \neq 0$, $\forall i \in \{1,2,\ldots, h\}$; and\\
	(2) $\gamma_{i1} > 0$, $\forall i \in \{1,2,\ldots, h\}$.      
}\\
To guarantee the non-singularity of model matrices and the uniqueness of parameters, we impose the following basic assumption:   
\begin{ass}\label{x-fullrank}
	The true parameter vector $\boldsymbol{\theta}_0$ satisfies Restrictions 1-2.
\end{ass}
Referring to the section 4.3 by Medeiros \emph{et al.} \cite{medeiros2006building}, we can conclude the identifiability of the PSAR-ANN model.
\begin{lem}\label{identify}
	Under the Assumptions \ref{compact-space}-\ref{x-fullrank}, this PSTAR-ANN$(p)$ model (\ref{pstar-ann}) is globally identified.
\end{lem}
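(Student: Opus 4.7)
The plan is to reduce the identification problem to the classical identification of a linear spatial autoregression (for the affine part) plus the neural-network identification result of Medeiros \emph{et al.}~\cite{medeiros2006building} (for the nonlinear part), glued together by a separation-of-components argument. I proceed by contradiction: suppose $\boldsymbol{\theta}\in\boldsymbol{\Theta}$ is observationally equivalent to $\boldsymbol{\theta}_0$, meaning the conditional density in (\ref{pstar-loglikelihood}) agrees for every realization of $\{X_t\}$ and of the initial values. Matching the deterministic Jacobian factor forces $|A_0(\boldsymbol{\theta})|=|A_0(\boldsymbol{\theta}_0)|$, and matching the stochastic factor forces the residual vectors to coincide almost surely: $\boldsymbol{\varepsilon}_t(\boldsymbol{\theta})=\boldsymbol{\varepsilon}_t(\boldsymbol{\theta}_0)$ for every $t$. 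Subtracting these two expressions yields the master identity
\begin{equation*}
\bigl(A(L;\boldsymbol{\theta})-A(L;\boldsymbol{\theta}_0)\bigr)Y_t \;-\; X_t(\beta-\beta_0) \;=\; \boldsymbol{F}(X_t\boldsymbol{\gamma}^{\prime})\lambda-\boldsymbol{F}(X_t\boldsymbol{\gamma}_0^{\prime})\lambda_0,
\end{equation*}
valid almost surely in the data generated by $\boldsymbol{\theta}_0$.

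The next step is to separate the linear and nonlinear contributions. Substituting the causal representation (\ref{model-sum}) for $Y_t$ and its lags replaces the left-hand side by a measurable function of $\{X_{t-j},\boldsymbol{\varepsilon}_{t-j}\}_{j\geq 0}$, while the right-hand side depends only on $X_t$. Conditioning on $\sigma(\{X_s\}_{s\leq t})$ integrates out the i.i.d.\ noise contributions and leaves a deterministic identity in the covariate process alone. Exploiting the uniqueness of the causal expansion guaranteed by Assumption \ref{ub-backshift-matrix}, I match coefficients lag by lag: each lagged block on the left (involving $X_{t-j}$ for $j\geq 1$) has no counterpart on the right and must vanish, which forces $\phi_i=\phi_i^{(0)}$ for $i=0,1,\ldots,p$; the current-lag linear block then reduces to $X_t(\beta_0-\beta)$, and the full column rank of $X_t$ (Assumption \ref{x-boundness}) delivers $\beta=\beta_0$. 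With all affine parameters matched, the master identity collapses to $\boldsymbol{F}(X_t\boldsymbol{\gamma}^{\prime})\lambda=\boldsymbol{F}(X_t\boldsymbol{\gamma}_0^{\prime})\lambda_0$ almost surely, and the neural-network uniqueness theorem of Medeiros \emph{et al.}~\cite{medeiros2006building} (Section 4.3), applicable under Restrictions 1 and 2 embedded in Assumption \ref{x-fullrank}, yields $\lambda=\lambda_0$ and $\boldsymbol{\gamma}=\boldsymbol{\gamma}_0$.

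The principal obstacle is the separation step. Through the causal expansion, the autoregressive and exogenous linear terms on the left drag in an infinite history of lagged $X_{t-j}$ and of lagged sigmoid activations, and one must rule out the possibility that some combination of these lagged quantities could accidentally reproduce a current-time sigmoid of $X_t$. This is where three ingredients must conspire: the causality in Assumption \ref{ub-backshift-matrix}, which guarantees the expansion exists uniquely in nonnegative lags; the stationarity and full column rank of $X_t$ in Assumption \ref{x-boundness}, which makes distinct linear functionals of $X_t$ distinguishable random variables; and the transcendental nature of the sigmoid activation, which prevents it from being written as a finite linear combination of affine functions of $X_t$. Once this separation is secured, the remainder of the argument is a direct appeal to the two existing identification theorems.
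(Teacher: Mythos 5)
The paper offers essentially no proof of Lemma \ref{identify}: it simply points to Section 4.3 of Medeiros \emph{et al.} \cite{medeiros2006building}, which addresses only the identification of the affine-plus-sigmoid regression function in the exogenous variables. Your proposal supplies the bookkeeping the paper omits --- pinning down the $\phi_i$'s by exploiting the causal expansion and the independence of the innovations from $\{X_t\}$, and only then handing the residual current-time identity to the neural-network result --- so it is a genuinely more complete route, and the skeleton is right. Two steps, however, are stated too loosely to stand as written. First, observational equivalence matches the \emph{product} $|A_0(\boldsymbol{\theta})|^T\prod_{s,t}f(\varepsilon_{s,t}(\boldsymbol{\theta}))$ with the corresponding product at $\boldsymbol{\theta}_0$; you cannot equate the ``deterministic Jacobian factor'' and the ``stochastic factor'' separately, and for a symmetric $f$ the pointwise equality of densities does not by itself force $\boldsymbol{\varepsilon}_t(\boldsymbol{\theta})=\boldsymbol{\varepsilon}_t(\boldsymbol{\theta}_0)$. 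The clean repair is to equate conditional means given the past and $\{X_t\}$ (legitimate because $f$ has mean zero and the conditional laws agree), or to invoke identifiability of the location--scale family $y\mapsto|A_0|\prod_s f\big((A_0y-c)_s\big)$; either yields your master identity. Second, after the $\phi_i$'s are matched, what remains is $X_t(\beta_0-\beta)=\boldsymbol{F}(X_t\boldsymbol{\gamma}^{\prime})\lambda-\boldsymbol{F}(X_t\boldsymbol{\gamma}_0^{\prime})\lambda_0$; the full column rank in Assumption \ref{x-boundness} alone does not let you extract $\beta=\beta_0$ \emph{before} resolving the sigmoids, since a priori a combination of logistic terms could mimic an affine function on the support of $x_{s,t}$. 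The correct move is to apply the Medeiros \emph{et al.} identification theorem to the combined affine-plus-sigmoid map under Restrictions 1--2 (Assumption \ref{x-fullrank}); their argument uses precisely the non-affineness of the logistic that you flag in your closing paragraph, and it delivers $\beta=\beta_0$, $\lambda=\lambda_0$, $\boldsymbol{\gamma}=\boldsymbol{\gamma}_0$ simultaneously. With those two repairs your argument is sound and, unlike the paper's, actually addresses the spatial--temporal part of the model.
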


\section{Asymptotic Results}
Let the true parameter vector as $\boldsymbol{\theta}_0$ and the solution which maximizes the log-likelihood function (\ref{pstar-loglikelihood}) as $
\hat{\boldsymbol{\theta}}_{n,T}$. Hence, $\hat{\boldsymbol{\theta}}_{n,T}$ should satisfy
{\small\setlength{\abovedisplayskip}{3pt}
	\setlength{\belowdisplayskip}{\abovedisplayskip}
	\setlength{\abovedisplayshortskip}{0pt}
	\setlength{\belowdisplayshortskip}{3pt}
\begin{align*}
\hat{\boldsymbol{\theta}}_{n,T} = \arg\max_{\boldsymbol{\theta}\in \boldsymbol{\Theta}} \mathcal{L}_{n,T}(\boldsymbol{\theta})
\end{align*}}%
Suppose as $n$ is large enough, $T$ goes to infinity, $\hat{\boldsymbol{\theta}}_{n,T}$ is equivalent to maximizing the average of the likelihood function $\mathcal{L}_{n,T}(\boldsymbol{\theta})$ shown as follows:
\begin{align*}
\frac{1}{nT} \mathcal{L}_{n,T}(\boldsymbol{\theta}) &= \frac{1}{n}\ln |A_0| + \frac{1}{nT} \sum_{s=1}^{n}\sum_{t=1}^{T}\ln f(\varepsilon_{s,t}(\boldsymbol{\theta}))\\
\hat{\boldsymbol{\theta}}_{n,T} &= \arg\max_{\boldsymbol{\theta}\in \boldsymbol{\Theta}}  \left(\frac{1}{n}\ln |A_0| + \frac{1}{nT} \sum_{s=1}^{n}\sum_{t=1}^{T}\ln f(\varepsilon_{s,t}(\boldsymbol{\theta}))\right)\\
\varepsilon_{s,t}(\boldsymbol{\theta}) &= y_{s,t}-\sum_{i=0}^{p}\sum_{k=1}^{n}\phi_iw_{sk}y_{k,t-i}-x_{s,t}^{\prime}\beta-\sum_{i=1}^{h}\lambda_i F(x_{s,t}^{\prime}\boldsymbol{\gamma}_i)
\end{align*}
At specific time $t$, suppose we have a $n_1 \times n_2$ lattice where we consider asymptotic properties of $\hat{\boldsymbol{\theta}}_{n,T}$ when $n=n_1n_2\rightarrow\infty$. Write the location $s$ as the coordinate $(s_x, s_y)$ in the $[1,n_1]\times[1,n_2]$ lattice space. The distance between two locations $s,j$ is defined as $d(s,j) = \max(|s_x-j_x|, |s_y-j_y|)$. So if observations at $s,j$ locations are neighbors (by queen criterion), their coordinates should satisfy $(s_x-j_x)^2+(s_y-j_y)^2\leq2$ or $d(s,j) =1$.

In a spatial context, we should notice that the functional form of $y_{s,t}$ is not identical for all the locations due to values of the weights $\{w_{si}\}_{i=1}^n$. For example, in a lattice, units at edges, vertexes or in the interior have different density functions due to different neighborhood structures (Figure \ref{neigh-str}). Denote $\mathcal{N}_s$ as a neighborhood set for location $s$. For an interior point (Figure \ref{neigh-str}(c)), its neighborhood set $\mathcal{N}_s$ contains eight neighbors where $w_{sj}=1/8$ if $d(s,j)=1$ otherwise $w_{sj}=0$, for $j=1,2,\ldots, n$. Similarly, an edge point (Figure \ref{neigh-str}(b)) has five neighboring units with $w_{sj}=1/5$ for $j\in \mathcal{N}_s$ and the weight of a vertex neighborhood is $1/3$ because a vertex unit has only three neighbors. This is known as an edge effect in spatial problems.
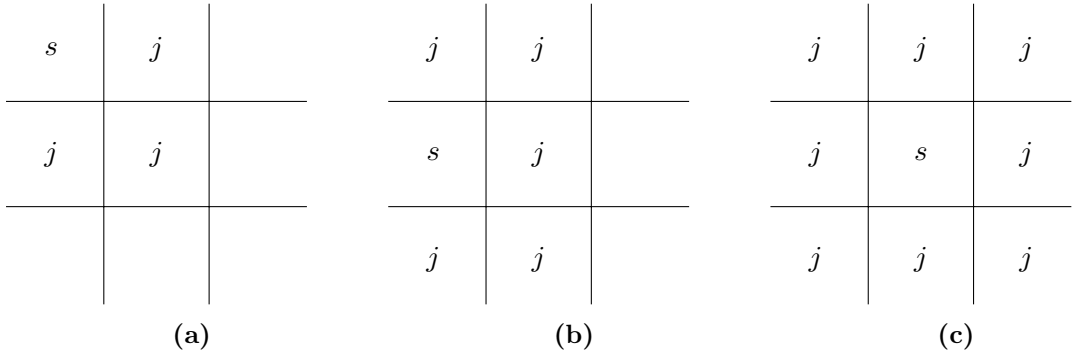
\begin{figure}[ht]
	\centering
	\begin{subfigure}[b]{0.3\textwidth}
		\begin{tikzpicture}
		\draw[step=1.4cm,color=black] (0.1,0.1) grid (4.1,4.1);
		\node at (0.7,3.5) {$s$};
		\node at (2.1,3.5) {$j$};
		\node at (0.7,2.1) {$j$};
		\node at (2.1,2.1) {$j$};
		\end{tikzpicture}
		\caption{}
	\end{subfigure}
	\begin{subfigure}[b]{0.3\textwidth}
		\begin{tikzpicture}
		\draw[step=1.4cm,color=black] (0.1,0.1) grid (4.1,4.1);
		\node at (0.7,0.7) {$j$};
		\node at (2.1,0.7) {$j$};
		\node at (2.1,2.1) {$j$};
		\node at (0.7,2.1) {$s$};
		\node at (2.1,3.5) {$j$};
		\node at (0.7,3.5) {$j$};
		\end{tikzpicture}
		\caption{}
	\end{subfigure}
	\begin{subfigure}[b]{0.3\textwidth}
		\begin{tikzpicture}
		\draw[step=1.4cm,color=black] (0.1,0.1) grid (4.1,4.1);
		\node at (0.7,0.7) {$j$};
		\node at (2.1,0.7) {$j$};
		\node at (3.5,0.7) {$j$};
		\node at (0.7,2.1) {$j$};
		\node at (2.1,2.1) {$s$};
		\node at (3.5,2.1) {$j$};
		\node at (0.7,3.5) {$j$};
		\node at (2.1,3.5) {$j$};
		\node at (3.5,3.5) {$j$};
		\end{tikzpicture}
		\caption{}
	\end{subfigure}
	\caption{Vertex (a), Edge (b) and Interior Points (c) Neighborhood Structures: $s$ is the target location and $j$ represents the neighborhood of $s$}
	\label{neigh-str}
\end{figure}

\noindent
To deal with this, referring to Yao and Brockwell \cite{yao2006gaussian}, we construct an edge effect correction scheme based on the way that the sample size tends to infinity. 
In a space $[1,n_1]\times [1,n_2]$, we consider its interior area as $\mathcal{S}=\{(s_x,s_y): b_1\leq s_x \leq n_1-b_1, b_2\leq s_y \leq n_2-b_2\}$, where $b_1,b_2,n_1,n_2\rightarrow\infty$ satisfying that $b_1/n_1,b_2/n_2\rightarrow0$ and other locations belong to the boundary areas $\mathcal{M}$.
Therefore the set $\mathcal{S}$ contains $n^{\ast}=(n_1-2b_1)(n_2-2b_2)$ interior locations while the set $\mathcal{M}$ contains $n-n^{\ast}$ boundary locations. Then $n^{\ast}/n\rightarrow1$ and $\mathcal{L}_{n,T}(\boldsymbol{\theta})$ can be split into a sum of two parts (interior $\mathcal{S}$ and boundary $\mathcal{M}$ parts):
{\small\setlength{\abovedisplayskip}{3pt}
	\setlength{\belowdisplayskip}{\abovedisplayskip}
	\setlength{\abovedisplayshortskip}{0pt}
	\setlength{\belowdisplayshortskip}{3pt}
	\begin{align*}
\mathcal{L}_{n,T}(\boldsymbol{\theta})=&\sum_{t=1}^{T}\left(\sum_{s\in \mathcal{M}}l(\boldsymbol{\theta}|z_{s,t})+\sum_{s\in \mathcal{S}}l(\boldsymbol{\theta}|z_{s,t})\right)\\
l(\boldsymbol{\theta}|z_{s,t})=&\frac{1}{n}\ln |A_0|+\ln f(y_{s,t}-\sum_{i=0}^{p}\sum_{k=1}^{n}\phi_iw_{sk}y_{k,t-i}-x_{s,t}^{\prime}\beta-\sum_{i=1}^{h}\lambda_i F(x_{s,t}^{\prime}\boldsymbol{\gamma}_i))
	\end{align*}} %
where $Z_t=(W_nY_t,W_nY_{t-1}, \ldots, W_nY_{t-p}, X_t)$ and $z_{s,t}$ is the $s$ row of $Z_t$. 

Therefore, given that $\lim_{n_1,n_2\rightarrow\infty}\frac{|\mathcal{M}|}{n}=0$, $n^{-1}\sum_{s\in \mathcal{M}}l(\boldsymbol{\theta}|z_{s,t})$ vanishes a.s. as $n$ tends to infinity for any $\boldsymbol{\theta}\in\boldsymbol{\Theta}$. Therefore,
{\small\setlength{\abovedisplayskip}{3pt}
	\setlength{\belowdisplayskip}{\abovedisplayskip}
	\setlength{\abovedisplayshortskip}{0pt}
	\setlength{\belowdisplayshortskip}{3pt}
	\begin{align*}
	\lim_{n,T\rightarrow\infty}(nT)^{-1} \mathcal{L}_{n,T}(\boldsymbol{\theta})&=\lim_{T\rightarrow\infty}\frac{1}{T}\sum_{t=1}^{T}\lim_{n_1,n_2\rightarrow\infty} \frac{1}{n_1n_2}\left(\sum_{s\in \mathcal{M}}l(\boldsymbol{\theta}|z_{s,t})+\sum_{s\in \mathcal{S}}l(\boldsymbol{\theta}|z_{s,t})\right)\\
&= \lim_{T\rightarrow\infty}\frac{1}{T}\sum_{i=1}^{T}\lim_{n_1,n_2\rightarrow\infty}\frac{1}{n_1n_2}\sum_{s\in \mathcal{S}}l(\boldsymbol{\theta}|z_{s,t})\quad a.s.
	\end{align*}} %
In this equation, every location $s\in \mathcal{S}$ has eight neighboring units under the queen criterion with nonzero weights $w_{sj}=1/8$. Hence for an interior unit $s\in\mathcal{S}$, $\sum_{i=1}^{n}w_{si}y_i= \sum_{j=1}^n\frac{1}{8}y_jI_{\{d(s,j)=1\}}$. And the log likelihood function $\mathcal{L}_{n,T}(\boldsymbol{\theta})$ is approximately
{\small\setlength{\abovedisplayskip}{3pt}
	\setlength{\belowdisplayskip}{\abovedisplayskip}
	\setlength{\abovedisplayshortskip}{0pt}
	\setlength{\belowdisplayshortskip}{3pt}
	\begin{equation}\label{mod-average}
	(nT)^{-1}\mathcal{L}_{n,T}(\boldsymbol{\theta})\approx \frac{1}{nT}\sum_{i=1}^{T}\sum_{s\in \mathcal{S}}l(\boldsymbol{\theta}|z_{s,t}) \quad \text{for } n_1,n_2,T\rightarrow\infty
	\end{equation}} %
So the maximum likelihood estimator $\hat{\boldsymbol{\theta}}_{n,T}$ approximately maximizes
{\small\setlength{\abovedisplayskip}{3pt}
	\setlength{\belowdisplayskip}{\abovedisplayskip}
	\setlength{\abovedisplayshortskip}{0pt}
	\setlength{\belowdisplayshortskip}{3pt}
	\begin{gather*}
	\hat{\boldsymbol{\theta}}_{n,T} \approx \arg\max_{\boldsymbol{\theta}\in\boldsymbol{\Theta}}\lim_{\substack{T\rightarrow\infty\\ n_1,n_2\rightarrow\infty}}\frac{1}{nT}\sum_{i=1}^{T}\sum_{s\in \mathcal{S}}l(\boldsymbol{\theta}|z_{s,t})
	\end{gather*}} %
\subsection{Consistency Results}
To establish the consistency of $\hat{\boldsymbol{\theta}}_{n,T}$, the heuristic insight is that because $\hat{\boldsymbol{\theta}}_{n,T}$ maximizes $\frac{1}{nT}\mathcal{L}_{n,T}(\boldsymbol{\theta})$, it approximately maximizes $\frac{1}{nT}\sum_{i=1}^{T}\sum_{s\in \mathcal{S}}l(\boldsymbol{\theta}|z_{s,t})$. By equation (\ref{mod-average}), $\frac{1}{nT}\mathcal{L}_{n,T}(\boldsymbol{\theta})$ can generally be shown tending to a real function $\mathcal{L}:\boldsymbol{\Theta}\rightarrow\mathbb{R}$ with maximizer $\boldsymbol{\theta}_0$ as $n, T\rightarrow\infty$ under mild conditions on the data generating process, then $\hat{\boldsymbol{\theta}}_{n,T}$ should tend to $\boldsymbol{\theta}_0$ almost surely. Before the formal proof of the consistency, we need the following assumptions on the density function $f(\cdot)$ satisfied (similar assumptions are made in White \cite{white1996estimation}, Andrews, Davis and Breidt \cite{andrews2006maximum}, Lii and Rosenblatt \cite{lii1992approximate}).
\begin{ass}\label{continuous}
	For all $s\in \mathbb{R}$, $f(s) >0 $ and $f(s)$ is twice continuously differentiable with respect to $s$.
\end{ass}
\begin{ass}\label{error-integral}
	The density should satisfy the following equations:\\
	$\bullet$ $\int sf^{\prime}(s)\,ds = sf(s)|^{\infty}_{-\infty}-\int f(s)\,ds = -1$\\
	$\bullet$ $\int f^{\prime\prime}(s)\,ds = f^{\prime}(s)|^{\infty}_{-\infty} = 0$\\
	$\bullet$ $\int s^2f^{\prime\prime}(s)\,ds = s^2f^{\prime}(s)|^{\infty}_{-\infty} - 2\int sf^{\prime}(s)\,ds = 2$
\end{ass}
\begin{ass}\label{error-dominance}
	The density should follow the following dominance conditions:\\
	$\left|\frac{f^{\prime}(s)}{f(s)}\right|$, $\left|\frac{f^{\prime}(s)}{f^(s)}\right|^2$,  $\left|\frac{f^{\prime}(s)}{f(s)}\right|^4$, $\frac{f^{\prime\prime}(s)}{f(s)}$,  and $\frac{f^{\prime\prime}(s)f^{\prime2}(s)}{f^3(s)}$ are dominated by $a_1+a_2\left|s\right|^{c_1}$, where $a_1$, $a_2$, $c_1$ are non-negative constants and $\int \left|s\right|^{c_1+2}f(s)\,ds < \infty$.
\end{ass}
\begin{ass}\label{x-moment}
	If $c_1>2$ in previous assumption, we further assume $\mathbb{E}\,|x_{s,t}|^{c_1}<\infty$.
\end{ass}
Discussed in Breidt, Davis, Lii and Rosenblatt \cite{breid1991maximum} and Andrews, Davis and Breidt \cite[p.~1642-1645]{andrews2006maximum}, these assumptions on the density $f(\cdot)$ are satisfied by the t-distribution case when $\nu >2$ and by a mixture of Gaussian distributions. The assumption $\mathbb{E}|\ln f(s)|<\infty$ (see Assumption \ref{error-dist}) is also checked satisfied by the normal and t distributions ($\nu>2$). The Laplace distribution does not strictly satisfy the Assumptions \ref{continuous}-\ref{error-dominance}, since it is not differentiable at 0 but it satisfies these boundedness conditions almost everywhere so we believe the consistency and asymptotic normality results remain valid for parameter estimates. This will be shown in the simulation section. Assumption \ref{x-moment} is a necessary to boundedness conditions in later proof.
\begin{lem}\label{unique-max}
	Given Assumptions \ref{compact-space}-\ref{error-dominance}, 
		{\small\setlength{\abovedisplayskip}{3pt}
		\setlength{\belowdisplayskip}{\abovedisplayskip}
		\setlength{\abovedisplayshortskip}{0pt}
		\setlength{\belowdisplayshortskip}{3pt}
		\begin{align*}
		\boldsymbol{\theta}_0=\max_{\boldsymbol{\theta}\in\boldsymbol{\Theta}}\mathbb{E}\,\mathcal{L}_{n,T}(\boldsymbol{\theta})\equiv\max_{\boldsymbol{\theta}\in\boldsymbol{\Theta}}\mathbb{E}\,\frac{1}{nT}\mathcal{L}_{n,T}(\boldsymbol{\theta})
		\end{align*}}%
\end{lem}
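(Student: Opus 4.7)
This is the classical information inequality for conditional maximum likelihood, adapted to the PSTAR-ANN framework. The plan is to decompose the expected log-likelihood into a sum of expected conditional log-densities, apply Jensen's inequality to each summand to obtain $\mathbb{E}\,\mathcal{L}_{n,T}(\boldsymbol{\theta})\le \mathbb{E}\,\mathcal{L}_{n,T}(\boldsymbol{\theta}_0)$, and then invoke Lemma \ref{identify} to rule out equality away from $\boldsymbol{\theta}_0$.

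First I would rewrite $\mathcal{L}_{n,T}(\boldsymbol{\theta}) = \sum_{t=1}^{T}\log f_{Y_t}(\boldsymbol{\theta}\mid \mathcal{F}_{t-1},X_t)$, using the factorization established above equation (\ref{pstar-loglikelihood}), with $f_{Y_t}(\boldsymbol{\theta}\mid \cdot) = |A_0(\boldsymbol{\theta})|\prod_{s=1}^{n}f(\varepsilon_{s,t}(\boldsymbol{\theta}))$ and $\mathcal{F}_{t-1}$ the $\sigma$-algebra generated by $\{Y_{t-1},\ldots,Y_{1-p}\}$. Before proceeding I must check that $\mathbb{E}\,\log f(\varepsilon_{s,t}(\boldsymbol{\theta}))$ is finite for every $\boldsymbol{\theta}\in\boldsymbol{\Theta}$ and not only at $\boldsymbol{\theta}_0$ (the latter being covered by Assumption \ref{error-dist}): integrating the dominance $|f'/f|(s)\le a_1+a_2|s|^{c_1}$ of Assumption \ref{error-dominance} gives a polynomial pointwise bound on $|\log f|$, and the moment conditions of Assumptions \ref{x-boundness}, \ref{error-dist}, \ref{x-moment} combined with uniform boundedness of $A_0^{-1}$ (Assumption \ref{ub-weight-matrix}), boundedness of the sigmoid $F$, and compactness of $\boldsymbol{\Theta}$ (Assumption \ref{compact-space}) control the requisite moments of $\varepsilon_{s,t}(\boldsymbol{\theta})$ uniformly on $\boldsymbol{\Theta}$.

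Next, for each fixed $t$ the tower property gives
\begin{align*}
\mathbb{E}\!\left[\log\frac{f_{Y_t}(\boldsymbol{\theta}\mid\mathcal{F}_{t-1},X_t)}{f_{Y_t}(\boldsymbol{\theta}_0\mid\mathcal{F}_{t-1},X_t)}\right] &= \mathbb{E}\!\left[\mathbb{E}\!\left[\log\frac{f_{Y_t}(\boldsymbol{\theta}\mid\mathcal{F}_{t-1},X_t)}{f_{Y_t}(\boldsymbol{\theta}_0\mid\mathcal{F}_{t-1},X_t)}\,\bigg|\,\mathcal{F}_{t-1},X_t\right]\right] \le \log 1 = 0,
\end{align*}
by Jensen's inequality applied to $\log$ under the true conditional density $f_{Y_t}(\boldsymbol{\theta}_0\mid \cdot)$, since that conditional expectation of the density ratio equals $1$. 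Summing in $t$ yields $\mathbb{E}\,\mathcal{L}_{n,T}(\boldsymbol{\theta})\le \mathbb{E}\,\mathcal{L}_{n,T}(\boldsymbol{\theta}_0)$; dividing by $nT$ gives the normalized form of the lemma.

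Finally I need uniqueness of the maximizer. Equality in Jensen's inequality forces the ratio $f_{Y_t}(\boldsymbol{\theta}\mid\cdot)/f_{Y_t}(\boldsymbol{\theta}_0\mid\cdot)$ to equal $1$ almost surely, so the conditional distribution of $Y_t$ under $\boldsymbol{\theta}$ coincides with that under $\boldsymbol{\theta}_0$, i.e.\ the two parameters are observationally equivalent in the sense of Rothenberg. Global identifiability from Lemma \ref{identify} then forces $\boldsymbol{\theta}=\boldsymbol{\theta}_0$. The delicate point I expect is not the Jensen step itself but the uniform integrability argument of the second paragraph, because at a generic $\boldsymbol{\theta}\ne\boldsymbol{\theta}_0$ the pseudo-error $\varepsilon_{s,t}(\boldsymbol{\theta})$ is \emph{not} the i.i.d.\ sequence covered by Assumption \ref{error-dist}; its moments must instead be controlled through the causal MA representation (\ref{model-sum}) together with the dominance conditions on $f$.
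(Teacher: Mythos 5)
Your proposal is correct and follows essentially the same route as the paper: Jensen's inequality applied to the likelihood ratio to get $\mathbb{E}\,\mathcal{L}_{n,T}(\boldsymbol{\theta})\le\mathbb{E}\,\mathcal{L}_{n,T}(\boldsymbol{\theta}_0)$, followed by the global identifiability of Lemma \ref{identify} to force uniqueness. The only differences are cosmetic refinements on your part --- you apply Jensen term-by-term to the conditional densities via the tower property rather than once to the joint conditional likelihood, and you add an explicit integrability check for $\mathbb{E}\,|\ln f(\varepsilon_{s,t}(\boldsymbol{\theta}))|$ at $\boldsymbol{\theta}\ne\boldsymbol{\theta}_0$ that the paper defers to its Lemma \ref{uni-cvg} --- neither of which changes the substance of the argument.
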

\begin{proof}
	$L_{n,T}$ is the joint density function of $Y_t, X_t$ for $t=1,\ldots,T$.
	{\small\setlength{\abovedisplayskip}{3pt}
		\setlength{\belowdisplayskip}{\abovedisplayskip}
		\setlength{\abovedisplayshortskip}{0pt}
		\setlength{\belowdisplayshortskip}{3pt}
		\begin{gather*}
		\mathbb{E}\,\mathcal{L}_{n,T}(\boldsymbol{\theta})-\mathbb{E}\,\mathcal{L}_{n,T}(\boldsymbol{\theta}_0)=\mathbb{E}\,\ln\frac{L_{n,T}(\boldsymbol{\theta})}{L_{n,T}(\boldsymbol{\theta}_0)}
		\end{gather*}}%
Denote $Z=(Y_T,X_T,\ldots, Y_1,X_1)$. By Jensen's inequality,
	{\small\setlength{\abovedisplayskip}{3pt}
		\setlength{\belowdisplayskip}{\abovedisplayskip}
		\setlength{\abovedisplayshortskip}{0pt}
		\setlength{\belowdisplayshortskip}{3pt}
		\begin{gather*}
		\mathbb{E}\,\ln\frac{L_{n,T}(\boldsymbol{\theta})}{L_{n,T}(\boldsymbol{\theta}_0)}\leq \ln \mathbb{E}\,\frac{L_{n,T}(\boldsymbol{\theta})}{L_{n,T}(\boldsymbol{\theta}_0)}=\ln \int_{-\infty}^{\infty}\frac{L_{n,T}(\boldsymbol{\theta})}{L_{n,T}(\boldsymbol{\theta}_0)}L_{n,T}(\boldsymbol{\theta}_0)\,dZ=0
		\end{gather*}}%
	So $\mathbb{E}\,\mathcal{L}_{n,T}(\boldsymbol{\theta})<\mathbb{E}\,\mathcal{L}_{n,T}(\boldsymbol{\theta}_0)$. By Lemma \ref{identify}, the PSTAR(p)-ANN model is globally identified and therefore $\mathbb{E}\,\mathcal{L}_{n,T}(\boldsymbol{\theta})$ is uniquely maximized at $\boldsymbol{\theta}_0$ for all $n,T$. Since the parameter vector $\boldsymbol{\theta}$ does not depend on $n$ and $T$, it is equivalent to say that $\boldsymbol{\theta}_0=\max_{\boldsymbol{\theta}\in \boldsymbol{\Theta}}\mathbb{E}\,\frac{1}{nT}\mathcal{L}_{n,T}(\boldsymbol{\theta})$.
\end{proof}
We define a Hadamard product denoted by $\circ$, s.t. for vectors $a,b_1,\ldots,b_n\in\mathbb{R}^n$, a matrix $B=(b_1, \ldots,b_n)\in \mathbb{R}^{n \times n}$,
{\small\setlength{\abovedisplayskip}{3pt}
	\setlength{\belowdisplayskip}{\abovedisplayskip}
	\setlength{\abovedisplayshortskip}{0pt}
	\setlength{\belowdisplayshortskip}{3pt}
	\begin{gather*}
	a\circ B =\begin{bmatrix}
	a_1b_{11} & a_1b_{21}&\cdots&a_1b_{n1}\\
	a_2b_{12} & a_2b_{22}&\cdots&a_2b_{n2}\\
	\vdots&\vdots&\ddots&\vdots\\
	a_nb_{1n} & a_nb_{2n}&\cdots&a_nb_{nn}
	\end{bmatrix},
	a\circ b_1 =\begin{bmatrix}
	a_1b_{11}\\
	a_2b_{12}\\
	\vdots\\
	a_nb_{1n}
	\end{bmatrix}
	\end{gather*}}%
And let
{\small\setlength{\abovedisplayskip}{3pt}
	\setlength{\belowdisplayskip}{\abovedisplayskip}
	\setlength{\abovedisplayshortskip}{0pt}
	\setlength{\belowdisplayshortskip}{3pt}
	\begin{align*}
	k_0 &= \int \left|\frac{f^{\prime}(s)}{f(s)}\right|f(s)\,ds\\
	k_1 &= \int \left|\frac{f^{\prime^2}(s)}{f^2(s)}-\frac{f^{\prime\prime}(s)}{f(s)}\right|f(s)\,ds\\
	k_2 & =\int \left|\frac{sf^{\prime^2}(s)}{f(s)}-\frac{sf^{\prime\prime}(s)}{f(s)}\right|f(s)\,ds\\
	k_3 &= \int \left|\frac{s^2f^{\prime^2}(s)}{f(s)}-\frac{s^2f^{\prime\prime}(s)}{f(s)}\right|f(s)\,ds
	\end{align*}} %
To facilitate the proof later on, we provide a lemma as follows.
\begin{lem}\label{uni-cvg}
	Given Assumptions \ref{compact-space}-\ref{x-moment}, 
	{\small\setlength{\abovedisplayskip}{3pt}
		\setlength{\belowdisplayskip}{\abovedisplayskip}
		\setlength{\abovedisplayshortskip}{0pt}
		\setlength{\belowdisplayshortskip}{3pt}
		\begin{equation}\label{uni-cvg-eqn}
		\sup\limits_{\boldsymbol{\theta}\in\boldsymbol{\Theta}}\left|\frac{1}{nT}\sum_{s=1}^{n}\sum_{t=1}^{T}\ln f(\varepsilon_{s,t}(\boldsymbol{\theta}))-\mathbb{E}\,\frac{1}{nT}\sum_{s=1}^{n}\sum_{t=1}^{T}\ln f(\varepsilon_{s,t}(\boldsymbol{\theta}))\right|\xrightarrow{p} 0\text{ as }n,T\rightarrow\infty
		\end{equation}} %
\end{lem}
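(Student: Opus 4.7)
The goal is to establish a uniform law of large numbers on the compact parameter space $\boldsymbol{\Theta}$. Because the errors $\varepsilon_{s,t}(\boldsymbol{\theta})$ are not independent across the spatial index $s$ (owing to the influence of $W_n$), I cannot appeal to a classical iid ULLN. My plan is to combine the edge-correction reduction already introduced in the excerpt with the standard Newey--McFadden compactness argument: pointwise convergence at every $\boldsymbol{\theta}$ plus a stochastic equicontinuity bound derived from a mean-value / dominance argument.

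First I would reduce the outer sum over $s\in\{1,\ldots,n\}$ to a sum over the interior set $\mathcal{S}$, using $|\mathcal{M}|/n\to 0$. On $\mathcal{S}$ every location has the same queen neighborhood with weights $1/8$, so the family $\{\varepsilon_{s,t}(\boldsymbol{\theta})\}_{s\in\mathcal{S}}$ is identically distributed at each $t$. Next, for \emph{pointwise} convergence at a fixed $\boldsymbol{\theta}$, Assumption \ref{ub-backshift-matrix} supplies the causal MA representation (\ref{model-sum}); combined with the stationarity and ergodicity of $\{X_t\}$ from Assumption \ref{x-boundness} and the iid errors from Assumption \ref{error-dist}, the process $\{Y_t\}$ is stationary and ergodic in $t$, and hence so is $\{\varepsilon_{s,t}(\boldsymbol{\theta})\}_t$ for each $s$. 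Inserting the MA expansion into $\varepsilon_{s,t}(\boldsymbol{\theta})$ and using Assumptions \ref{error-dist} and \ref{error-dominance} to dominate $|\ln f(\cdot)|$ by a polynomial of order $c_1+2$ yields the integrability $\mathbb{E}\,|\ln f(\varepsilon_{s,t}(\boldsymbol{\theta}))|<\infty$. The Birkhoff ergodic theorem then gives the time average converges almost surely; averaging over the identically distributed $s\in\mathcal{S}$ produces $\frac{1}{nT}\sum_{s,t}\ln f(\varepsilon_{s,t}(\boldsymbol{\theta}))\xrightarrow{p}\mathbb{E}\,\ln f(\varepsilon_{s,t}(\boldsymbol{\theta}))$.

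For the Lipschitz/equicontinuity step, the mean value theorem gives, for $\boldsymbol{\theta}^*$ on the segment joining $\boldsymbol{\theta}_1$ and $\boldsymbol{\theta}_2$,
\begin{equation*}
\bigl|\ln f(\varepsilon_{s,t}(\boldsymbol{\theta}_1))-\ln f(\varepsilon_{s,t}(\boldsymbol{\theta}_2))\bigr|\le \Bigl|\tfrac{f'(\varepsilon_{s,t}(\boldsymbol{\theta}^*))}{f(\varepsilon_{s,t}(\boldsymbol{\theta}^*))}\Bigr|\,\bigl\|\nabla_{\boldsymbol{\theta}}\varepsilon_{s,t}(\boldsymbol{\theta}^*)\bigr\|\,\|\boldsymbol{\theta}_1-\boldsymbol{\theta}_2\|.
\end{equation*}
Assumption \ref{error-dominance} bounds the first factor by $a_1+a_2|\varepsilon_{s,t}(\boldsymbol{\theta}^*)|^{c_1}$. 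The gradient is a finite sum of terms of the forms $w_{sk}y_{k,t-i}$, $x_{s,t}$, $F(x_{s,t}'\boldsymbol{\gamma}_i)$ and $\lambda_i F'(x_{s,t}'\boldsymbol{\gamma}_i)x_{s,t}$; the sigmoid and its derivative are bounded, the parameters $\phi_i,\lambda_i,\boldsymbol{\gamma}_i$ are bounded by compactness of $\boldsymbol{\Theta}$, and $W_n$ has uniformly bounded row and column sums (Assumption \ref{ub-weight-matrix}). Together with Assumptions \ref{x-boundness}, \ref{x-moment} and H\"older's inequality applied to the product, this produces a dominating function $M_{s,t}$ with $\sup_s\mathbb{E}\,M_{s,t}<\infty$. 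Covering $\boldsymbol{\Theta}$ by finitely many $\delta$-balls, applying pointwise convergence at each center, and bounding the oscillation inside each ball by $\delta\cdot\frac{1}{nT}\sum_{s,t}M_{s,t}\xrightarrow{p}\delta\cdot\mathbb{E}\,M_{s,t}$ then completes the argument on sending $\delta\downarrow 0$.

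The hardest step will be the moment bookkeeping: producing a \emph{single} integrable envelope $M_{s,t}$ that simultaneously controls the potentially heavy-tailed score $|f'/f|$ (growing like $|\varepsilon|^{c_1}$) and the gradient $\nabla_{\boldsymbol{\theta}}\varepsilon_{s,t}(\boldsymbol{\theta})$, whose dependence on past $Y_{t-i}$ is, via the causal expansion (\ref{model-sum}), an infinite linear functional of past covariates and errors. This is precisely where Assumption \ref{x-moment} on $\mathbb{E}\,|x_{s,t}|^{c_1}$ becomes indispensable; without the matching higher moment, the product in the mean-value bound would fail to admit a uniformly integrable majorant and the stochastic equicontinuity argument would break down. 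The remaining steps (edge correction, ergodic theorem, covering argument) are standard once this envelope is in hand.
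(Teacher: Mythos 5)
Your proposal follows essentially the same route as the paper's proof: reduction to the interior set $\mathcal{S}$ via the edge-effect correction, pointwise convergence from the causal MA representation plus the ergodic theorem after establishing $\mathbb{E}\,|\ln f(\varepsilon_{s,t}(\boldsymbol{\theta}))|<\infty$ from the dominance and moment assumptions, and stochastic equicontinuity via a mean-value bound on $\tfrac{f'}{f}\cdot\nabla_{\boldsymbol{\theta}}\varepsilon_{s,t}$ dominated uniformly over the compact $\boldsymbol{\Theta}$. The only cosmetic difference is that the paper obtains integrability of $\ln f(\varepsilon_{s,t}(\boldsymbol{\theta}))$ by a mean-value expansion around $\boldsymbol{\theta}_0$ (using $\mathbb{E}\,|\ln f(\varepsilon_{s,t}(\boldsymbol{\theta}_0))|<\infty$ from Assumption \ref{error-dist}) rather than by dominating $|\ln f|$ directly, and it leaves the final covering argument implicit where you spell it out.
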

\begin{proof}
	As illustrated in equation (\ref{mod-average}), in a lattice with size $n_1\times n_2$,
	{\small\setlength{\abovedisplayskip}{3pt}
		\setlength{\belowdisplayskip}{\abovedisplayskip}
		\setlength{\abovedisplayshortskip}{0pt}
		\setlength{\belowdisplayshortskip}{3pt}
		\begin{gather*}
		\sup\limits_{\boldsymbol{\theta}\in\boldsymbol{\Theta}}\left|\frac{1}{nT}\sum_{t=1}^{T}\sum_{s=1}^{n}\ln f(\varepsilon_{s,t}(\boldsymbol{\theta}))-\frac{1}{nT}\sum_{t=1}^{T}\sum_{s\in\mathcal{S}}\ln f(\varepsilon_{s,t}(\boldsymbol{\theta}))\right|\xrightarrow{a.s.}0\text{ as }n_1,n_2, T\rightarrow\infty
		\end{gather*}} %
Therefore, to prove (\ref{uni-cvg-eqn}) is equivalent to show that
{\small\setlength{\abovedisplayskip}{3pt}
	\setlength{\belowdisplayskip}{\abovedisplayskip}
	\setlength{\abovedisplayshortskip}{0pt}
	\setlength{\belowdisplayshortskip}{3pt}
	\begin{gather}\label{uni-cvg-eqn-2}
	\sup\limits_{\boldsymbol{\theta}\in\boldsymbol{\Theta}}\left|\frac{1}{nT}\sum_{t=1}^{T}\left(\sum_{s\in\mathcal{S}}\ln f(\varepsilon_{s,t}(\boldsymbol{\theta}))-\mathbb{E}\, \frac{1}{n}\sum_{s\in\mathcal{S}}\ln f(\varepsilon_{s,t}(\boldsymbol{\theta}))\right)\right|\xrightarrow{p} 0\text{ as }n_1,n_2,T\rightarrow\infty
	\end{gather}} %
where $\mathcal{S}$ denotes the interior units mentioned before. Since the interior units have the same neighboring structure, the space process for them is stationary when $n_1,n_2$ go to infinity.
We first show $\left|\frac{1}{nT}\sum_{t=1}^{T}\left(\sum_{s\in\mathcal{S}}\ln f(\varepsilon_s(\boldsymbol{\theta}))-\mathbb{E}\, \frac{1}{n}\sum_{s\in\mathcal{S}}\ln f(\varepsilon_s(\boldsymbol{\theta}))\right)\right|\xrightarrow{p} 0$ for fixed $\boldsymbol{\theta}$. 

To prove this, we want to show that $\mathbb{E}\, |\ln f(\varepsilon_{s,t}(\boldsymbol{\theta}))|<\infty$. Expanding $\ln f(\varepsilon_{s,t}(\boldsymbol{\theta}))$ around $\boldsymbol{\theta}_0$ with respect to $\boldsymbol{\theta}$,
	{\small\setlength{\abovedisplayskip}{3pt}
	\setlength{\belowdisplayskip}{\abovedisplayskip}
	\setlength{\abovedisplayshortskip}{0pt}
	\setlength{\belowdisplayshortskip}{3pt}
	\begin{align*}
	\ln f(\varepsilon_{s,t}(\boldsymbol{\theta})) &= \ln f(\varepsilon_{s,t}(\boldsymbol{\theta}_0))+ \left|\frac{f^{\prime}(\varepsilon_{s,t}(\tilde{\boldsymbol{\theta}}_{n,T}))}{f(\varepsilon_{s,t}(\tilde{\boldsymbol{\theta}}_n))}\frac{\partial \varepsilon_{s,t}(\tilde{\boldsymbol{\theta}}_{n,T})}{\partial \boldsymbol{\theta}^{\prime}}\right|(\boldsymbol{\theta}-\boldsymbol{\theta}_0)\\
	\mathbb{E}\,|\ln f(\varepsilon_{s,t}(\boldsymbol{\theta}))| &\leq \mathbb{E}\,|\ln f(\varepsilon_{s,t}(\boldsymbol{\theta}_0))|+\mathbb{E}\,\left|\frac{f^{\prime}(\varepsilon_{s,t}(\tilde{\boldsymbol{\theta}}_{n,T}))}{f(\varepsilon_{s,t}(\tilde{\boldsymbol{\theta}}_{n,T}))}\frac{\partial \varepsilon_{s,t}(\tilde{\boldsymbol{\theta}}_{n,T})}{\partial \boldsymbol{\theta}^{\prime}}\right||\boldsymbol{\theta}-\boldsymbol{\theta}_0|
	\end{align*}}%
where $\tilde{\boldsymbol{\theta}}_{n,T}$ is between $\boldsymbol{\theta}$ and $\boldsymbol{\theta}_0$. Under the true parameter values, $\varepsilon_{s,t}(\boldsymbol{\theta}_0)$ (denoted as $\varepsilon_{s,t}$ or $\boldsymbol{\varepsilon}_t$ as its vector form in the following) is independent and identically distributed. From Assumption \ref{error-dist}, $\mathbb{E}\, |\ln f(\varepsilon_{s,t})|<\infty$. For $\mathbb{E}\,\left|\frac{f^{\prime}(\varepsilon_{s,t}(\tilde{\boldsymbol{\theta}}_{n,T}))}{f(\varepsilon_{s,t}(\tilde{\boldsymbol{\theta}}_{n,T}))}\frac{\partial \varepsilon_{s,t}(\tilde{\boldsymbol{\theta}}_{n,T})}{\partial \boldsymbol{\theta}^{\prime}}\right|$, $\left|\frac{\partial \varepsilon_{s,t}(\tilde{\boldsymbol{\theta}}_{n,T})}{\partial \boldsymbol{\theta}}\right|$ can be expressed as
{\small\setlength{\abovedisplayskip}{3pt}
	\setlength{\belowdisplayskip}{\abovedisplayskip}
	\setlength{\abovedisplayshortskip}{0pt}
	\setlength{\belowdisplayshortskip}{3pt}
	\begin{align}\label{error-theta}\nonumber
	\left|\frac{\partial \varepsilon_{s,t}(\tilde{\boldsymbol{\theta}}_{n,T})}{\partial \beta}\right| &= \left|x_{s,t}\right|\\
	\left|\frac{\partial \varepsilon_{s,t}(\tilde{\boldsymbol{\theta}}_{n,T})}{\partial \lambda}\right| &= \left|\boldsymbol{F}(x_{s,t}^{\prime}\tilde{\boldsymbol{\gamma}}_{n,T})^{\prime}\right|\leq \boldsymbol{1}_h\\\nonumber
	\left|\frac{\partial \varepsilon_{s,t}(\tilde{\boldsymbol{\theta}}_{n,T})}{\partial \boldsymbol{\gamma}_i}\right| &= \left|\tilde{\lambda}_i\frac{\partial F(x_{s,t}^{\prime}\tilde{\boldsymbol{\gamma}}_i)}{\partial x_{s,t}^{\prime}\boldsymbol{\gamma}_i}x_{s,t}\right|=\left|\tilde{\lambda}_i F(x_{s,t}^{\prime}\tilde{\boldsymbol{\gamma}}_i)(1-F(x_{s,t}^{\prime}\tilde{\boldsymbol{\gamma}}_i))x_{s,t}\right|\\\nonumber
	&\leq \max_{\lambda_{i}\in\boldsymbol{\Theta}}\frac{|\lambda_{i}x_{s,t}|}{4}\\\nonumber
	\left|\frac{\partial \varepsilon_{s,t}(\tilde{\boldsymbol{\theta}}_{n,T})}{\partial \phi_i}\right| &=  \left|\sum_{k=1}^{n}w_{sk}y_{k,t-i}\right| =\left|\Big[W_nA^{-1}(L)(\boldsymbol{g}(X_{t-i},\boldsymbol{\theta}_{0})+\boldsymbol{\varepsilon}_{t-i}(\boldsymbol{\theta}_{0})\Big]_s\right|
	\end{align}}%
where $A^{-1}(L)(\boldsymbol{g}(X_{t-i},\boldsymbol{\theta}_{0})+\boldsymbol{\varepsilon}_{t-i}(\boldsymbol{\theta}_{0})=
 	\sum_{j=0}^{\infty}\Psi_j A_0^{-1}(\boldsymbol{g}(X_{t-i-j},\boldsymbol{\theta}_{0})+\boldsymbol{\varepsilon}_{t-i-j}(\boldsymbol{\theta}_{0}))$. 
Function $g(x_{s,t},\boldsymbol{\theta})=x_{s,t}^{\prime}\beta+\boldsymbol{F}(x_{s,t}^{\prime}\gamma)\lambda$. 
Consider $\boldsymbol{\varepsilon}_{t}(\tilde{\boldsymbol{\theta}}_{n,T})$,
 {\small\setlength{\abovedisplayskip}{3pt}
	\setlength{\belowdisplayskip}{\abovedisplayskip}
	\setlength{\abovedisplayshortskip}{0pt}
	\setlength{\belowdisplayshortskip}{3pt}
	\begin{align*}
	|\boldsymbol{\varepsilon}_{t}(\tilde{\boldsymbol{\theta}}_{n,T})|=&\left|(I_n-\tilde{\phi}_0W_n)Y_t-\sum_{i=1}^{p}\tilde{\phi}_{i}W_nY_{t-i}-\boldsymbol{g}(X_{t},\tilde{\boldsymbol{\theta}}_{n,T})\right|\\
	=&\left|\boldsymbol{\varepsilon}_{t}(\boldsymbol{\theta}_{0})+\sum_{i=0}^{p}(\phi_{i0}-\tilde{\phi}_{i})W_nY_{t-i}+(\boldsymbol{g}(X_{t},\boldsymbol{\theta}_{0})-\boldsymbol{g}(X_{t},\tilde{\boldsymbol{\theta}}_{n,T}))\right|\\
	=&\left|\boldsymbol{\varepsilon}_{t}+\sum_{i=0}^{p}(\phi_{i0}-\tilde{\phi}_{i})W_n\sum_{j=0}^{\infty}\Psi_j A_0^{-1}\boldsymbol{\varepsilon}_{t-i-j}+\sum_{i=0}^{p}(\phi_{i0}-\tilde{\phi}_{i})W_n\sum_{j=0}^{\infty}\Psi_j A_0^{-1}X_{t-i-j}\beta_{0}\right.\\
	&+\left.\sum_{i=0}^{p}(\phi_{i0}-\tilde{\phi}_{i})W_n\sum_{j=0}^{\infty}\Psi_j A_0^{-1}\boldsymbol{F}(X_{t-i-j}\boldsymbol{\gamma}^{\prime}_0)\lambda_0+X_t(\beta_0-\tilde{\beta})+\boldsymbol{F}(X_t\boldsymbol{\gamma}_0^{\prime})\lambda_0-\boldsymbol{F}(X_t\tilde{\boldsymbol{\gamma}}^{\prime})\tilde{\lambda}\right|\\
	<& \left|\boldsymbol{\varepsilon}_{t}+\sum_{i=0}^{p}(\phi_{i0}-\tilde{\phi}_{i})W_n\sum_{j=0}^{\infty}\Psi_j A_0^{-1}\boldsymbol{\varepsilon}_{t-i-j}+\sum_{i=0}^{p}(\phi_{i0}-\tilde{\phi}_{i})W_n\sum_{j=0}^{\infty}\Psi_j A_0^{-1}X_{t-i-j}\beta_{0}\right.\\
	&+\left.\sum_{i=0}^{p}(\phi_{i0}-\tilde{\phi}_{i})W_n\sum_{j=0}^{\infty}\Psi_j A_0^{-1}\boldsymbol{F}(X_{t-i-j}\boldsymbol{\gamma}^{\prime}_0)\lambda_0+X_t(\beta_0-\tilde{\beta})\right|+ ||\lambda_0-\tilde{\lambda}||\cdot\boldsymbol{1}_n
	\end{align*}}%
Denote $P(x^{c})$ is a polynomial about $x$ with highest order $c$. Since we have assumed that $A^{-1}(L)$ existed and the expansion $\sum_{j=0}^{\infty}\Psi_j A_0^{-1}$ is absolutely summable so $W_n\sum_{j=0}^{\infty}\Psi_j A_0^{-1}$ is finite. By Assumption \ref{error-dominance}-\ref{x-moment}, $\left|\frac{f^{\prime}(\varepsilon_{s,t}(\tilde{\boldsymbol{\theta}}_{n,T}))}{f(\varepsilon_{s,t}(\tilde{\boldsymbol{\theta}}_{n,T}))}\right|< a_1+a_2|\varepsilon_{s,t}(\tilde{\boldsymbol{\theta}}_{n,T})|^{c_1}$ and $\mathbb{E}\left|\frac{f^{\prime}(\varepsilon_{s,t})}{f(\varepsilon_{s,t})}\right|, \mathbb{E}\left|\frac{f^{\prime}(\varepsilon_{s,t})}{f(\varepsilon_{s,t})}\right|^2$ are dominated by $a_1+a_2|\varepsilon_{s,t}|^{c_1}$, $\mathbb{E}|\varepsilon_{s,t}|^{c_1}<\infty, \,\mathbb{E}|x_{s,t}|^{c_1}<\infty$. Let $c^{\ast}=\max(1,c_1)$, then,
{\small\setlength{\abovedisplayskip}{3pt}
	\setlength{\belowdisplayskip}{\abovedisplayskip}
	\setlength{\abovedisplayshortskip}{0pt}
	\setlength{\belowdisplayshortskip}{3pt}
	\begin{align*}
	\mathbb{E}\left|\frac{f^{\prime}(\varepsilon_{s,t}(\tilde{\boldsymbol{\theta}}_{n,T}))}{f(\varepsilon_{s,t}(\tilde{\boldsymbol{\theta}}_{n,T}))}\right|^2< 	P(\mathbb{E}\,\left|\varepsilon_{s,t}\right|^{c^{\ast}})+ P(\mathbb{E}\,\left|x_{s,t}\right|^{c^{\ast}})+Constant<\infty
	\end{align*}}%
So also $\mathbb{E}\left|\frac{f^{\prime}(\varepsilon_{s,t}(\tilde{\boldsymbol{\theta}}_{n,T}))}{f(\varepsilon_{s,t}(\tilde{\boldsymbol{\theta}}_{n,T}))}\right|<\infty$. With Cauchy–Schwarz inequality \cite{steele2004cauchy} and the finite second moment of $x_{s,t}$, we can have,
{\small\setlength{\abovedisplayskip}{3pt}
	\setlength{\belowdisplayskip}{\abovedisplayskip}
	\setlength{\abovedisplayshortskip}{0pt}
	\setlength{\belowdisplayshortskip}{3pt}
	\begin{align}\label{log-derivative-bound}
	\mathbb{E}\left|\frac{f^{\prime}(\varepsilon_{s,t}(\tilde{\boldsymbol{\theta}}_{n,T}))}{f(\varepsilon_{s,t}(\tilde{\boldsymbol{\theta}}_{n,T}))}\frac{\partial \varepsilon_{s,t}(\tilde{\boldsymbol{\theta}}_{n,T})}{\partial \beta}\right| &= \mathbb{E}\left|\frac{f^{\prime}(\varepsilon_{s,t}(\tilde{\boldsymbol{\theta}}_{n,T}))}{f(\varepsilon_{s,t}(\tilde{\boldsymbol{\theta}}_{n,T}))}x_{s,t}\right|<\left(\mathbb{E}\left|\frac{f^{\prime}(\varepsilon_{s}(\tilde{\boldsymbol{\theta}}_{n,T}))}{f(\varepsilon_{s}(\tilde{\boldsymbol{\theta}}_{n,T}))}\right|^2\mathbb{E}\left|x_{s,t}\right|^2\right)^{1/2}<\infty\\
	\mathbb{E}\left|\frac{f^{\prime}(\varepsilon_{s,t}(\tilde{\boldsymbol{\theta}}_{n,T}))}{f(\varepsilon_{s,t}(\tilde{\boldsymbol{\theta}}_{n,T}))}\frac{\partial \varepsilon_{s,t}(\tilde{\boldsymbol{\theta}}_{n,T})}{\partial \lambda}\right| &= \mathbb{E}\left|\frac{f^{\prime}(\varepsilon_{s,t}(\tilde{\boldsymbol{\theta}}_{n,T}))}{f(\varepsilon_{s,t}(\tilde{\boldsymbol{\theta}}_{n,T}))}\boldsymbol{F}(x_{s,t}^{\prime}\tilde{\boldsymbol{\gamma}})^{\prime}\right|\leq \mathbb{E}\left|\frac{f^{\prime}(\varepsilon_{s,t}(\tilde{\boldsymbol{\theta}}_{n,T}))}{f(\varepsilon_{s,t}(\tilde{\boldsymbol{\theta}}_{n,T}))}\boldsymbol{1}_h\right|<\infty\\
	\mathbb{E}\left|\frac{f^{\prime}(\varepsilon_{s,t}(\tilde{\boldsymbol{\theta}}_{n,T}))}{f(\varepsilon_{s,t}(\tilde{\boldsymbol{\theta}}_{n,T}))}\frac{\partial \varepsilon_{s,t}(\tilde{\boldsymbol{\theta}}_{n,T})}{\partial \boldsymbol{\gamma}_i}\right| &\leq \mathbb{E}\left|\frac{f^{\prime}(\varepsilon_{s,t}(\tilde{\boldsymbol{\theta}}_{n,T}))}{f(\varepsilon_{s,t}(\tilde{\boldsymbol{\theta}}_{n,T}))}\tilde{\lambda}_ix_{s,t}\right|< \infty\\
	\mathbb{E}\left|\frac{f^{\prime}(\varepsilon_{s,t}(\tilde{\boldsymbol{\theta}}_{n,T}))}{f(\varepsilon_{s,t}(\tilde{\boldsymbol{\theta}}_{n,T}))}\frac{\partial \varepsilon_{s,t}(\tilde{\boldsymbol{\theta}}_{n,T})}{\partial \phi_i}\right| &= \mathbb{E}\left|\frac{f^{\prime}(\varepsilon_{s,t}(\tilde{\boldsymbol{\theta}}_{n,T}))}{f(\varepsilon_{s,t}(\tilde{\boldsymbol{\theta}}_{n,T}))}\Big[W_nA^{-1}(L)(\boldsymbol{g}(X_{t-i},\boldsymbol{\theta}_{0})+\boldsymbol{\varepsilon}_{t-i}(\boldsymbol{\theta}_{0})\Big]_s\right|\\\label{log-derivative-bound-phi-e}
	&< \mathbb{E}\left|\frac{f^{\prime}(\varepsilon_{s,t}(\tilde{\boldsymbol{\theta}}_{n,T}))}{f(\varepsilon_{s,t}(\tilde{\boldsymbol{\theta}}_{n,T}))}\Big[W_nA^{-1}(L)\boldsymbol{\varepsilon}_{t-i}(\boldsymbol{\theta}_{0})\Big]_s\right|\\\label{log-derivative-bound-phi-x}
	&+k_0\mathbb{E}\left|\Big[W_nA^{-1}(L)\boldsymbol{g}(X_{t-i},\boldsymbol{\theta}_{0})\Big]_s\right|
	\, i=0,\ldots,p
	\end{align}}%
Because $W_nA^{-1}(L)$ is well defined and $X_t$ is stationary with finite second moment, so component (\ref{log-derivative-bound-phi-x}) is finite.
(\ref{log-derivative-bound-phi-e}) is dominated by $P(\mathbb{E}\,|\varepsilon_{s,t}|^{c^{\ast}+1})$ so with the dominance assumption, (\ref{log-derivative-bound-phi-e}) is finite.
Hence, with (\ref{log-derivative-bound})-(\ref{log-derivative-bound-phi-x}) finite, $\mathbb{E}\,|\ln f(\varepsilon_{s,t}(\boldsymbol{\theta}_0))|<\infty$, we can conclude that
$\mathbb{E}\,|\ln f(\varepsilon_{s,t}(\boldsymbol{\theta}))| < \infty$. Then by ergodic theorem \cite{birkhoff1931proof},
{\small\setlength{\abovedisplayskip}{3pt}
	\setlength{\belowdisplayskip}{\abovedisplayskip}
	\setlength{\abovedisplayshortskip}{0pt}
	\setlength{\belowdisplayshortskip}{3pt}
	\begin{gather*}
	\left|\frac{1}{nT}\sum_{t=1}^{T}\left(\sum_{s\in\mathcal{S}}\ln f(\varepsilon_{s,t}(\boldsymbol{\theta}))-\mathbb{E}\,\frac{1}{n}\sum_{s\in\mathcal{S}}\ln f(\varepsilon_{s,t}(\boldsymbol{\theta}))\right)\right | \xrightarrow{p} 0,\quad n_1,n_2,T\rightarrow\infty
	\end{gather*}}%

To complete the proof of uniform convergence,  we also need to show $\frac{1}{nT}\sum_{t=1}^{T}\sum_{s\in\mathcal{S}}\ln f(\varepsilon_{s,t}(\boldsymbol{\theta}))$ is equicontinuous for $\boldsymbol{\theta}\in\boldsymbol{\Theta}$, i.e., for all $\boldsymbol{\theta}_1, \boldsymbol{\theta}_2\in \boldsymbol{\Theta}$,
{\small\setlength{\abovedisplayskip}{3pt}
	\setlength{\belowdisplayskip}{\abovedisplayskip}
	\setlength{\abovedisplayshortskip}{0pt}
	\setlength{\belowdisplayshortskip}{3pt}
	\begin{align}\label{equi}
	\frac{1}{nT}\left|\sum_{t=1}^{T}\sum_{s\in\mathcal{S}}\Big(\ln f(\varepsilon_{s,t}(\boldsymbol{\theta}_1))-\ln f(\varepsilon_{s,t}(\boldsymbol{\theta}_2))\Big)\right| \leq ||\boldsymbol{\theta}_1-\boldsymbol{\theta}_2||O_p(1)
	\end{align}}%
Applying the mean value theorem to the left side in (\ref{equi}):
{\small\setlength{\abovedisplayskip}{3pt}
	\setlength{\belowdisplayskip}{\abovedisplayskip}
	\setlength{\abovedisplayshortskip}{0pt}
	\setlength{\belowdisplayshortskip}{3pt}
	\begin{align*}
	\frac{1}{nT}\left|\sum_{t=1}^{T}\sum_{s\in\mathcal{S}}\Big(\ln f(\varepsilon_{s,t}(\boldsymbol{\theta}_1))-\ln f(\varepsilon_{s,t}(\boldsymbol{\theta}_2))\Big)\right| &\leq \frac{1}{nT}\left|\sum_{t=1}^{T}\sum_{s\in\mathcal{S}}\frac{\partial \ln f(\varepsilon_{s,t}(\tilde{\boldsymbol{\theta}}_{n,T}))}{\partial \boldsymbol{\theta}^{\prime}}\right|||\boldsymbol{\theta}_1-\boldsymbol{\theta}_2||\\
	&=\frac{1}{nT}\left|\sum_{t=1}^{T}\sum_{s\in\mathcal{S}}\frac{ f^{\prime}(\varepsilon_{s,t}(\tilde{\boldsymbol{\theta}}_{n,T}))}{ f(\varepsilon_{s,t}(\tilde{\boldsymbol{\theta}}_{n,T}))}\frac{\partial \varepsilon_{s,t}(\tilde{\boldsymbol{\theta}}_{n,T})}{\partial \boldsymbol{\theta}^{\prime}}\right| ||\boldsymbol{\theta}_1-\boldsymbol{\theta}_2||
	\end{align*}}%
where $\tilde{\boldsymbol{\theta}}_{n,T}$ is some value between $\boldsymbol{\theta}_1$ and $\boldsymbol{\theta}_2$. 
Since $\boldsymbol{\theta}$ is in a compact set $\boldsymbol{\Theta}$, we show in (\ref{res-expansion}) that, for all $s,t$, $\varepsilon_{s,t}(\boldsymbol{\theta})$ is bounded by some function of $Z_t$ not depending on $\boldsymbol{\theta}$.
{\small\setlength{\abovedisplayskip}{3pt}
	\setlength{\belowdisplayskip}{\abovedisplayskip}
	\setlength{\abovedisplayshortskip}{0pt}
	\setlength{\belowdisplayshortskip}{3pt}
	\begin{align}\label{res-expansion}\nonumber
	|\boldsymbol{\varepsilon}_t(\boldsymbol{\theta})|&=\left|Y_t-\phi_0 W_nY_t-\sum_{k=1}^{p}\phi_kW_nY_{t-k}-X_t\beta-\boldsymbol{F}(X_t\boldsymbol{\gamma}^{\prime})\lambda\right|\\
	&\leq \left|(I_n-\phi_0 W_n)Y_t\right|+|\sum_{k=1}^{p}\phi_kW_nY_{t-k}|+\left|X_n\beta\right| +\left|\boldsymbol{F}(X_n\boldsymbol{\gamma}^{\prime})\lambda\right|\\\nonumber
	&\leq (I_n+\max_{\phi_0\in\boldsymbol{\Theta}}|\phi_0|W_n)|Y_t|+\sum_{k=1}^{p}\max_{\phi_i\in\boldsymbol{\Theta}}W_n|\phi_iY_{t-k}|+|X_n|\max_{\beta\in\boldsymbol{\Theta}}|\beta|+\max_{\lambda\in\boldsymbol{\Theta}}||\lambda||\boldsymbol{1}_n
	\end{align}}%
Similarly, referring to (\ref{error-theta}), it is easy to show that $\left|\frac{\partial\varepsilon_{s,t}(\boldsymbol{\theta})}{\partial \boldsymbol{\theta}}\right|$ is bounded by some function about $Y_t$ and $X_t$.
Therefore, due to the dominance of $\left|\frac{f^{\prime}(s)}{f(s)}\right|$ (see Assumption \ref{error-dominance}) and stationarity of $X_t,Y_t$, for $\tilde{\boldsymbol{\theta}}_{n,T}$ between $\boldsymbol{\theta}_1$ and $\boldsymbol{\theta}_2$, there exists a constant $M$ such that
{\small\setlength{\abovedisplayskip}{3pt}
	\setlength{\belowdisplayskip}{\abovedisplayskip}
	\setlength{\abovedisplayshortskip}{0pt}
	\setlength{\belowdisplayshortskip}{3pt}
	\begin{align}\label{loglikelihood-theta-expectation}
	\frac{1}{nT}\left|\sum_{t=1}^{T}\sum_{s\in\mathcal{S}}\frac{ f^{\prime}(\varepsilon_{s,t}(\tilde{\boldsymbol{\theta}}_{n,T}))}{ f(\varepsilon_{s,t}(\tilde{\boldsymbol{\theta}}_{n,T}))}\frac{\partial \varepsilon_{s,t}(\tilde{\boldsymbol{\theta}}_{n,T})}{\partial \boldsymbol{\theta}^{\prime}}\right|\leq M\quad \text{for }n_1,n_2, T\rightarrow\infty
	\end{align}}%
Hence, for $\boldsymbol{\theta}_1,\boldsymbol{\theta}_2 \in \boldsymbol{\Theta}$
{\small\setlength{\abovedisplayskip}{3pt}
	\setlength{\belowdisplayskip}{\abovedisplayskip}
	\setlength{\abovedisplayshortskip}{0pt}
	\setlength{\belowdisplayshortskip}{3pt}
	\begin{align*}
	\frac{1}{nT}\left|\sum_{t=1}^{T}\sum_{s\in\mathcal{S}}\Big(\ln f(\varepsilon_{s,t}(\boldsymbol{\theta}_1))-\ln f(\varepsilon_{s,t}(\boldsymbol{\theta}_2))\Big)\right| = ||\boldsymbol{\theta}_1-\boldsymbol{\theta}_2|| O_p(1)
	\end{align*}}%
So $\frac{1}{nT}\sum_{t=1}^{T}\sum_{s\in\mathcal{S}}\ln f(\varepsilon_{s,t}(\boldsymbol{\theta}))$ is equicontinuous for $\boldsymbol{\theta}\in\boldsymbol{\Theta}$.
With the pointwise convergence and equicontinuity, we can conclude the uniform convergence in (\ref{uni-cvg-eqn-2}) and furthermore (\ref{uni-cvg-eqn}) follows. 
\end{proof}

Similar to Chapter 1, we now give a formal statement of the consistency results.
\begin{thm}
	Given Assumptions \ref{compact-space}-\ref{x-moment}, $\hat{\boldsymbol{\theta}}_{n,T}\xrightarrow{p}\boldsymbol{\theta}_0$ as $n,T\rightarrow\infty$.
\end{thm}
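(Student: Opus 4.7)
The plan is to apply the classical M-estimator consistency recipe: uniform convergence of the sample objective combined with a unique maximizer of the limit objective implies convergence of the argmax. Lemma \ref{unique-max} supplies the uniqueness and Lemma \ref{uni-cvg} supplies the uniform convergence for the stochastic piece of the log-likelihood; the task is to stitch these together and control the deterministic $\ln|A_0|$ term.

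First I would write $\frac{1}{nT}\mathcal{L}_{n,T}(\boldsymbol{\theta}) = \frac{1}{n}\ln|A_0| + \frac{1}{nT}\sum_{t,s}\ln f(\varepsilon_{s,t}(\boldsymbol{\theta}))$ and note, via the eigendecomposition (\ref{matrix-decomposition}), that $\frac{1}{n}\ln|A_0| = \frac{1}{n}\sum_{i=1}^{n}\ln(1-\phi_0\tau_i)$ is deterministic, uniformly bounded, and continuous in $\phi_0$ on $\boldsymbol{\Theta}$ under Assumptions \ref{phi0-range}--\ref{ub-weight-matrix} (since $\sup_{\phi_0,i}|\phi_0\tau_i|<1$). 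Because this term equals its own expectation, Lemma \ref{uni-cvg} upgrades without change to the full objective:
\begin{gather*}
\sup_{\boldsymbol{\theta}\in\boldsymbol{\Theta}}\left|\tfrac{1}{nT}\mathcal{L}_{n,T}(\boldsymbol{\theta})-\mathbb{E}\,\tfrac{1}{nT}\mathcal{L}_{n,T}(\boldsymbol{\theta})\right|\xrightarrow{p}0.
\end{gather*}

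Writing $Q_{n,T}(\boldsymbol{\theta}) = \frac{1}{nT}\mathcal{L}_{n,T}(\boldsymbol{\theta})$ and $\bar Q_{n,T} = \mathbb{E}Q_{n,T}$, the definition of $\hat{\boldsymbol{\theta}}_{n,T}$ yields the standard sandwich
$\bar Q_{n,T}(\hat{\boldsymbol{\theta}}_{n,T}) \geq Q_{n,T}(\hat{\boldsymbol{\theta}}_{n,T}) - o_p(1) \geq Q_{n,T}(\boldsymbol{\theta}_0) - o_p(1) \geq \bar Q_{n,T}(\boldsymbol{\theta}_0) - o_p(1)$,
so $\bar Q_{n,T}(\hat{\boldsymbol{\theta}}_{n,T}) - \bar Q_{n,T}(\boldsymbol{\theta}_0)\xrightarrow{p}0$. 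To promote this to $\hat{\boldsymbol{\theta}}_{n,T}\xrightarrow{p}\boldsymbol{\theta}_0$, I argue by contradiction: if the claim fails there exist $\delta,\eta_0>0$ and a subsequence along which $\Pr(\|\hat{\boldsymbol{\theta}}_{n,T}-\boldsymbol{\theta}_0\|\geq\delta)\geq\eta_0$; by compactness of $\boldsymbol{\Theta}$ (Assumption \ref{compact-space}), a further subsequence lets $\hat{\boldsymbol{\theta}}_{n,T}\to\boldsymbol{\theta}^\ast\neq\boldsymbol{\theta}_0$ on a set of positive probability. Continuity of $f$ (Assumption \ref{continuous}) and of the sigmoid makes $\bar Q_{n,T}$ continuous in $\boldsymbol{\theta}$, and Lemma \ref{unique-max} gives the strict inequality $\bar Q_{n,T}(\boldsymbol{\theta}^\ast) < \bar Q_{n,T}(\boldsymbol{\theta}_0)$, which contradicts the previous line.

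The main obstacle I anticipate is promoting that pointwise strict inequality into a uniform-in-$(n,T)$ separation
$\inf_{n,T\text{ large}}\bigl\{\bar Q_{n,T}(\boldsymbol{\theta}_0)-\sup_{\|\boldsymbol{\theta}-\boldsymbol{\theta}_0\|\geq\delta}\bar Q_{n,T}(\boldsymbol{\theta})\bigr\}>0$,
because $\bar Q_{n,T}$ itself drifts with $n$ through $\frac{1}{n}\ln|A_0|$ and through the changing eigenvalues of $W_n$. I would address this by showing that $\bar Q_{n,T}$ converges uniformly on $\boldsymbol{\Theta}$ to a limit $Q^{\infty}$: the $\frac{1}{n}\ln|A_0|$ piece converges to $\int\ln(1-\phi_0 t)\,d\mu(t)$ against the limiting spectral distribution $\mu$ of the row-standardized queen-contiguity matrix on the growing $n_1\times n_2$ lattice, while the interior-lattice expansion in (\ref{mod-average}) together with stationarity/ergodicity of $\{X_t,\boldsymbol{\varepsilon}_t\}$ (Assumptions \ref{x-boundness}, \ref{error-dist}) pins down the rest. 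Because $Q^{\infty}$ inherits the unique-maximum property at $\boldsymbol{\theta}_0$ from Lemma \ref{unique-max} by continuity, the separation and the contradiction above both go through cleanly.
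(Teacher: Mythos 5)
Your proposal is correct and follows essentially the same route as the paper: uniform convergence of $\frac{1}{nT}\mathcal{L}_{n,T}(\boldsymbol{\theta})$ to its expectation (Lemma \ref{uni-cvg}, with the deterministic $\frac{1}{n}\ln|A_0|$ term controlled separately) combined with the unique maximizer from Lemma \ref{unique-max} and compactness of $\boldsymbol{\Theta}$. The only real difference is presentational: the paper delegates the final argmax step to Theorem 3.5 of White (1994) and proves a Lipschitz bound $C_1|\phi_0^{\dagger}-\phi_0^{\ddagger}|$ for $\frac{1}{n}\ln|A_0|$ via the trace of $W_n(I_n-\phi_0 W_n)^{-1}$, whereas you unpack that step into the explicit sandwich/subsequence argument and correctly note that the deterministic term cancels exactly in the centered difference; the uniform-in-$(n,T)$ separation you flag is precisely the identifiable-uniqueness hypothesis of White's theorem, which the paper leaves implicit rather than verifying.
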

\begin{proof}
	Similar to the proof by Lung-fei Lee \cite{lee2004asymptotic}, we need to show the stochastic equicontinuity of $\frac{1}{n}\ln |A_0|$ to have the uniform convergence of the log likelihood function $\mathcal{L}_{n,T}(\boldsymbol{\theta})$. Applying the mean value theorem, 
	{\small\setlength{\abovedisplayskip}{3pt}
		\setlength{\belowdisplayskip}{\abovedisplayskip}
		\setlength{\abovedisplayshortskip}{0pt}
		\setlength{\belowdisplayshortskip}{3pt}
		\begin{align*}
		\left|\frac{1}{n}(\ln |I_n-\phi_{0}^{\dagger}W_n|-\ln |I_n-\phi_{0}^{\ddagger}W_n|)\right| = \left|(\phi_{0}^{\dagger}-\phi_{0}^{\ddagger})\frac{1}{n}tr(W_n(I_n-\phi^{\ast}_{0_{n,T}}W_n)^{-1})\right|
		\end{align*}} %
	where $\phi_{0_{n,T}}^{\ast}$ is between $\phi_{0}^{\dagger}$ and $\phi_{0}^{\ddagger}$. By Assumption \ref{phi0-range} and \ref{ub-weight-matrix}, $\sup_{\phi_0\in\boldsymbol{\Theta}}|\phi_0|<1$, $W_n$ is bounded in both rows and column sums uniformly and using (\ref{matrix-decomposition}), 
	{\small\setlength{\abovedisplayskip}{3pt}
		\setlength{\belowdisplayskip}{\abovedisplayskip}
		\setlength{\abovedisplayshortskip}{0pt}
		\setlength{\belowdisplayshortskip}{3pt}
		\begin{align*}
		\left|\frac{1}{n}tr(W_n(I_n-\phi^{\ast}_{0_{n,T}}W_n)^{-1})\right| = \left|\frac{1}{n}\sum_{i=1}^{n}\frac{\tau_i}{1-\phi^{\ast}_{0_{n,T}}\tau_i}\right|\leq C_{1}
		\end{align*}} %
	where $C_{1}$ is a constant not depending on $n$. So $		\left|\frac{1}{n}(\ln |I_n-\phi_{0}^{\dagger}W_n|-\ln |I_n-\phi_{0}^{\ddagger}W_n|)\right| \leq C_1|\phi_{0}^{\dagger}-\phi_{0}^{\ddagger}|$ and with Lemma \ref{uni-cvg} we can conclude the uniform convergence that
	{\small\setlength{\abovedisplayskip}{3pt}
		\setlength{\belowdisplayskip}{\abovedisplayskip}
		\setlength{\abovedisplayshortskip}{0pt}
		\setlength{\belowdisplayshortskip}{3pt}
		\begin{align*}
		\sup_{\boldsymbol{\theta}\in\boldsymbol{\Theta}}\left|\frac{1}{nT}\mathcal{L}_{n,T}(\boldsymbol{\theta})-\mathbb{E}\, \frac{1}{nT}\mathcal{L}_{n,T}(\boldsymbol{\theta})\right| \xrightarrow{p} 0.
		\end{align*}} %
	With the assumptions \ref{compact-space}-\ref{error-dominance}, the parameter space $\boldsymbol{\Theta}$ is compact; $\frac{1}{nT}\mathcal{L}_{n,T}(\boldsymbol{\theta})$ is continuous in $\boldsymbol{\theta}\in \boldsymbol{\Theta}$ and is a measurable of $Y_t, X_t, t=1,\ldots,T$ for all $\boldsymbol{\theta}\in \boldsymbol{\Theta}$. $\mathbb{E}\,\frac{1}{nT}\mathcal{L}_{n,T}(\boldsymbol{\theta})$ is continuous on $\boldsymbol{\Theta}$ and by Lemma \ref{unique-max}, $\mathbb{E}\,\frac{1}{nT}\mathcal{L}_{n,T}(\boldsymbol{\theta})$ has a unique maximum at $\boldsymbol{\theta}_0$. Referring to Theorem 3.5 in White \cite{white1994parametric} with the uniform convergence in (\ref{uni-cvg-eqn}), we can conclude that $\hat{\boldsymbol{\theta}}_{n,T}\xrightarrow{p}\boldsymbol{\theta}_0$ as $n,T\rightarrow\infty$.
\end{proof}
\subsection{Asymptotic Distribution}
\begin{ass}\label{Amatrix-limit}
	The limit $A(\boldsymbol{\theta}_0)=-\lim_{n,T\rightarrow\infty}\mathbb{E}\,\frac{1}{nT}\frac{\partial^2\mathcal{L}_{n,T}(\boldsymbol{\theta}_0)}{\partial\boldsymbol{\theta}\partial\boldsymbol{\theta}^{\prime}}$ is nonsingular.
\end{ass}
\begin{ass}\label{Bmatrix-limit}
	The limit $B(\boldsymbol{\theta}_0)=\lim_{n,T\rightarrow\infty}\mathbb{E}\,\frac{1}{nT}\frac{\partial\mathcal{L}_{n,T}(\boldsymbol{\theta}_0)}{\partial\boldsymbol{\theta}}\frac{\partial\mathcal{L}_{n,T}(\boldsymbol{\theta}_0)}{\partial\boldsymbol{\theta}^{\prime}}$ is nonsingular.
\end{ass}
These assumptions are to guarantee the existence of the covariance matrix of the limiting distribution of parameters in a PSTAR(p)-ANN model. 
We now give the asymptotic distribution of the maximum likelihood estimator $\hat{\boldsymbol{\theta}}_{n,T}$. 
\begin{thm}\label{asythm}
	Under Assumptions \ref{compact-space}-\ref{Bmatrix-limit},
	{\small\setlength{\abovedisplayskip}{3pt}
		\setlength{\belowdisplayskip}{\abovedisplayskip}
		\setlength{\abovedisplayshortskip}{0pt}
		\setlength{\belowdisplayshortskip}{3pt}
		\begin{equation}\label{asymptotic}
		\sqrt{nT} (\hat{\boldsymbol{\theta}}_{n,T}-\boldsymbol{\theta}_0) \xrightarrow{d} N (\boldsymbol{0}, \boldsymbol{\Omega}_0)
		\end{equation}} %
	where $\boldsymbol{\Omega}_0= A(\boldsymbol{\theta}_0)^{-1}B(\boldsymbol{\theta}_0)A(\boldsymbol{\theta}_0)^{-1} = A(\boldsymbol{\theta}_0)^{-1}$
\end{thm}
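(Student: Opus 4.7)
The plan is to follow the standard MLE asymptotic-normality framework adapted to the spatial-temporal setting: (i) Taylor expand the score about $\boldsymbol{\theta}_0$; (ii) prove uniform convergence of the Hessian to $-A(\boldsymbol{\theta}_0)$; (iii) prove a CLT for the score at $\boldsymbol{\theta}_0$; and (iv) combine via Slutsky and verify the information-matrix identity $A(\boldsymbol{\theta}_0)=B(\boldsymbol{\theta}_0)$. Since $\hat{\boldsymbol{\theta}}_{n,T}$ is eventually interior by Assumption \ref{compact-space} and the consistency theorem, the first-order condition together with the mean value theorem gives
$$\sqrt{nT}(\hat{\boldsymbol{\theta}}_{n,T}-\boldsymbol{\theta}_0) = -\left[\frac{1}{nT}\frac{\partial^2 \mathcal{L}_{n,T}(\tilde{\boldsymbol{\theta}}_{n,T})}{\partial\boldsymbol{\theta}\,\partial\boldsymbol{\theta}^{\prime}}\right]^{-1}\frac{1}{\sqrt{nT}}\frac{\partial \mathcal{L}_{n,T}(\boldsymbol{\theta}_0)}{\partial\boldsymbol{\theta}},$$
for some $\tilde{\boldsymbol{\theta}}_{n,T}$ between $\hat{\boldsymbol{\theta}}_{n,T}$ and $\boldsymbol{\theta}_0$.

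For the Hessian, I would establish uniform convergence of $\frac{1}{nT}\partial^2\mathcal{L}_{n,T}(\boldsymbol{\theta})/(\partial\boldsymbol{\theta}\partial\boldsymbol{\theta}^{\prime})$ over $\boldsymbol{\Theta}$, by splitting into (a) second derivatives of $T\ln|A_0|$, controlled via the eigen-decomposition (\ref{matrix-decomposition}) and Assumptions \ref{phi0-range}--\ref{ub-weight-matrix} exactly as in the consistency proof, and (b) second derivatives of $\sum\ln f(\varepsilon_{s,t}(\boldsymbol{\theta}))$, which involve $(f^{\prime}/f)^2$, $f^{\prime\prime}/f$, products $(\partial\varepsilon/\partial\boldsymbol{\theta})(\partial\varepsilon/\partial\boldsymbol{\theta}^{\prime})$, and $\partial^2\varepsilon/(\partial\boldsymbol{\theta}\partial\boldsymbol{\theta}^{\prime})$, all dominated by integrable functions via Assumption \ref{error-dominance} together with the moment conditions of Assumptions \ref{x-boundness}, \ref{error-dist}, \ref{x-moment}. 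Pointwise ergodic convergence on the interior set $\mathcal{S}$ combined with a mean-value equicontinuity argument (as in Lemma \ref{uni-cvg}) gives uniform convergence; together with $\tilde{\boldsymbol{\theta}}_{n,T}\xrightarrow{p}\boldsymbol{\theta}_0$, this yields convergence in probability to $-A(\boldsymbol{\theta}_0)$, which is invertible by Assumption \ref{Amatrix-limit}.

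For the score, I would write
$$\frac{1}{\sqrt{nT}}\frac{\partial \mathcal{L}_{n,T}(\boldsymbol{\theta}_0)}{\partial\boldsymbol{\theta}} = \frac{1}{\sqrt{nT}}\sum_{t=1}^{T}\sum_{s=1}^{n}\frac{f^{\prime}(\varepsilon_{s,t})}{f(\varepsilon_{s,t})}\frac{\partial\varepsilon_{s,t}(\boldsymbol{\theta}_0)}{\partial\boldsymbol{\theta}} + \frac{\sqrt{T}}{\sqrt{n}}\frac{\partial \ln|A_0|}{\partial \boldsymbol{\theta}}.$$
Using $\int f^{\prime}(s)\,ds = 0$ and $\int sf^{\prime}(s)\,ds = -1$ from Assumption \ref{error-integral}, the deterministic $\partial\ln|A_0|/\partial\phi_0$ bias exactly cancels the expectation of the $[W_nA_0^{-1}\boldsymbol{\varepsilon}_t]_s$ piece in the stochastic sum, so the centered score has mean zero in every component. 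Using the causal MA representation (\ref{model-sum}), the time-$t$ summands are martingale differences with respect to $\mathcal{F}_t=\sigma(\{\varepsilon_{r,\tau},X_{\tau}:\tau\le t\})$; a martingale CLT in $t$ (Cramér-Wold plus a Lindeberg condition verified from Assumptions \ref{error-dist} and \ref{error-dominance}) then yields convergence to $N(\boldsymbol{0},B(\boldsymbol{\theta}_0))$, with $B(\boldsymbol{\theta}_0)$ nonsingular by Assumption \ref{Bmatrix-limit}. Slutsky gives the sandwich form $N(\boldsymbol{0},A(\boldsymbol{\theta}_0)^{-1}B(\boldsymbol{\theta}_0)A(\boldsymbol{\theta}_0)^{-1})$, and differentiating $\int L_{n,T}(\boldsymbol{\theta})\,dZ = 1$ twice at $\boldsymbol{\theta}_0$ (justified by the dominance conditions) produces the identity $A(\boldsymbol{\theta}_0)=B(\boldsymbol{\theta}_0)$, collapsing the covariance to $\boldsymbol{\Omega}_0=A(\boldsymbol{\theta}_0)^{-1}$.

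The main obstacle will be the CLT for the score. At fixed $t$ the spatial summands are cross-dependent because $\partial\varepsilon_{s,t}/\partial\phi_i=[W_nY_{t-i}]_s$ couples each site to all others through the infinite MA expansion $\sum_j\Psi_jA_0^{-1}$. The remedy is the edge-effect restriction to interior $s\in\mathcal{S}$ combined with the absolute summability of $\sum_j\Psi_jA_0^{-1}$, which yields a stationary spatial field per time slice whose current-$t$ stochastic content enters only through the linear functional $W_nA_0^{-1}\boldsymbol{\varepsilon}_t$ with bounded row/column sums. Careful bookkeeping to identify the limit variance explicitly in terms of the constants $k_0,k_1,k_2,k_3$ and $W_nA_0^{-1}$, and to confirm that the deterministic log-determinant bias annihilates the stochastic expectation so the martingale CLT is applied to a genuinely mean-zero sequence, is the most delicate portion of the argument.
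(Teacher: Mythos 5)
Your proposal follows essentially the same route as the paper: a mean-value expansion of the score, a CLT for $\frac{1}{\sqrt{nT}}\partial\mathcal{L}_{n,T}(\boldsymbol{\theta}_0)/\partial\boldsymbol{\theta}$, convergence of the normalized Hessian at $\tilde{\boldsymbol{\theta}}_{n,T}$ to $-A(\boldsymbol{\theta}_0)$ via dominance and ergodicity on the interior set, Slutsky, and the information-matrix identity $A(\boldsymbol{\theta}_0)=B(\boldsymbol{\theta}_0)$. The only cosmetic differences are that the paper invokes the stationary-ergodic CLT in $t$ rather than a martingale CLT and uses pointwise Hessian convergence at $\boldsymbol{\theta}_0$ plus continuity instead of full uniform convergence; your explicit attention to the cancellation of the $\partial\ln|A_0|/\partial\phi_0$ bias and to the cross-sectional dependence at fixed $t$ is, if anything, more careful than the paper's treatment.
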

\begin{proof}
	Since $\hat{\boldsymbol{\theta}}_{n,T}$ maximizes $\mathcal{L}_{n,T}(\boldsymbol{\theta})$, $\frac{\partial \mathcal{L}_{n,T}(\hat{\boldsymbol{\theta}}_{n,T})}{\partial \boldsymbol{\theta}} = 0$. By the mean value theorem, expand $\frac{\partial \mathcal{L}_{n,T}(\hat{\boldsymbol{\theta}}_{n,T})}{\partial \boldsymbol{\theta}}$ around $\boldsymbol{\theta}_0$ with respect to $\boldsymbol{\theta}$,
{\small\setlength{\abovedisplayskip}{3pt}
	\setlength{\belowdisplayskip}{\abovedisplayskip}
	\setlength{\abovedisplayshortskip}{0pt}
	\setlength{\belowdisplayshortskip}{3pt}
	\begin{gather*}
	\frac{\partial \mathcal{L}_{n,T}(\hat{\boldsymbol{\theta}}_{n,T})}{\partial \boldsymbol{\theta}}  = \frac{\partial \mathcal{L}_{n,T}(\boldsymbol{\theta}_0)}{\partial \boldsymbol{\theta}}+\frac{\partial^2 \mathcal{L}_{n,T}(\tilde{\boldsymbol{\theta}}_{n,T})}{\partial \boldsymbol{\theta}\partial \boldsymbol{\theta}^{\prime}}(\hat{\boldsymbol{\theta}}_{n,T}-\boldsymbol{\theta}_0)\\
	0 = \frac{\partial \mathcal{L}_{n,T}(\boldsymbol{\theta}_0)}{\partial \boldsymbol{\theta}}+\frac{\partial^2 \mathcal{L}_{n,T}(\tilde{\boldsymbol{\theta}}_{n,T})}{\partial \boldsymbol{\theta}\partial \boldsymbol{\theta}^{\prime}}(\hat{\boldsymbol{\theta}}_{n,T}-\boldsymbol{\theta}_0)
	\end{gather*}} %
where $\tilde{\boldsymbol{\theta}}_{n,T}$ is between $\hat{\boldsymbol{\theta}}_{n,T}$ and $\boldsymbol{\theta}_0$. Therefore, we can have the following equation:
{\small\setlength{\abovedisplayskip}{3pt}
		\setlength{\belowdisplayskip}{\abovedisplayskip}
		\setlength{\abovedisplayshortskip}{0pt}
		\setlength{\belowdisplayshortskip}{3pt}
		\begin{equation}\label{taylor}
		\sqrt{nT} (\hat{\boldsymbol{\theta}}_{n,T}-\boldsymbol{\theta}_0) =\left[-\frac{1}{nT}\frac{\partial^2\mathcal{L}_{n,T}(\tilde{\boldsymbol{\theta}}_{n,T})}{\partial\boldsymbol{\theta}\partial\boldsymbol{\theta}^{\prime}}\right]^{-1}\frac{1}{\sqrt{nT}}\frac{\partial\mathcal{L}_{n,T}(\boldsymbol{\theta}_0)}{\partial\boldsymbol{\theta}}
		\end{equation}} %
From (\ref{error-theta}), denote $\frac{\boldsymbol{f}^{\prime}(\varepsilon_{t}\boldsymbol{\theta})}{\boldsymbol{f}(\varepsilon_{t}(\boldsymbol{\theta}))}$ as $V_t(\boldsymbol{\theta})\in\mathbb{R}^n$ and $\frac{\boldsymbol{f}^{\prime}(\varepsilon_{t}\boldsymbol{\theta}_0)}{\boldsymbol{f}(\varepsilon_{t}(\boldsymbol{\theta}_0))}=V_t$.

Recall that $Z_t=(W_nY_t, W_nY_{t-1},\ldots, W_nY_{t-p}, X_t)$ so the first order derivatives can be expressed as
	 	{\small\setlength{\abovedisplayskip}{3pt}
	 	\setlength{\belowdisplayskip}{\abovedisplayskip}
	 	\setlength{\abovedisplayshortskip}{0pt}
	 	\setlength{\belowdisplayshortskip}{3pt}
	 	\begin{gather}\label{first-order-derivative}
	 	\frac{1}{\sqrt{nT}}\frac{\partial \mathcal{L}_{n,T}(\boldsymbol{\theta})}{\partial \boldsymbol{\theta}}=
	 	\begin{pmatrix*}[l]
	 	-\frac{1}{\sqrt{nT}}\sum_{t=1}^{T}\big((W_nY_t)^{\prime}V_t(\boldsymbol{\theta})+tr(W_nA_0^{-1})\big)\\
	 	-\frac{1}{\sqrt{nT}}\sum_{t=1}^{T}Z_t^{\prime}V_t(\boldsymbol{\theta})\\
	 	-\frac{1}{\sqrt{nT}}\sum_{t=1}^{T}\big(\boldsymbol{F}(X_t\boldsymbol{\gamma}^{\prime})\big)^{\prime} V_t(\boldsymbol{\theta})\\
	 	-\frac{\lambda_1}{\sqrt{nT}}\sum_{t=1}^{T}X_t^{\prime}\big(\boldsymbol{F}^{\prime}(X_t\boldsymbol{\gamma}_1)\circ V_t(\boldsymbol{\theta})\big)\\
	 	\hspace{9em}\vdots\\
	 	-\frac{\lambda_h}{\sqrt{nT}}\sum_{t=1}^{T}X_t^{\prime}\big(\boldsymbol{F}^{\prime}(X_t\boldsymbol{\gamma}_h)\circ V_t(\boldsymbol{\theta})\big)
	 	\end{pmatrix*}
	 	\end{gather}}%
By Lemma \ref{unique-max}, the true parameter values maximize $\frac{1}{nT}\mathbb{E}\,\mathcal{L}_{n,T}(\boldsymbol{\theta})$, so $\frac{1}{nT}\frac{\partial \mathbb{E}\,\mathcal{L}_{n,T}(\boldsymbol{\theta}_0)}{\partial \boldsymbol{\theta}}=\boldsymbol{0}$. In (\ref{log-derivative-bound})-(\ref{log-derivative-bound-phi-x}) and (\ref{res-expansion}), we showed that $\mathbb{E}\left|\frac{\partial \ln f(\varepsilon_{s,t}(\boldsymbol{\theta}))}{\partial \boldsymbol{\theta}}\right|$ is dominated by some function not related to $\boldsymbol{\theta}$ and (\ref{loglikelihood-theta-expectation}) indicates that $\mathbb{E}\left|\frac{\partial \ln f(\varepsilon_{s,t}(\boldsymbol{\theta}))}{\partial \boldsymbol{\theta}}\right|$ is bounded for interior units in $\mathcal{S}$. Hence,$\mathbb{E}\,\frac{\partial \ln f(\varepsilon_{s,t}(\boldsymbol{\theta}))}{\partial \boldsymbol{\theta}} = \frac{\partial }{\partial \boldsymbol{\theta}}\mathbb{E}\ln f(\varepsilon_{s,t}(\boldsymbol{\theta}))$, it follows that, with $\frac{1}{nT}\mathcal{L}_{n,T}(\boldsymbol{\theta})=\frac{1}{n}\ln|A_0|+\frac{1}{nT}\sum_{s=1}^{n}\sum_{t=1}^{T}\ln f(\varepsilon_{s,t}(\boldsymbol{\theta}))$, we can have,
{\small\setlength{\abovedisplayskip}{3pt}
	\setlength{\belowdisplayskip}{\abovedisplayskip}
	\setlength{\abovedisplayshortskip}{0pt}
	\setlength{\belowdisplayshortskip}{3pt}
	\begin{gather*}	
	\frac{1}{nT}\frac{\partial \mathbb{E}\,\mathcal{L}_{n,T}(\boldsymbol{\theta}_0)}{\partial \boldsymbol{\theta}} = \frac{1}{nT}\mathbb{E}\,\frac{\partial \mathcal{L}_{n,T}(\boldsymbol{\theta}_0)}{\partial \boldsymbol{\theta}}=\boldsymbol{0}
	\end{gather*}} %
Therefore, with Assumption \ref{Bmatrix-limit},
 	{\small\setlength{\abovedisplayskip}{3pt}
 		\setlength{\belowdisplayskip}{\abovedisplayskip}
 		\setlength{\abovedisplayshortskip}{0pt}
 		\setlength{\belowdisplayshortskip}{3pt}
 		\begin{gather*}
 			\text{Var}(\frac{1}{\sqrt{nT}}\frac{\partial \mathcal{L}_{n,T}(\boldsymbol{\theta}_0)}{\partial \boldsymbol{\theta}})=-\mathbb{E}\frac{1}{nT}\frac{\partial^2 \mathcal{L}_{n,T}(\boldsymbol{\theta}_0)}{\partial\boldsymbol{\theta}\partial\boldsymbol{\theta}^{\prime}} =\mathbb{E}\left(\frac{1}{nT}\frac{\partial\mathcal{L}_{n,T}(\boldsymbol{\theta}_0)}{\partial\boldsymbol{\theta}}\frac{\partial\mathcal{L}_{n,T}(\boldsymbol{\theta}_0)}{\partial\boldsymbol{\theta}^{\prime}}\right)\rightarrow B(\boldsymbol{\theta}_0)
 		\end{gather*}}%
And under this $A(\boldsymbol{\theta}_0) = B(\boldsymbol{\theta}_0)$. Since
$\frac{\partial \mathcal{L}_{n,T}(\boldsymbol{\theta}_0)}{\partial \boldsymbol{\theta}}$ is the sum of $T$ identical and ergodic random variables,
by the central limit theorem for stationary ergodic processes \cite{gordin1969clt}, the limiting distribution of $\frac{1}{\sqrt{nT}}\frac{\partial \mathcal{L}_{n,T}(\boldsymbol{\theta}_0)}{\partial \boldsymbol{\theta}}$ is $N(\boldsymbol{0}, B(\boldsymbol{\theta}_0))$.
 		
Next we would like to show that $\frac{1}{nT}\frac{\partial^2\mathcal{L}_{n,T}(\tilde{\boldsymbol{\theta}}_{n,T})}{\partial\boldsymbol{\theta}\partial\boldsymbol{\theta}^{\prime}}-\frac{1}{nT}\frac{\partial^2\mathcal{L}_{n,T}(\boldsymbol{\theta}_0)}{\partial\boldsymbol{\theta}\partial\boldsymbol{\theta}^{\prime}}\xrightarrow{p}0$. Following the results in (\ref{first-order-derivative}), define $U_t(\boldsymbol{\theta})=\frac{\boldsymbol{f}^{\prime\prime}(\varepsilon_t(\boldsymbol{\theta}))}{\boldsymbol{f}(\varepsilon_t(\boldsymbol{\theta}))}-\frac{\boldsymbol{f}^{\prime2}(\varepsilon_t(\boldsymbol{\theta}))}{\boldsymbol{f}^2(\varepsilon_t(\boldsymbol{\theta}))}\in\mathbb{R}^{n}$, and write $U_t = U_t(\boldsymbol{\theta}_0)$ so the second order derivatives are given below $-\frac{1}{nT}\frac{\partial^2\mathcal{L}_{n,T}(\boldsymbol{\theta})}{\partial\boldsymbol{\theta}\partial\boldsymbol{\theta}^{\prime}}=$
 {\small\setlength{\abovedisplayskip}{3pt}
 	\setlength{\belowdisplayskip}{\abovedisplayskip}
 	\setlength{\abovedisplayshortskip}{0pt}
 	\setlength{\belowdisplayshortskip}{3pt}		
 	\begin{gather}\label{second-order-derivative}
 	\frac{1}{nT}\sum_{t=1}^{T}
 		\begin{psmallmatrix*}[l]
 		G_{0,t}(\boldsymbol{\theta})&(W_nY_t)^{\prime}G_{1,t}(\boldsymbol{\theta})&(W_nY_t)^{\prime}G_{2,t}(\boldsymbol{\theta})&(W_nY_t)^{\prime}H_{1,t}(\boldsymbol{\theta})&\cdots&(W_nY_t)^{\prime}H_{h,t}(\boldsymbol{\theta})\\
 		G_{1,t}^{\prime}(\boldsymbol{\theta})W_nY_t&Z_t^{\prime}G_{1,t}(\boldsymbol{\theta})&Z_t^{\prime}G_{2,t}(\boldsymbol{\theta})&Z_t^{\prime}H_{1,t}(\boldsymbol{\theta})&\cdots&Z_t^{\prime}H_{h,t}(\boldsymbol{\theta})\\
 		G_{2,t}^{\prime}(\boldsymbol{\theta})W_nY_t&G_{2,t}^{\prime}(\boldsymbol{\theta})Z_t&\boldsymbol{F}(X_t\boldsymbol{\gamma}^{\prime})^{\prime}G_{2,t}(\boldsymbol{\theta})&\boldsymbol{F}(X_t\boldsymbol{\gamma}^{\prime})^{\prime}H_{1,t}(\boldsymbol{\theta})&\cdots&\boldsymbol{F}(X_t\boldsymbol{\gamma}^{\prime})^{\prime}H_{h,t}(\boldsymbol{\theta})\\
 		&&&+K_{1,t}(\boldsymbol{\theta})&\cdots&+K_{h,t}(\boldsymbol{\theta})\\
 		H_{1,t}^{\prime}(\boldsymbol{\theta})W_nY_t&H_{1,t}^{\prime}(\boldsymbol{\theta})Z_t&H_{1,t}^{\prime}(\boldsymbol{\theta})\boldsymbol{F}(X_t\boldsymbol{\gamma}^{\prime})\\
 		&&+K_{1,t}(\boldsymbol{\theta})^{\prime}&&\\
 		\vdots&\vdots&\vdots&&J(\boldsymbol{\theta})&\\
 		H_{h,t}^{\prime}(\boldsymbol{\theta})W_nY_t&H_{h,t}^{\prime}(\boldsymbol{\theta})Z_t&H_{h,t}^{\prime}(\boldsymbol{\theta})\boldsymbol{F}(X_t\boldsymbol{\gamma}^{\prime})\\
 		&&+K_{h,t}(\boldsymbol{\theta})^{\prime}&&\\
 		\end{psmallmatrix*}
 		\end{gather}}%
  {\small\setlength{\abovedisplayskip}{3pt}
 	\setlength{\belowdisplayskip}{\abovedisplayskip}
 	\setlength{\abovedisplayshortskip}{0pt}
 	\setlength{\belowdisplayshortskip}{3pt}		
 	\begin{align*}	
 		J_{ij,t}(\boldsymbol{\theta}) &=\left\{\begin{array}{ll}
 			\lambda_{i}X_t^{\prime}(\boldsymbol{F}^{\prime\prime}(X_t\boldsymbol{\gamma}_{i})\circ V_t(\boldsymbol{\theta})\circ  X_t)+\lambda_{i}X_t^{\prime}(\boldsymbol{F}^{\prime}(X_t\boldsymbol{\gamma}_{i})\circ H_{i,t})&i=j\\
 			\lambda_{i}(\boldsymbol{F}^{\prime}(X_t\boldsymbol{\gamma}_{i})\circ H_{j,t})^{\prime}X_t&i>j \quad i,j ={1,2,\ldots,h}\\
 			\lambda_{i}X_t^{\prime}(\boldsymbol{F}^{\prime}(X_t\boldsymbol{\gamma}_{i})\circ H_{j,t})&i< j
 			\end{array}\right.
 		\\\nonumber
 		G_{0,t}(\boldsymbol{\theta}) &=\left(-W_nY_t\circ W_nY_{t})^{\prime}U_t(\boldsymbol{\theta})+tr((W_nA_0^{-1})^2\right)\\\nonumber
 		G_{1,t}(\boldsymbol{\theta})&=-U_t(\boldsymbol{\theta})\circ Z_t\\\nonumber
 		G_{2,t}(\boldsymbol{\theta})&=-U_t(\boldsymbol{\theta})\circ \boldsymbol{F}(X_t\boldsymbol{\gamma}^{\prime})\\\nonumber
 		H_{i,t}(\boldsymbol{\theta})&=-U_t(\boldsymbol{\theta})\circ (\lambda_i\boldsymbol{F}^{\prime}(X_t\boldsymbol{\gamma}_i)\circ X_t)\quad i=1,\ldots,h\\
 		K_{i,t}(\boldsymbol{\theta}) &= [V_t(\boldsymbol{\theta})\circ \boldsymbol{F}^{\prime}(X_t\boldsymbol{\gamma}^{\prime})]^{\prime}X_t \circ e_{i}\quad i=1,\ldots,h\quad k = 1,\ldots, h\\
 		e_{i,k} & = \left\{\begin{array}{ll}
 		1&k=i\\
 		0& k\neq i
 		\end{array}\right.
 	\end{align*}}%
Since $\tilde{\boldsymbol{\theta}}_{n,T}$ is between $\hat{\boldsymbol{\theta}}_{n,T}$ and $\boldsymbol{\theta}_0$, $\hat{\boldsymbol{\theta}}_{n,T}\xrightarrow{p}\boldsymbol{\theta}_0$ so $\tilde{\boldsymbol{\theta}}_{n,T}$ also converges to $\boldsymbol{\theta}_0$ in probability as $n\rightarrow\infty$.
By Assumption \ref{error-dominance}, $\left|\frac{f^{\prime}(s)}{f(s)}\right|,\left|\frac{f^{\prime\prime}(s)}{f(s)}\right|$ and $\left|\frac{f^{\prime 2}(s)}{f^{2}(s)}\right|$ are continuous and are bounded by $a_1+a_2\left|s\right|^{c_1}$ so $U_t(\boldsymbol{\theta}), V_t(\boldsymbol{\theta})$ are continuous. With $\phi_0\in(-\frac{1}{\tau},\frac{1}{\tau})$, $tr((W_nA_0^{-1})^2)= \sum_{i=1}^{n}\frac{\tau_i^2}{(1-\phi_0 \tau_i)^2}$ is also a continuous function of $\phi_0$.

Therefore elements in $\frac{1}{nT}\frac{\partial^2\mathcal{L}_{n,T}(\boldsymbol{\theta})}{\partial \boldsymbol{\theta}\partial\boldsymbol{\theta}^{\prime}}$ are continuous functions for $\boldsymbol{\theta}$ in $\boldsymbol{\Theta}$. Then by the continuity,
{\small\setlength{\abovedisplayskip}{3pt}
	\setlength{\belowdisplayskip}{\abovedisplayskip}
	\setlength{\abovedisplayshortskip}{0pt}
	\setlength{\belowdisplayshortskip}{3pt}
	\begin{gather}\label{cvg-function-p}
	\frac{1}{nT}\frac{\partial^2\mathcal{L}_{n,T}(\tilde{\boldsymbol{\theta}}_{n,T})}{\partial \boldsymbol{\theta}\partial\boldsymbol{\theta}^{\prime}}-\frac{1}{nT}\frac{\partial^2\mathcal{L}_{n,T}(\boldsymbol{\theta}_0)}{\partial \boldsymbol{\theta}\partial\boldsymbol{\theta}^{\prime}}\xrightarrow{p}0,\quad\text{as } \tilde{\boldsymbol{\theta}}_{n,T}\xrightarrow{p}\boldsymbol{\theta}_0
	\end{gather}} %

Finally we will prove that $\left|\frac{1}{nT}\frac{\partial^2\mathcal{L}_{n,T}(\boldsymbol{\theta}_0)}{\partial \boldsymbol{\theta}\partial\boldsymbol{\theta}^{\prime}}-\mathbb{E} \frac{1}{nT}\frac{\partial^2\mathcal{L}_{n,T}(\boldsymbol{\theta}_0)}{\partial \boldsymbol{\theta}\partial\boldsymbol{\theta}^{\prime}}\right|\xrightarrow{p}0$. Since $\ln|A_0|$ can be decomposed as $\sum_{i=1}^{n}\ln (1-\phi_0\tau_{i})$, to show $\mathbb{E} \frac{1}{nT}\frac{\partial^2\mathcal{L}_{n,T}(\boldsymbol{\theta}_0)}{\partial \boldsymbol{\theta}\partial\boldsymbol{\theta}^{\prime}}<\infty$ is equivalent to show
{\small\setlength{\abovedisplayskip}{3pt}
	\setlength{\belowdisplayskip}{\abovedisplayskip}
	\setlength{\abovedisplayshortskip}{0pt}
	\setlength{\belowdisplayshortskip}{3pt}
	\begin{gather}\label{second-derivative-expectation}
	\mathbb{E}\left|\frac{\partial^2}{\partial\boldsymbol{\theta}\partial\boldsymbol{\theta}^{\prime}}\left(\frac{1}{nT}\sum_{t=1}^{T}\sum_{s=1}^{n}\ln(1-\phi_{00}\tau_s)+\ln f(\varepsilon_{s,t}(\boldsymbol{\theta}_0))\right)\right| <\infty
	\end{gather}} %
We first discuss the second derivative with respect to $\phi_0$ component in (\ref{second-derivative-expectation}).
By triangular inequality, $	\mathbb{E}\left|\frac{\partial^2}{\partial\phi_0\partial\phi_0}\frac{1}{nT}\sum_{t=1}^{T}\sum_{s=1}^{n}\Big(\ln(1-\phi_{00}\tau_s)+\ln f(\varepsilon_{s,t}(\boldsymbol{\theta}_0))\Big)\right|<\mathbb{E}\left|\frac{1}{n}\sum_{s=1}^{n}\frac{\partial^2\ln(1-\phi_{00}\tau_s)}{\partial\phi_0\partial\phi_0}\right|+\mathbb{E}\left|\frac{\partial^2 \ln f(\varepsilon_{s,t}(\boldsymbol{\theta}_0))}{\partial\phi_0\partial \phi_0}\right|$ where $\phi_{00}$ is the true value of $\phi_0$. Consider $\mathbb{E}\left|\frac{1}{n}\sum_{s=1}^{n}\frac{\partial^2\ln(1-\phi_{00}\tau_i)}{\partial\phi_0\partial\phi_0}\right|+\mathbb{E}\left|\frac{\partial^2 \ln f(\varepsilon_{s,t}(\boldsymbol{\theta}_0))}{\partial\phi_0\partial\phi_0}\right|$, under stationarity, it can be simplified as
{\small\setlength{\abovedisplayskip}{3pt}
	\setlength{\belowdisplayskip}{\abovedisplayskip}
	\setlength{\abovedisplayshortskip}{0pt}
	\setlength{\belowdisplayshortskip}{3pt}
	\begin{gather}\label{rho-rho-abs}
	\frac{1}{n}tr(W_nA_0^{-1})^2+\mathbb{E}\left|\left(\frac{f^{\prime^2}(\varepsilon_{s,t})}{f^2(\varepsilon_{s,t})}-\frac{f^{\prime\prime}(\varepsilon_{s,t})}{f(\varepsilon_{s,t})}\right)\left(\sum_{k=1}^{n}w_{sk}y_{k,t}\right)^2\right|
	\end{gather}} %
Define $M_n =\{m_{i,j}\}=W_nA_0^{-1}$ and by assumptions, $M_n$ is uniformly bounded in row and column. Suppose the row sum or column sum of $M_n$ is bounded by a constant $b$. We know $\frac{1}{n}tr(W_nA_0^{-1})^2<\infty$. 
So we only need to show $\mathbb{E}\left|\left(\frac{f^{\prime^2}(\varepsilon_{s,t})}{f^2(\varepsilon_{s,t})}-\frac{f^{\prime\prime}(\varepsilon_{s,t})}{f(\varepsilon_{s,t})}\right)\left(\sum_{k=1}^{n}w_{sk}y_{k,t}\right)^2\right|<\infty$.

By simple linear algebra, 
{\small\setlength{\abovedisplayskip}{3pt}
	\setlength{\belowdisplayskip}{\abovedisplayskip}
	\setlength{\abovedisplayshortskip}{0pt}
	\setlength{\belowdisplayshortskip}{3pt}
	\begin{align*}
Y_t &= \sum_{j=0}^{\infty}\Psi_jA_0^{-1}(X_{t-j}\beta+\boldsymbol{F}(X_{t-j}\boldsymbol{\gamma}^{\prime})+\boldsymbol{\varepsilon}_{t-j})\\
&=\sum_{j=0}^{\infty}\Psi_jA_0^{-1}(\boldsymbol{g}(X_{t-j},\boldsymbol{\theta}_0)+\boldsymbol{\varepsilon}_{t-j})
\end{align*}}%
So $W_nY_t = W_n\sum_{j=0}^{\infty}\Psi_jA_0^{-1}(\boldsymbol{g}(X_{t-j},\boldsymbol{\theta}_0)+\boldsymbol{\varepsilon}_{t-j})$. Therefore $(\sum_{k=1}^{n}w_{sk}y_{k,t})^2$ is the $s^{th}$ component of $(W_nY_t\circ W_nY_t)$ and we expand $(W_nY_t\circ W_nY_t)_{s}=$
{\small\setlength{\abovedisplayskip}{3pt}
	\setlength{\belowdisplayskip}{\abovedisplayskip}
	\setlength{\abovedisplayshortskip}{0pt}
	\setlength{\belowdisplayshortskip}{3pt}
	\begin{align}\label{sq-y-1}
	&\left[W_n\sum_{j=0}^{\infty}\Psi_jA_0^{-1}\boldsymbol{g}(X_{t-j},\boldsymbol{\theta}_0) \circ W_n\sum_{j=0}^{\infty}\Psi_jA_0^{-1}\boldsymbol{g}(X_{t-j},\boldsymbol{\theta}_0)\right]_s\\\label{sq-y-2}
	+&\left[2W_n\sum_{j=0}^{\infty}\Psi_jA_0^{-1}\boldsymbol{g}(X_{t-j},\boldsymbol{\theta}_0) \circ W_n\sum_{j=0}^{\infty}\Psi_jA_0^{-1}\boldsymbol{\varepsilon}_{t-j}\right]_s\\\label{sq-y-3}
	+&\left[W_n\sum_{j=0}^{\infty}\Psi_jA_0^{-1}\boldsymbol{\varepsilon}_{t-j}\circ W_n\sum_{j=0}^{\infty}\Psi_jA_0^{-1}\boldsymbol{\varepsilon}_{t-j}\right]_s
	\end{align}}%

From assumptions \ref{ub-weight-matrix} and \ref{ub-backshift-matrix}, we know that $W_n$ is uniformly bounded and $\sum_{j=0}^{\infty}\Psi_jA_0^{-1}$ is absolute summable so $(\ref{sq-y-1})<\infty$ under the stationary condition of $X_t$. Hence, $\mathbb{E}\left|\left(\frac{f^{\prime^2}(\varepsilon_{s,t})}{f^2(\varepsilon_{s,t})}-\frac{f^{\prime\prime}(\varepsilon_{s,t})}{f(\varepsilon_{s,t})}\right)\cdot \text{(\ref{sq-y-1})}\right|<\infty$.

For (\ref{sq-y-2}), when $j>0$, $\boldsymbol{\varepsilon}_{t-j}$ is independent from $\boldsymbol{\varepsilon}_{t}$. So for all $k$ when $j>0$,
{\small\setlength{\abovedisplayskip}{3pt}
	\setlength{\belowdisplayskip}{\abovedisplayskip}
	\setlength{\abovedisplayshortskip}{0pt}
	\setlength{\belowdisplayshortskip}{3pt}
	\begin{gather*}
	\mathbb{E}\left|\left(\frac{f^{\prime^2}(\varepsilon_{s,t})}{f^2(\varepsilon_{s,t})}-\frac{f^{\prime\prime}(\varepsilon_{s,t})}{f(\varepsilon_{s,t})}\right)\cdot \left[W_n\sum_{j=1}^{\infty}\Psi_jA_0^{-1}\boldsymbol{\varepsilon}_{t-j}\right]_s\right|
	=k_1\cdot \left[W_n\sum_{j=1}^{\infty}\Psi_jA_0^{-1}\mathbb{E}\left|\boldsymbol{\varepsilon}_{t-j}\right|\right]_s<\infty
	\end{gather*}}%
when $j=0$, this reduces to $	\mathbb{E}\left|\left(\frac{f^{\prime^2}(\varepsilon_{s,t})}{f^2(\varepsilon_{s,t})}-\frac{f^{\prime\prime}(\varepsilon_{s,t})}{f(\varepsilon_{s,t})}\right)\cdot \left[W_nA_0^{-1}\boldsymbol{\varepsilon}_{t}\right]_s\right|< k_1|b-m_{ss}|+k_2|m_{ss}|$. So $\mathbb{E}\left|\left(\frac{f^{\prime^2}(\varepsilon_{s,t})}{f^2(\varepsilon_{s,t})}-\frac{f^{\prime\prime}(\varepsilon_{s,t})}{f(\varepsilon_{s,t})}\right)\cdot \text{(\ref{sq-y-2})}\right|<\infty$.

For (\ref{sq-y-3}), similar to (\ref{sq-y-2}), we can have
$\mathbb{E}\left|\left(\frac{f^{\prime^2}(\varepsilon_{s,t})}{f^2(\varepsilon_{s,t})}-\frac{f^{\prime\prime}(\varepsilon_{s,t})}{f(\varepsilon_{s,t})}\right)\cdot \text{(\ref{sq-y-3})}\right|<Constant \cdot (k_2+k_3+\mathbb{E}|\varepsilon_{s,t}|)<\infty$.

Therefore combining all these components together,  $\mathbb{E}\left|\left(\frac{f^{\prime^2}(\varepsilon_{s,t})}{f^2(\varepsilon_{s,t})}-\frac{f^{\prime\prime}(\varepsilon_{s,t})}{f(\varepsilon_{s,t})}\right)\left(\sum_{k=1}^{n}w_{sk}y_{k,t}\right)^2\right|=\mathbb{E}\left|\left(\frac{f^{\prime^2}(\varepsilon_{s,t})}{f^2(\varepsilon_{s,t})}-\frac{f^{\prime\prime}(\varepsilon_{s,t})}{f(\varepsilon_{s,t})}\right)\Big((\ref{sq-y-1})+(\ref{sq-y-2})+(\ref{sq-y-3})\Big)\right|<\infty$.
So equation (\ref{rho-rho-abs}) is finite.

Because $\sum_{t=1}^{T}\sum_{s=1}^{n}\ln(1-\phi_{0}\tau_s)$ in (\ref{second-derivative-expectation}) only relates to $\phi_0$, this term goes away when taken second derivative with respect to other parameters. Similar to the proof of $\mathbb{E}\left|\frac{\partial^2 \ln f(\varepsilon_{s,t}(\boldsymbol{\theta}_0))}{\partial\phi_0\partial\phi_0}\right|<\infty$, we can show that $\mathbb{E}\left|\frac{\partial^2 \ln f(\varepsilon_{s,t}(\boldsymbol{\theta}_0))}{\partial\phi_i\partial\phi_j}\right|<\infty$ for $i=0,1,\ldots, p$ and $j=1,\ldots,p$, i.e.,
{\small\setlength{\abovedisplayskip}{3pt}
	\setlength{\belowdisplayskip}{\abovedisplayskip}
	\setlength{\abovedisplayshortskip}{0pt}
	\setlength{\belowdisplayshortskip}{3pt}
	\begin{gather*}
	\mathbb{E}\left|\frac{\partial^2}{\partial\phi_{i}\partial\phi_{j}}\frac{1}{nT}\sum_{t=1}^{T}\sum_{s=1}^{n}\Big(\ln(1-\phi_{00}\tau_s)+\ln f(\varepsilon_{s,t}(\boldsymbol{\theta}_0))\Big)\right| <\infty \quad \text{for } i,j =0,1,\ldots, p
	\end{gather*}} %
Other elements in the matrix (\ref{second-derivative-expectation}) equal to those in $\mathbb{E}\left|\frac{\partial^2 \ln f(\varepsilon_{s,t}(\boldsymbol{\theta}_0))}{\partial\boldsymbol{\theta}\partial \boldsymbol{\theta}^{\prime}}\right|$ and they are also finite.
{\small\setlength{\abovedisplayskip}{3pt}
	\setlength{\belowdisplayskip}{\abovedisplayskip}
	\setlength{\abovedisplayshortskip}{0pt}
	\setlength{\belowdisplayshortskip}{3pt}
	\begin{align}\label{rho-beta-abs}
	\mathbb{E}\left|\frac{\partial^2 \ln f(\varepsilon_{s,t}(\boldsymbol{\theta}_0))}{\partial\phi_{i}\partial \beta^{\prime}}\right| &\leq Constant\cdot|x_{s,t}^{\prime}|
	\left(k_2+k_1\mathbb{E}|\varepsilon_{s,t}|\right)\\
	\mathbb{E}\left|\frac{\partial^2 \ln f(\varepsilon_{s,t}(\boldsymbol{\theta}_0))}{\partial\phi_i\partial \lambda^{\prime}}\right| &\leq  Constant\cdot\boldsymbol{1}_h^{\prime}
	\left(k_2+k_1\mathbb{E}|\varepsilon_{s,t}|\right)\\
	\mathbb{E}\left|\frac{\partial^2 \ln f(\varepsilon_{s,t}(\boldsymbol{\theta}_0))}{\partial\phi_i\partial \gamma_{j}^{\prime}}\right| &\leq Constant\cdot\frac{|\lambda_{j0}x_{s,t}^{\prime}|}{4}\left(k_2+k_1\mathbb{E}|\varepsilon_{s,t}|\right)\\
	\mathbb{E}\left|\frac{\partial^2 \ln f(\varepsilon_{s,t}(\boldsymbol{\theta}_0))}{\partial\beta\partial \beta^{\prime}}\right| &= k_1|x_{s,t}x_{s,t}^{\prime}|\\
	\mathbb{E}\left|\frac{\partial^2 \ln f(\varepsilon_{s,t}(\boldsymbol{\theta}_0))}{\partial\beta\partial \lambda^{\prime}}\right| &= k_1|x_{s,t}\boldsymbol{F}(x_{s,t}^{\prime}\boldsymbol{\gamma}_0)|\\
	\mathbb{E}\left|\frac{\partial^2 \ln f(\varepsilon_{s,t}(\boldsymbol{\theta}_0))}{\partial\beta\partial \boldsymbol{\gamma}_{j}^{\prime}}\right| &\leq \frac{k_1}{4}|\lambda_{j0}x_{s,t}x_{s,t}^{\prime}|\\
	\mathbb{E}\left|\frac{\partial^2 \ln f(\varepsilon_{s,t}(\boldsymbol{\theta}_0))}{\partial\lambda\partial \lambda^{\prime}}\right| & = k_1\left|\boldsymbol{F}(x_{s,t}^{\prime}\boldsymbol{\gamma}_0)^{\prime}\boldsymbol{F}(x_{s,t}^{\prime}\boldsymbol{\gamma}_0)\right| \leq k_1\cdot \boldsymbol{1}_{h\times h}\\
	\mathbb{E}\left|\frac{\partial^2 \ln f(\varepsilon_{s,t}(\boldsymbol{\theta}_0))}{\partial\lambda\partial \boldsymbol{\gamma}_{j}^{\prime}}\right| & = \frac{k_1}{4}|\lambda_{j0}F^{\prime}(x_{s,t}^{\prime}\boldsymbol{\gamma}_{j0})|\cdot|\boldsymbol{F}(x_{s,t}^{\prime}\boldsymbol{\gamma}_{0})^{\prime}x_{s,t}^{\prime}|\leq \frac{k_1|\lambda_{j0}|}{4}\cdot|\boldsymbol{F}(x_{s,t}^{\prime}\boldsymbol{\gamma}_{0})^{\prime}x_{s,t}^{\prime}|\\
	\mathbb{E}\left|\frac{\partial^2 \ln f(\varepsilon_{s,t}(\boldsymbol{\theta}_0))}{\partial\boldsymbol{\gamma}_{k}\partial \boldsymbol{\gamma}_{j}^{\prime}}\right| & \leq \frac{k_1|\lambda_{k0}\lambda_{j0}|}{16}\cdot|x_{s,t}x_{s,t}^{\prime}|,\quad k\neq j\\\label{gamma-gamma-abs}
	\mathbb{E}\left|\frac{\partial^2 \ln f(\varepsilon_{s,t}(\boldsymbol{\theta}_0))}{\partial\boldsymbol{\gamma}_{j}\partial \boldsymbol{\gamma}_{j}^{\prime}}\right| & \leq \frac{k_1\lambda^2_{j0}}{16}\cdot|x_{s,t}x_{s,t}^{\prime}| +\frac{\sqrt{3}k_0|\lambda_{j0}|}{18}|x_{s,t}x_{s,t}^{\prime}|
	\end{align}} %
Then we can apply the ergodic theorem \cite{birkhoff1931proof} and conclude that
{\small\setlength{\abovedisplayskip}{3pt}
	\setlength{\belowdisplayskip}{\abovedisplayskip}
	\setlength{\abovedisplayshortskip}{0pt}
	\setlength{\belowdisplayshortskip}{3pt}	
	\begin{gather}\label{cvg-3}
	\left|\frac{1}{nT}\frac{\partial^2 \mathcal{L}_{n,T}(\boldsymbol{\theta}_0)}{\partial \boldsymbol{\theta}\partial \boldsymbol{\theta}^{\prime}}-\mathbb{E}\frac{1}{nT}\frac{\partial^2 \mathcal{L}_{n,T}(\boldsymbol{\theta}_0)}{\partial \boldsymbol{\theta}\partial \boldsymbol{\theta}^{\prime}}\right|\xrightarrow{p}0
	\end{gather}}%
Recall the equation (\ref{taylor}), we have proved that $\frac{1}{\sqrt{nT}}\frac{\partial \mathcal{L}_{n,T}(\boldsymbol{\theta}_0)}{\partial \boldsymbol{\theta}}$ has the limiting distribution $N(\boldsymbol{0}, B(\boldsymbol{\theta}_0))$. With (\ref{cvg-3}), for $\tilde{\boldsymbol{\theta}}_{n,T}$ between $\hat{\boldsymbol{\theta}}_{n,T}$ and $\boldsymbol{\theta}_0$,  $-\frac{1}{nT}\frac{\partial^2\mathcal{L}_{n,T}(\tilde{\boldsymbol{\theta}}_{n,T})}{\partial \boldsymbol{\theta}\partial \boldsymbol{\theta}^{\prime}}\xrightarrow{p}A(\boldsymbol{\theta}_0)$ so we can conclude that $\sqrt{nT} (\hat{\boldsymbol{\theta}}_{n,T}-\boldsymbol{\theta}_0) \xrightarrow{d} N (\boldsymbol{0}, \boldsymbol{\Omega}_0)$, where $\boldsymbol{\Omega}_0 = A^{-1}(\boldsymbol{\theta}_0)B(\boldsymbol{\theta}_0)A^{-1}(\boldsymbol{\theta}_0)$.
\end{proof}

\section{Numerical Results}
\subsection{Simulation Study}
In this section, we conduct simulation experiments to examine the estimators' behavior for finite samples. We look at two PSTAR-ANN$(1)$ models with one and two neurons with model parameters specified below:
{\small\setlength{\abovedisplayskip}{3pt}
	\setlength{\belowdisplayskip}{\abovedisplayskip}
	\setlength{\abovedisplayshortskip}{0pt}
	\setlength{\belowdisplayshortskip}{3pt}
	\begin{gather}\label{simulation-model}
	Y_{t} = \phi_{0}W_nY_{t} + \phi_{1}W_nY_{t-1} + \boldsymbol{F}(X_t\boldsymbol{\gamma}^{\prime})\lambda +\boldsymbol{\varepsilon}_t\\\nonumber
	\phi_{0}=0.6,\quad \phi_{1}=-0.274,\quad \lambda=1.5\\\nonumber
	\boldsymbol{\gamma} = (\gamma_1,\gamma_2)^{\prime} =(0.75,-0.35)^{\prime}\\\label{simulation-model-2}
	Y_{t} = \phi_{0}W_nY_{t} + \phi_{1}W_nY_{t-1} + X_t\beta+ \boldsymbol{F}(X_t\boldsymbol{\gamma}_1^{\prime})\lambda_1 + \boldsymbol{F}(X_t\boldsymbol{\gamma}_2^{\prime})\lambda_2+\boldsymbol{\varepsilon}_t\\\nonumber
		\phi_{0}=0.6,\quad \phi_{1}=-0.274,\quad \beta = (0.24,-0.7)^{\prime}\\\nonumber
	\lambda_1 = 2,\quad \boldsymbol{\gamma}_1 = (\gamma_{11},\gamma_{12})^{\prime} =(0.75,-0.35)^{\prime}\\\nonumber
		\lambda_2 = 0.8,\quad\boldsymbol{\gamma}_2 = (\gamma_{21},\gamma_{22})^{\prime} =(0.35,-0.5)^{\prime}
	\end{gather}} %
Simulations are conducted in a 30 by 30 lattice grid, so $n= 900$ and $p=1$, $T=30$. Random errors are sampled respectively from three distributions (standard normal, rescaled t-distribution and Laplace distribution) with variance 1.
We generated data for two exogenous variables, observed at different time points $t$ and location $s$. Let $X_t=\begin{pmatrix}
x_{11,t}&\ldots&x_{1n,t}\\
x_{21,t}&\ldots&x_{2n,t}
\end{pmatrix}^{\prime}$.
Usually we would like to normalize predictors before fitting a neural network model to avoid the computation overflow \cite{medeiros2006building} so values of $x_{s,t}[i], i=1,2$, were generated independently from normal distributions $N(0,1.5^2)$ and $N(0,3^2)$ respectively.
The log-likelihood function $\mathcal{L}_{n,T}(\boldsymbol{\theta})$ is given in (\ref{likelihood-simulation}) and we use L-BFGS-B method \cite{byrd1995limited,zhu1997algorithm} (recommended for bound constrained optimization) to find the parameter estimates $\hat{\boldsymbol{\theta}}$ which maximize (\ref{likelihood-simulation}).
{\small\setlength{\abovedisplayskip}{3pt}
	\setlength{\belowdisplayskip}{\abovedisplayskip}
	\setlength{\abovedisplayshortskip}{0pt}
	\setlength{\belowdisplayshortskip}{3pt}
	\begin{align}\label{likelihood-simulation}
	\mathcal{L}_{n,T}(\boldsymbol{\theta}) &= T\ln |I_n-\phi_0 W_n|+\sum_{t=1}^{T}\sum_{s=1}^{n}\ln f(\varepsilon_{s,t}(\boldsymbol{\theta}))\\\nonumber
	\text{for model (\ref{simulation-model}): }&\varepsilon_{s,t}(\boldsymbol{\theta}) = y_{s,t}-\sum_{i=0}^{p}\sum_{k=1}^{n}\phi_iw_{sk}y_{k,t-i}- \boldsymbol{F}(x_{s,t}^{\prime}\boldsymbol{\gamma})\lambda\\\nonumber
	\text{for model (\ref{simulation-model-2}): }&\varepsilon_{s,t}(\boldsymbol{\theta}) = y_{s,t}-\sum_{i=0}^{p}\sum_{k=1}^{n}\phi_iw_{sk}y_{k,t-i}- \boldsymbol{F}(x_{s,t}^{\prime}\boldsymbol{\gamma}_1)\lambda_1-\boldsymbol{F}(x_{s,t}^{\prime}\boldsymbol{\gamma}_2)\lambda_2
	\end{align}} %
For the models under consideration, we estimated the covariance of the asymptotic normal distribution equation (\ref{asymptotic}). 
Since matrices $A(\boldsymbol{\theta}_0)$ and $B(\boldsymbol{\theta}_0)$ involve expected values with respect to the true parameter $\boldsymbol{\theta}_0$, given merely observations, in practice they can be estimated as follows:
{\small\setlength{\abovedisplayskip}{3pt}
	\setlength{\belowdisplayskip}{\abovedisplayskip}
	\setlength{\abovedisplayshortskip}{0pt}
	\setlength{\belowdisplayshortskip}{3pt}
	\begin{align*}
	\hat{A}(\boldsymbol{\theta}_0) &= \frac{1}{nT}\sum_{t=1}^{T}\sum_{s=1}^{n}-\frac{\partial^2 l_{s,t}(\boldsymbol{\theta}_0)}{\partial \boldsymbol{\theta}\partial \boldsymbol{\theta}^{\prime}}\\
	\hat{B}(\boldsymbol{\theta}_0) &= \frac{1}{nT}\sum_{t=1}^{T}\sum_{s=1}^{n}\frac{\partial l_{s,t}(\boldsymbol{\theta}_0)}{\partial \boldsymbol{\theta}}\frac{\partial l_{s,t}(\boldsymbol{\theta}_0)}{\partial \boldsymbol{\theta}^{\prime}}
	\end{align*}} %
where
{\small\setlength{\abovedisplayskip}{3pt}
	\setlength{\belowdisplayskip}{\abovedisplayskip}
	\setlength{\abovedisplayshortskip}{0pt}
	\setlength{\belowdisplayshortskip}{3pt}
	\begin{align*}
	l_{s,t}(\boldsymbol{\theta}) = \frac{1}{n}\ln |I_n-\phi_0 W_n|+ \ln f(\varepsilon_{s,t}(\boldsymbol{\theta}))
	\end{align*}} %
Using (\ref{first-order-derivative}) and (\ref{second-order-derivative}), we can calculate $\hat{A}(\boldsymbol{\theta}_0), \hat{B}(\boldsymbol{\theta}_0)$ to assess the asymptotic properties of parameter estimates.
Note that the derivative of the log-likelihood with respect to $\phi_0$ cannot be calculated directly because it requires taking derivative with respect to a log-determinant of $I_n-\phi_0 W_n$. For small sample sizes, we can compute the determinant directly and get the corresponding derivatives; but for large sample sizes, for example a dataset with $n=900$ observations, $W_n$ is a $900\times 900$ weight matrix which makes it impossible to calculate the derivative directly. Since $W_n$ is a square matrix, we can apply the spectral decomposition such that $W_n$ can be expressed in terms of its $n$ eigenvalue-eigenvector pairs in (\ref{matrix-decomposition}).
So we can apply the following approach to calculate the derivative of $\ln |I_n-\phi_0 W_n|$, which greatly reduces the burden of computations (Viton \cite{viton2010notes}).
{\small\setlength{\abovedisplayskip}{3pt}
	\setlength{\belowdisplayskip}{\abovedisplayskip}
	\setlength{\abovedisplayshortskip}{0pt}
	\setlength{\belowdisplayshortskip}{3pt}\begin{align*}
	\ln |I_n-\phi_0 W_n| = \ln \left(\prod_{s = 1}^{n}(1-\phi_0 \tau_i)\right)
	\end{align*}} %
Further the derivatives of the log-likelihood function with respect to $\phi_0$ is
{\small\setlength{\abovedisplayskip}{3pt}
	\setlength{\belowdisplayskip}{\abovedisplayskip}
	\setlength{\abovedisplayshortskip}{0pt}
	\setlength{\belowdisplayshortskip}{3pt}  
	\begin{align*}
	\frac{\partial l_{s,t}(\boldsymbol{\theta})}{\partial \phi_0} &= \frac{1}{n}\sum_{i = 1}^{n}\frac{-\tau_i}{(1-\rho \tau_i)}+\{y_{s,t}-\sum_{i=0}^{p}\phi_i\sum_{j=1}^{n}w_{sj}y_{j,t-i}-\lambda F(x_{s,t}^{\prime}\boldsymbol{\gamma})\}\cdot\left(\sum_{j=1}^{n}w_{sj}y_{j,t}\right)\\
	\frac{\partial^2 l_{s,t}(\boldsymbol{\theta})}{\partial \phi_0\partial \phi_0} & = -\frac{1}{n}\sum_{i = 1}^{n}\left[\frac{\tau_i^2}{(1-\phi_0 \tau_i)^2}+\left(\sum_{j=1}^{n}w_{sj}y_{j,t}\right)^2\right]
	\end{align*}} %
Finally we can estimate the covariance matrix by equation (\ref{asy_var}).
{\small\setlength{\abovedisplayskip}{3pt}
	\setlength{\belowdisplayskip}{\abovedisplayskip}
	\setlength{\abovedisplayshortskip}{0pt}
	\setlength{\belowdisplayshortskip}{3pt}
	\begin{align}\label{asy_var}
	\hat{\boldsymbol{\Omega}} &= \hat{A}^{-1}(\boldsymbol{\theta}_0)\hat{B}(\boldsymbol{\theta}_0)\hat{A}^{-1}(\boldsymbol{\theta}_0)
	\end{align}} %
In each simulation study, we compute $\hat{\boldsymbol{\theta}}$ for each of 200 replicates.  
The estimated $\hat{\boldsymbol{\Omega}}$ of the asymptotic covariance matrix $\hat{\boldsymbol{\Omega}}$ is computed based on a sample with $n=10000, T=100$ simulated observations. Table \ref{simulation-2} compares the empirical mean and standard errors (in parentheses) of $\hat{\boldsymbol{\theta}}$ with the true value and their estimated asymptotic standard deviations. From simulation results of the two models, the empirical standard deviations of $\hat{\boldsymbol{\theta}}$ are close to the asymptotic standard deviations, which implies that the estimators' large finite sample behavior roughly matches their asymptotic distributions. Note that when $\varepsilon_t$ is sampled from a Laplace distribution, this covariance matrix cannot be computed because its second order derivative is not differentiable at $0$. But the simulated $\hat{\boldsymbol{\theta}}$'s still exhibit normal properties. Normal plots for parameter estimates are shown in Figure \ref{qqplot} and give a strong indication of normality.  
\setlength{\extrarowheight}{5pt}
\begin{table}[h!]
	\centering
	\scalebox{0.85}{
	\begin{tabular}{lcccccccccc}
		\toprule
		\multicolumn{11}{c}{Model 1: $Y_{t} = \phi_{0}W_nY_{t} + \phi_{1}W_nY_{t-1} + \boldsymbol{F}(X_t\boldsymbol{\gamma}^{\prime})\lambda +\varepsilon_t$}
		\\\
		&&\multicolumn{2}{c}{$\varepsilon_t$}& $\hat{\phi}_0 $ &$\hat{\phi}_1 $ & $\hat{\lambda} $ & $\hat{\gamma}_1 $ & $\hat{\gamma}_2 $&& \\
		&&\multicolumn{2}{c}{true value}& $0.6$ &$-0.274$ & $1.50$ & $0.75$ & $-0.35$&& \\\midrule
		&&\multicolumn{2}{c}{\multirow{3}{*}{$N(0,1)$}}
		& 0.5997 & -0.2743 &  1.5025 & 0.7485& -0.3476&&\\ [-5pt]
		&&&& (0.0065)& (0.0079)& (0.0274)&(0.0269)& (0.0134)&&\\ [-5pt]
		&&&& [0.0079] & [0.0085] & [0.0308] & [0.0310]&[0.0147]\\
		&&\multicolumn{2}{c}{\multirow{3}{*}{$t(4)$}}& 0.5994& -0.2737 & 1.5000 & 0.7531 & -0.3507\\[-5pt]
		&&&& (0.0059) & (0.0069) & (0.0236) & (0.0249)&(0.0112) \\ [-5pt]
		&&&& [0.0068] & [0.0071] & [0.0259] & [0.0258]&[0.0122]\\
		&&\multicolumn{2}{c}{$Laplace$} & 0.5999 & -0.2736& 1.4992 & 0.7501& -0.3504\\[-5pt]
		&&\multicolumn{2}{c}{$(0, \frac{\sqrt{2}}{2})$}& (0.0048) & (0.0058) & (0.0199) & (0.0196)&(0.0097) \\[5pt]
       \toprule
		\multicolumn{11}{c}{Model 2: $Y_{t} = \phi_{0}W_nY_{t} + \phi_{1}W_nY_{t-1} + X_t\beta+ \boldsymbol{F}(X_t\boldsymbol{\gamma}_1^{\prime})\lambda_1 + \boldsymbol{F}(X_t\boldsymbol{\gamma}_2^{\prime})\lambda_2+\varepsilon_t$}
		\\
		$\varepsilon_t$& $\hat{\phi}_0 $ &$\hat{\phi}_1 $ &\multicolumn{2}{c}{$\hat{\beta}$} &$\hat{\lambda} _1$&\multicolumn{2}{c}{$\hat{\boldsymbol{\gamma}}_1$}&$\hat{\lambda} _2$&\multicolumn{2}{c}{$\hat{\boldsymbol{\gamma}}_2$}\\
		& $0.6 $ &$-0.274$ &$0.24$&$-0.70$&$2$&$0.75$&$0.7$&$0.8$&$0.35$& $-1$\\\midrule
		\multirow{3}{*}{$N(0,1)$}
		& 0.6000 & -0.2748 &  0.2402 & -0.6985& 1.9927& 0.7503 & 0.7030&  0.8076 & 0.3577 & -1.0159\\ [-5pt]
		& (0.0039)& (0.0046)& (0.0137)&(0.0140)& (0.0928)& (0.0962) & (0.0369) &  (0.0450) & (0.0899)& (0.1209)\\ [-5pt]
		& [0.0040] & [0.0044] & [0.0135] & [0.0141]&[0.0921]& [0.0920]& [0.0390] &  [0.0449] & [0.0835]& [0.1243]\\
		\multirow{3}{*}{$t(4)$}& 0.5999& -0.2740 & 0.2402 & -0.7005 & 2.0008 & 0.7496& 0.7016 &  0.7989 & 0.3521& -1.0078\\[-5pt]
		& (0.0036) & (0.0034) & (0.0130) & (0.0106)&(0.0727) & (0.0714) & (0.0332) &  (0.0392) & (0.0749) & (0.0972)\\ [-5pt]
		& [0.0035] & [0.0036] & [0.0116] & [0.0113]&[0.0759]& [0.0324] &  [0.0371]& [0.0758]& [0.0697] & [0.1026]\\
		$Laplace$ & 0.6006 & -0.2743 & 0.2408 & -0.6997& 1.9983 & 0.7509 & 0.7026 &  0.8034 & 0.3477 & -1.0089\\[-5pt]
		$(0, \frac{\sqrt{2}}{2})$ & (0.0030) & (0.0030) & (0.0100) & (0.00995)&(0.0638) & (0.0624) & (0.0256) &  (0.0293) & (0.0621) & (0.0873)\\\bottomrule
	\end{tabular}}
	\caption{Empirical means and standard errors (in parentheses) of parameter estimates when $\varepsilon$ is sampled from a standard normal, standardized student t distribution and a Laplace distribution. The asymptotic standard errors are displayed for reference in square brackets.}
	\label{simulation-2}
\end{table}
\begin{figure}[h!]
	\begin{center}
		\begin{center}
			\includegraphics[width=0.9\linewidth]{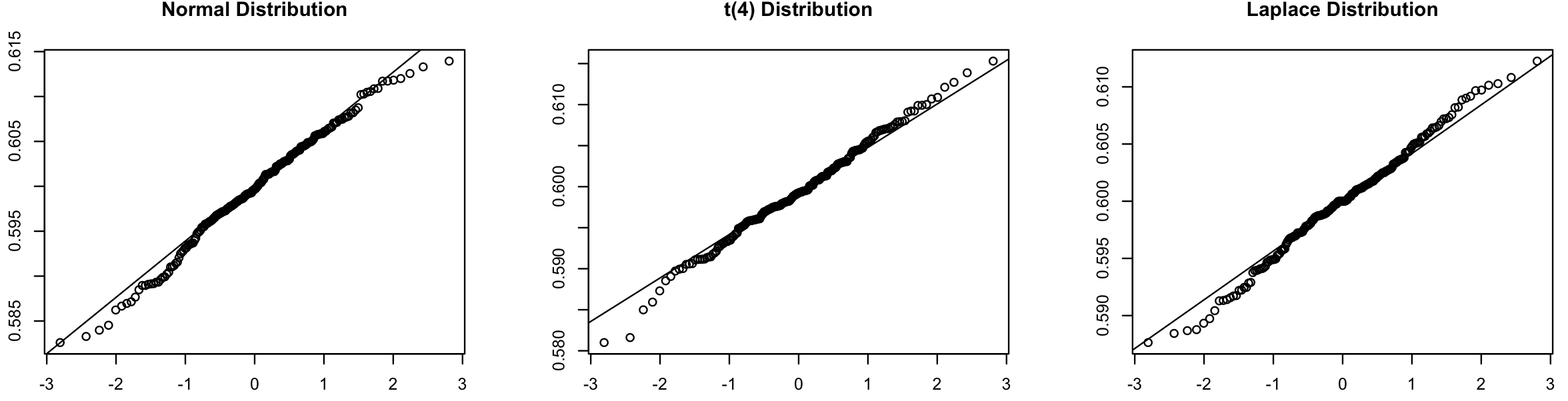}
			\includegraphics[width=0.9\linewidth]{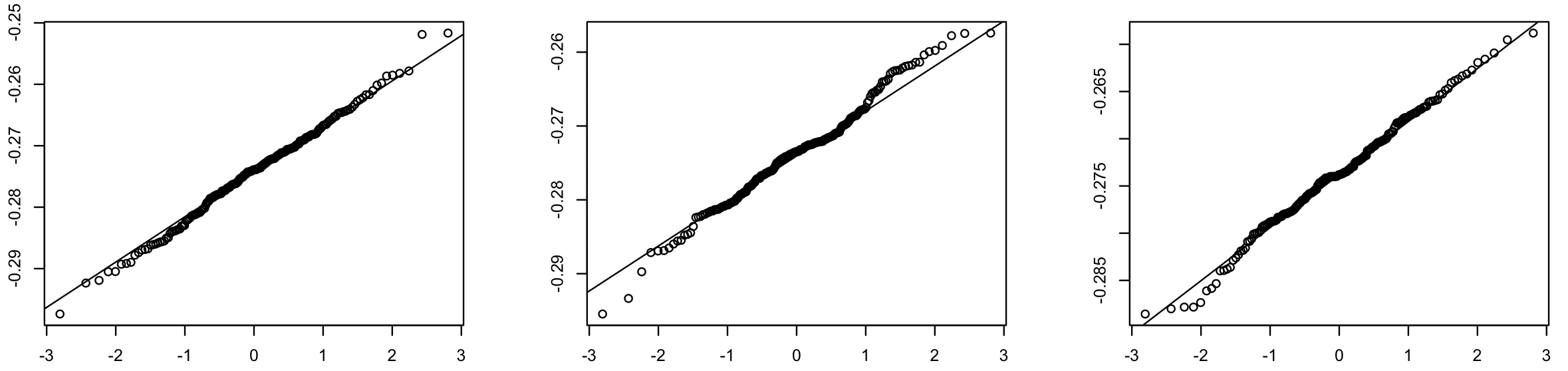}
			\includegraphics[width=0.9\linewidth]{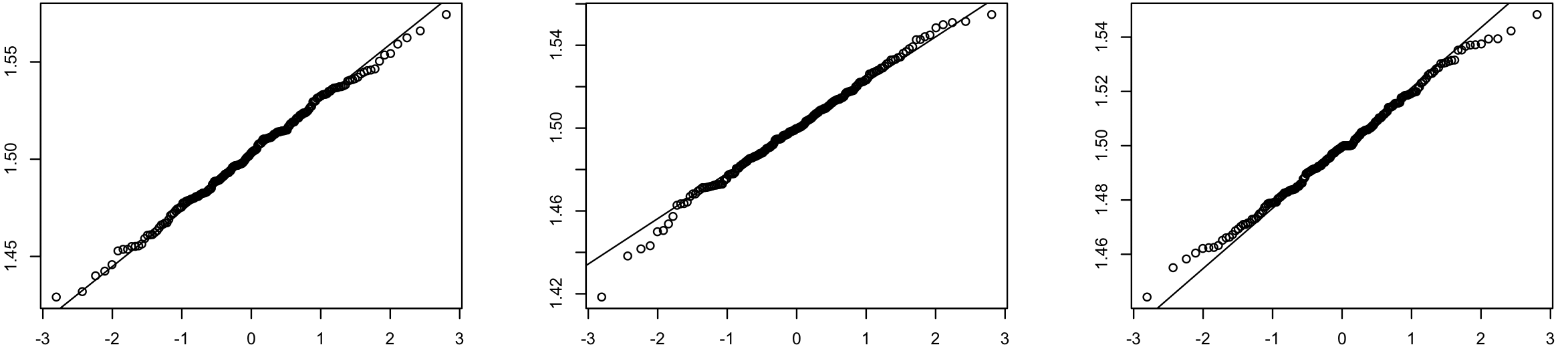}
			\includegraphics[width=0.9\linewidth]{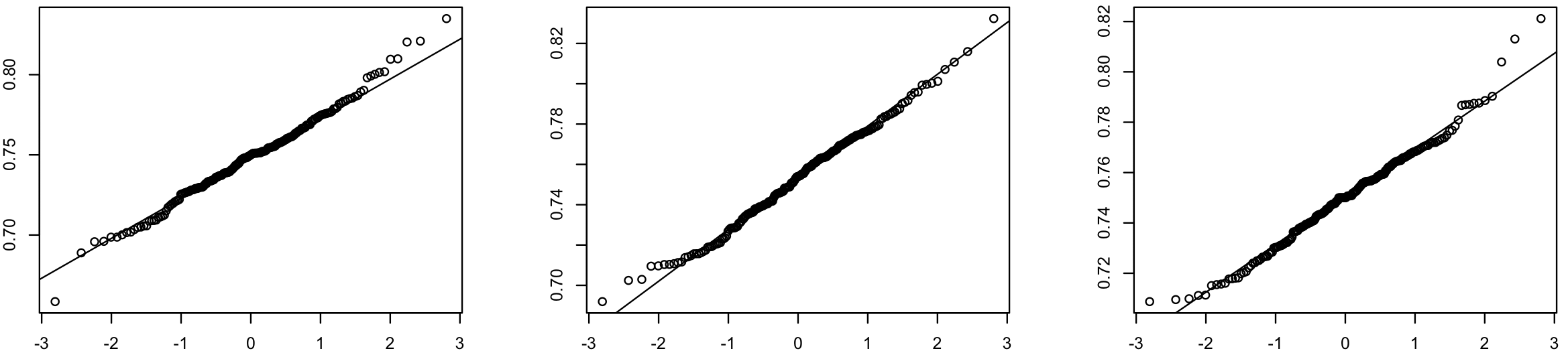}
			\includegraphics[width=0.9\linewidth]{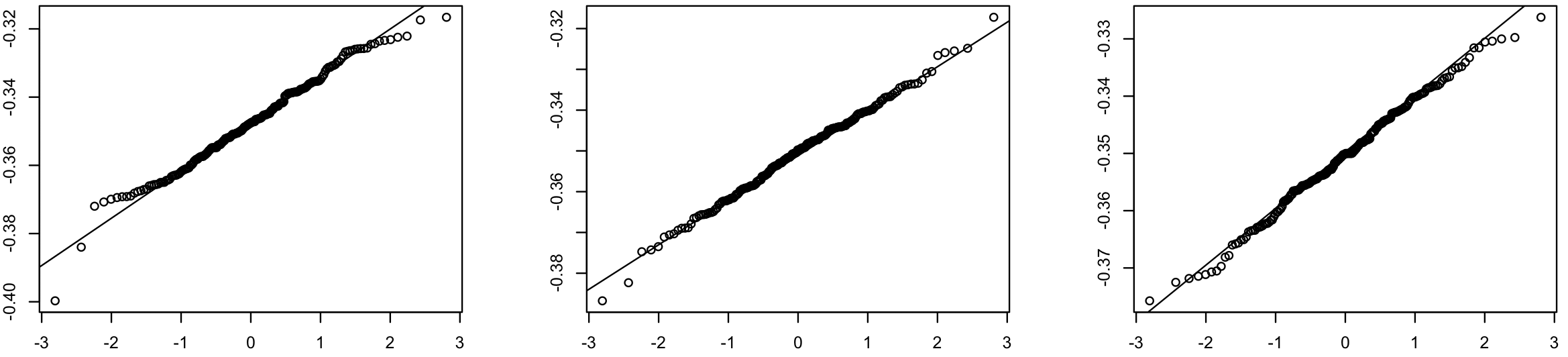}
			\caption{ Normal plots for parameter estimates $\phi_0$ (1st row), $\phi_1$ (2nd row), $\lambda$ (3rd row) and $\gamma_{1}$ (4th row), $\gamma_{2}$ (5th row) of model (\ref{simulation-model}) when $\varepsilon_s$ follows a standard normal distribution (first column), standardized t distribution (middle column) and Laplace distribution (last column) $n = 30\times 30, T=30$ }
			\label{qqplot}
		\end{center}
	\end{center}
\end{figure}
\subsection{Real Data Example}
Spatial models have a lot of applications in understanding spatial interactions in  cross-sectional data. In our first chapter we applied a partially specified spatial autoregressive model to understand the relationships between vote choices and social factors. In this chapter, we want to use a partially specified space time autoregressive model to further analyze the time influence in the electoral dynamics.

We focus on the proportion of votes cast for U.S. presidential candidates at the county level in 2004. Counties are grouped by state, and let $Y_{t}, Y_{t-1}$ (so $t=1,2$, i.e., observe $Y_1$ and $Y_2$) be the corresponding fraction of votes (vote-share) in a county for the Democratic candidate in 2004 and 2000.
Predictors $X_t$ are chosen from economic and social factors covering the living standard, economy development and racial distribution.
\begin{figure}[h!]
	\centering
	\includegraphics[width=0.45\linewidth]{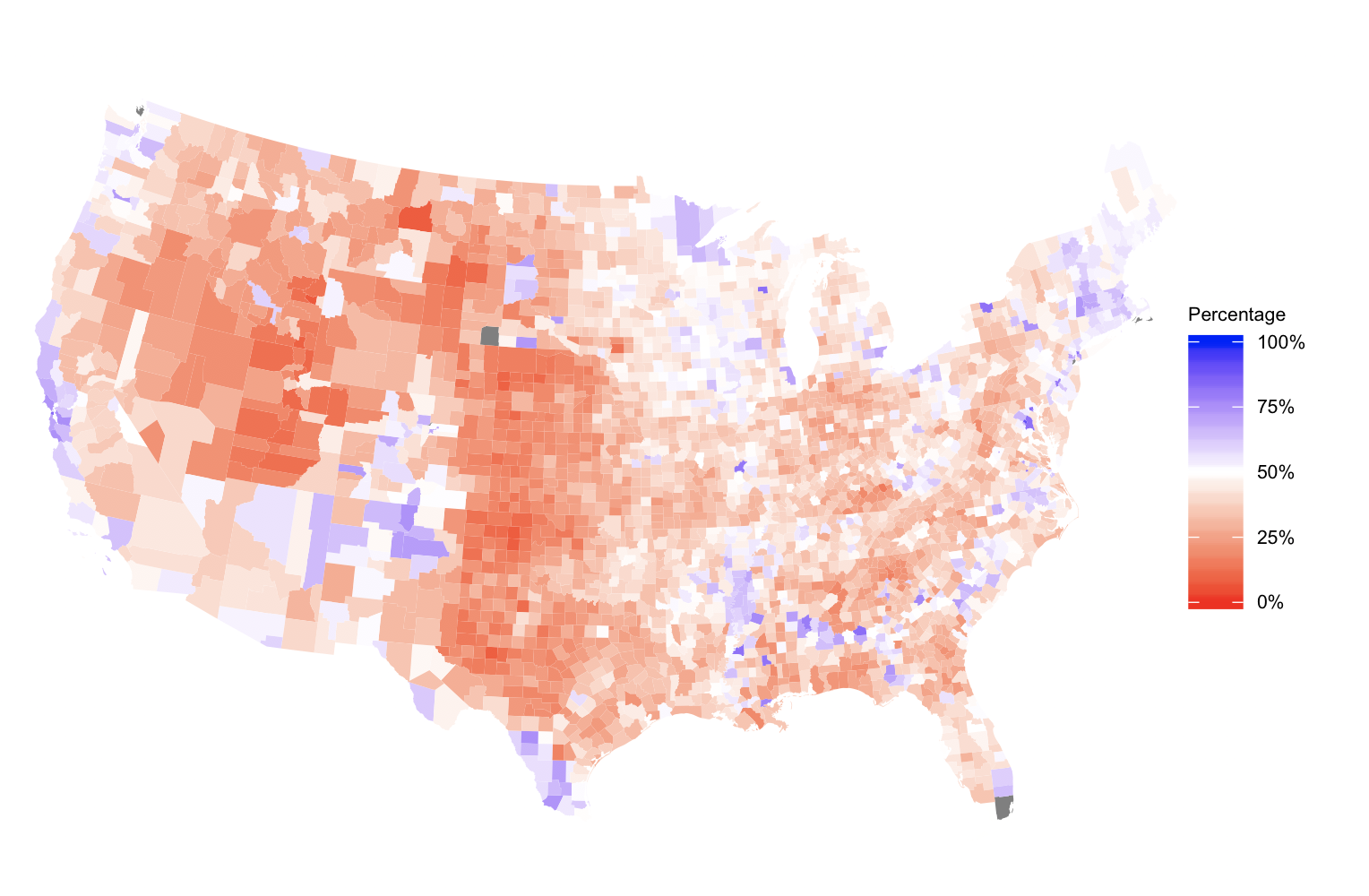}
	\includegraphics[width=0.4\linewidth]{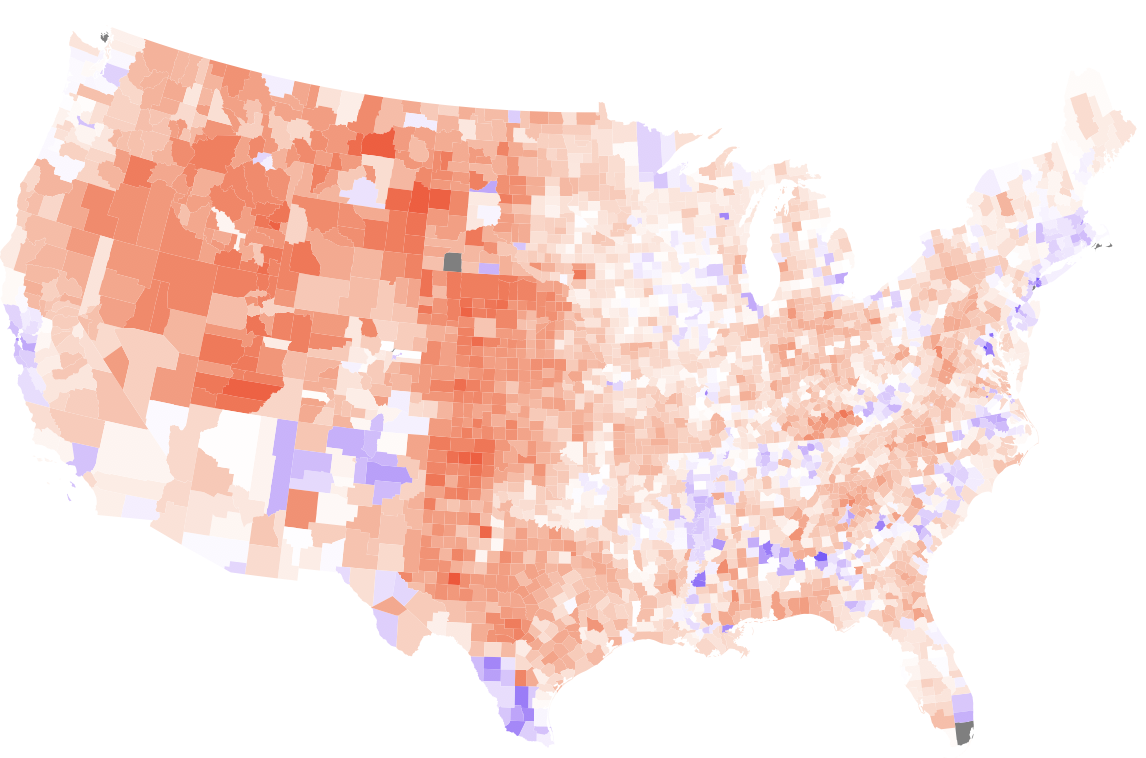}
	\caption{Fractions of vote-shares per county for Democratic presidential candidate in 2004 (left) and 2000 (right)}
	\label{democracy}
\end{figure}
Figure \ref{democracy} shows the observed values of $Y_2$ for 2004 and $Y_1$ for 2000 in a US map. Despite the strong spatial correlation (by Moran's Test on $Y_t$ test statistic = $52.4$, P-value $ <2.2\times10^{-16}$), these heat maps also exhibit the correlation across time since the two heat maps look rather similar. This indicates that $Y_t$, the fraction of vote-share for Democratic candidate, is not independently distributed across the space or time. Therefore we consider fitting a space time model to the data.

In our analysis, we exclude the four U.S. counties with no neighbors (San Juan, Dukes, Nantucket, Richmond) to avoid the non-singularity of our spatial weight matrix $W_n$ in the modeling, so the total number of observations is $n = 3107$. Continuing our analysis in the first chapter, 
the selected explanatory variables are percent residents under 18 years in 2004 $X_{1,t}$ (\texttt{UNDER18}), percent white residents in 2004 $X_{2,t}$ (\texttt{WHITE}), percent residents below poverty line in 2004 $X_{3,t}$ (\texttt{pctpoor}).

We also assume the random error follows a scaled $t(8)$ distribution and, similar to previous chapter, perform variable transformations as follows:
{\small\setlength{\abovedisplayskip}{3pt}
	\setlength{\belowdisplayskip}{\abovedisplayskip}
	\setlength{\abovedisplayshortskip}{0pt}
	\setlength{\belowdisplayshortskip}{3pt}
	\begin{align*}
	Y_t^{\ast} &= Y_t/8\\
	Y_{t-1}^{\ast} & = Y_{t-1}/8\\
	\tilde{X}_{1,t}&= (I_{3107}-0.6W_{3107})X_{1,t}\\
	X_{1,t}^{\ast} &= \frac{\tilde{X}_{1,t}-Average(\tilde{X}_{1,t})}{Std(\tilde{X}_{1,t})}\\
	X_{2,t}^{\ast} &= \frac{X_{2,t}-Average(X_{2,t})}{Std(X_{2,t})}\\
	X_{3,t}^{\ast} &= \frac{X_{3,t}-Average(X_{3,t})}{Std(X_{3,t})}
	\end{align*}} 
Figure \ref{hist-x-y} illustrates histograms of $Y_t^{\ast}$ (first row) and histograms of exogenous variables $X_{1,t}^{\ast}, X_{2,t}^{\ast}, X_{3,t}^{\ast}$ when $t=1,2,3$ ($t=3$ represents the year 2008) respectively. Comparing their histograms at different years, we can observe that the distributions look similar so we may consider $X_t$ and $Y_t$ as stationary processes across time.
\begin{figure}[h!]
	\centering
	\includegraphics[width=0.9\linewidth]{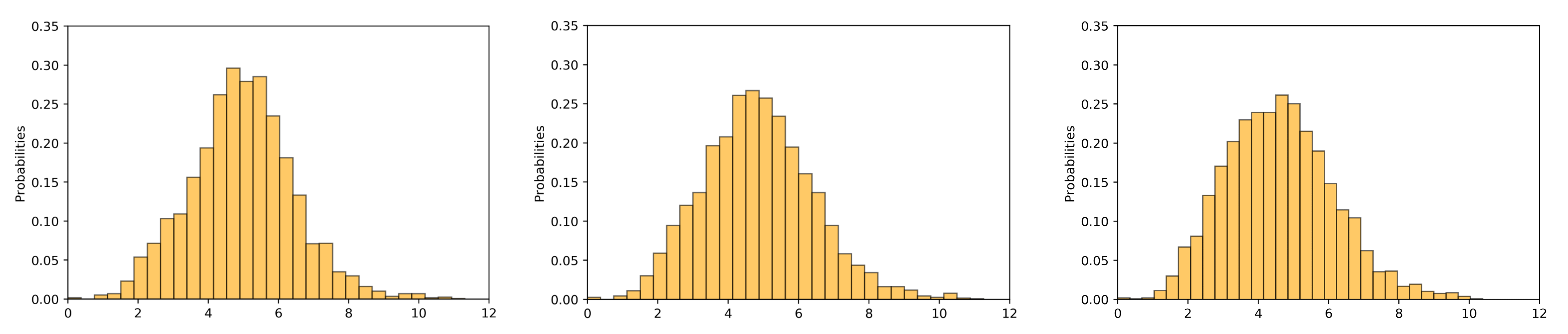}
	\includegraphics[width=0.9\linewidth]{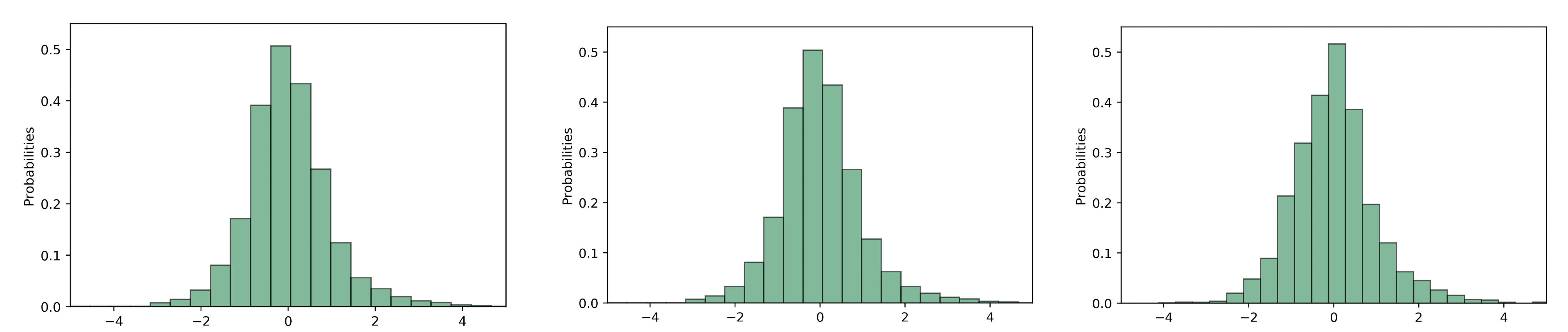}
	\includegraphics[width=0.9\linewidth]{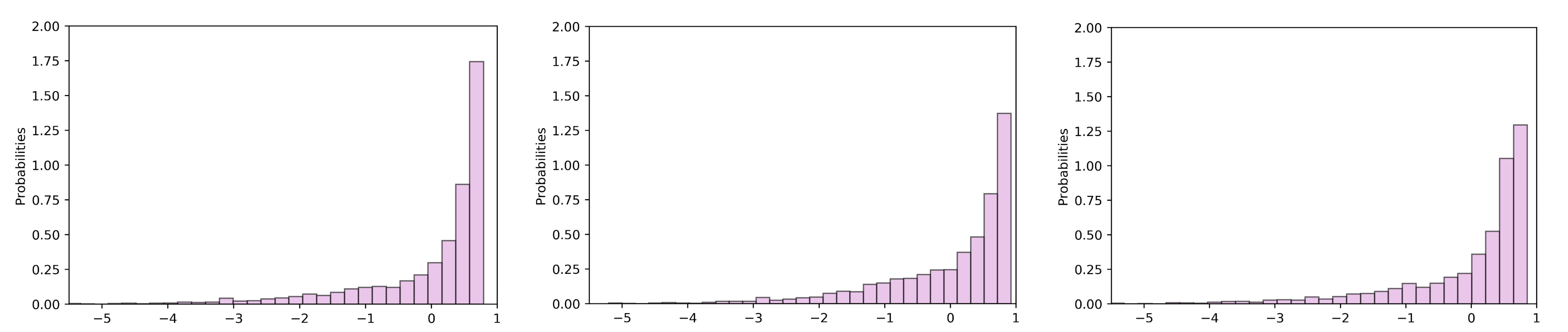}
	\includegraphics[width=0.9\linewidth]{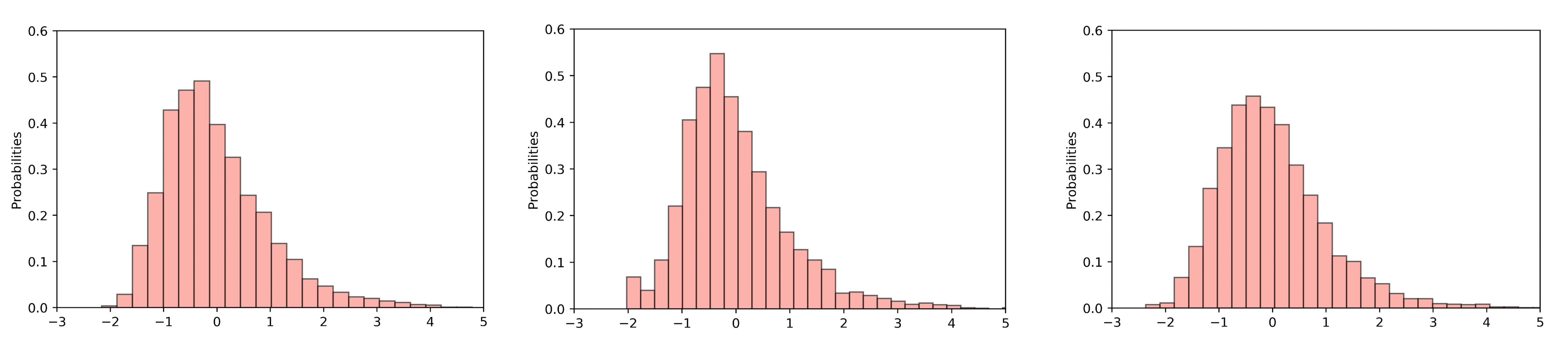}
	\caption{Histograms of $Y_t^{\ast}$ (1st row), $X_{1,t}^{\ast}$ (2nd row), $X_{3,t}^{\ast}$ (3rd row) and $X_{3,t}^{\ast}$ (4th row) for $t=1,2,3$ corresponding to the year 2000 (left), 2004 (middle) and 2008 (right)}
	\label{hist-x-y}
\end{figure}

The estimated PSAR-ANN model in chapter 1 is:
{\small\setlength{\abovedisplayskip}{3pt}
	\setlength{\belowdisplayskip}{\abovedisplayskip}
	\setlength{\abovedisplayshortskip}{0pt}
	\setlength{\belowdisplayshortskip}{3pt}
	\begin{align}\label{psar-ann-model}\nonumber
	Y_t^{\ast} =&0.721W_{3107}Y_t^{\ast} + 1.693
	-0.185X_{1,t}^{\ast}- 0.658X_{2,t}^{\ast}+0.181X_{3,t}^{\ast}\\
	&-0.937F(1.509X_{1,t}^{\ast} -2.544X_{2,t}^{\ast}+2.268X_{3,t}^{\ast})+\hat{\varepsilon}_t
	\end{align}} 
In this chapter we would like to add time into the model and we fit two PSTAR-ANN$(1)$ models with one and two neurons respectively. Similarly we find the parameter estimates by maximizing the corresponding log-likelihood functions and use the L-BFGS-B algorithm to search for the optimum. Detailed optimization steps are similar to those in chapter 1. The model fits are shown below. One is the PSTAR-ANN$(1)$ with one neuron:
{\small\setlength{\abovedisplayskip}{3pt}
	\setlength{\belowdisplayskip}{\abovedisplayskip}
	\setlength{\abovedisplayshortskip}{0pt}
	\setlength{\belowdisplayshortskip}{3pt}
	\begin{align}\label{pstar-oneneuron-model}\nonumber
	Y_t^{\ast} &=0.425W_{3107}Y_t^{\ast} + 0.464W_{3107}Y_{t-1}^{\ast} -1.173
	+0.148X_{1,t}^{\ast} -1.177X_{2,t}^{\ast}-0.153X_{3,t}^{\ast}\\
	&+3.056F(-0.722X_{1,t}^{\ast} +1.689X_{2,t}^{\ast}+0.248X_{3,t}^{\ast})+\hat{\varepsilon}_t
	\end{align}} 
Another is the PSTAR-ANN$(1)$ with two neurons:
{\small\setlength{\abovedisplayskip}{3pt}
	\setlength{\belowdisplayskip}{\abovedisplayskip}
	\setlength{\abovedisplayshortskip}{0pt}
	\setlength{\belowdisplayshortskip}{3pt}
	\begin{align}\label{pstar-twoneuron-model}\nonumber
	Y_t^{\ast} &=0.417W_{3107}Y_t^{\ast} + 0.467W_{3107}Y_{t-1}^{\ast} -1.576
	+0.203X_{1,t}^{\ast} -1.222X_{2,t}^{\ast}-0.057X_{3,t}^{\ast}\\
	&+0.699F(-1.249X_{1,t}^{\ast} +0.084X_{2,t}^{\ast}-3.247X_{3,t}^{\ast})\\\nonumber
	&+3.180F(-0.621X_{1,t}^{\ast} +1.621X_{2,t}^{\ast}+0.495X_{3,t}^{\ast})+\hat{\varepsilon}_t
	\end{align}} 
Comparing  the three models (\ref{psar-ann-model}), (\ref{pstar-oneneuron-model}) and (\ref{pstar-twoneuron-model}), the coefficients estimates are all positive so it is apparent that there exist a positive space correlation, between $y_{s,t}$ and its neighbors, and also a positive time correlation between $Y_t$ and $Y_{t-1}$. The P-values of Moran's test statistic of PSTAR-ANN$(1)$ model residuals (residuals of model (\ref{pstar-oneneuron-model}) and (\ref{pstar-twoneuron-model})) are higher than that of model (\ref{psar-ann-model}), which indicates that PSTAR-ANN$(1)$ models are able to describe more spatial correlations than the PSAR-ANN model. 
\begin{figure}[h!]
	\centering
	\includegraphics[width=0.6\linewidth]{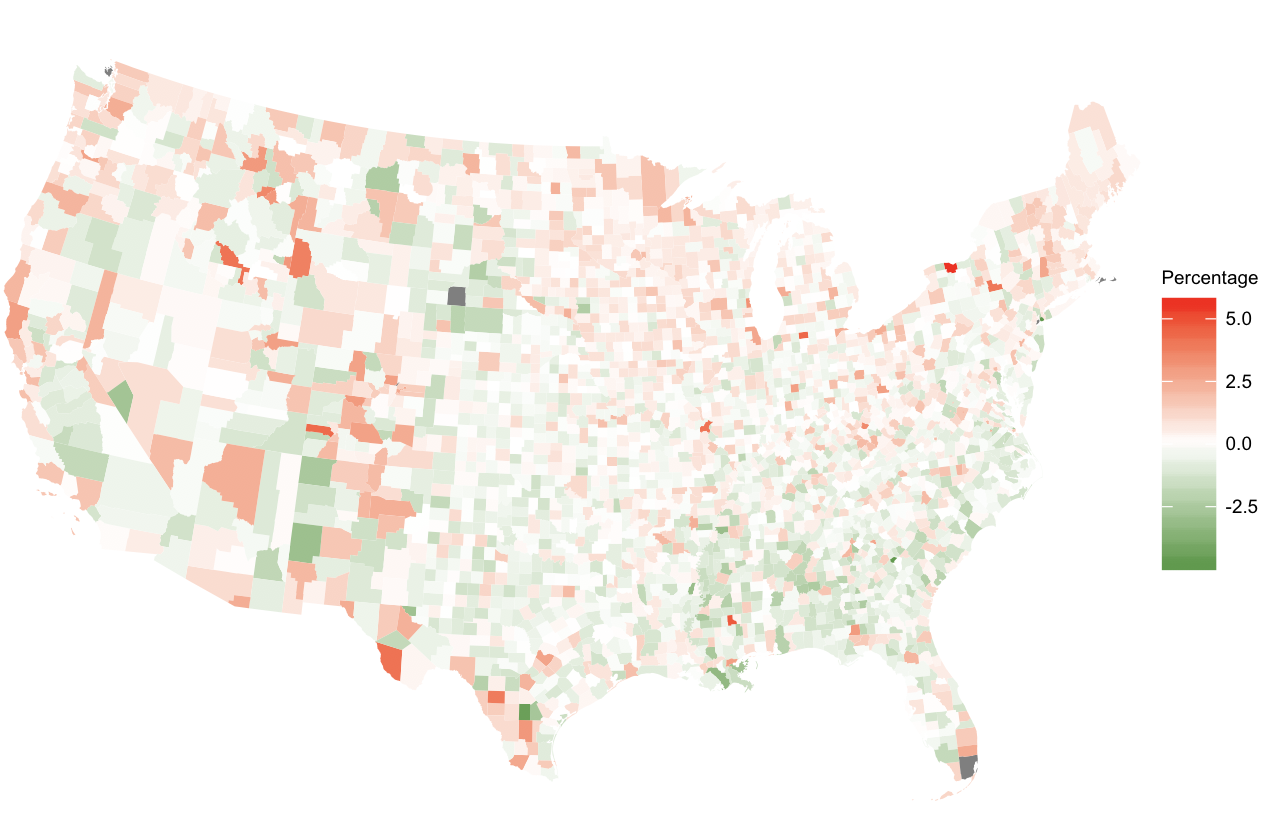}
	\caption{Residuals heat map (calculated from the PSTAR-ANN model with one neuron)}
	\label{res_plots}
\end{figure}
For the preliminary comparison purpose, we compare the AICs (AIC $=2\#\text{parameters}-2\ln L_{n,T}(\hat{\boldsymbol{\theta}})$) of the three models (See table \ref{models-compare}).
For likelihood ratio test ($\mathcal{H}_0$: Model (\ref{psar-ann-model}) is adequate, $\mathcal{H}_1$: model (\ref{pstar-oneneuron-model}) is adequate), the test statistic $-2\ln L_{\text{Model (\ref{psar-ann-model})}}+2\ln L_{\text{Model (\ref{pstar-oneneuron-model})}}= 287.17$ with $df = 6$, P-value $< 0.05$, so we rejected $\mathcal{H}_0$ and conclude that the PSTAR-ANN$(1)$ model with one neuron is a better fit. Similarly we apply the same method to compare the two PSTAR-ANN$(1)$ models and conclude that the model with two neurons is better (the test statistic $-2\ln L_{\text{Model (\ref{pstar-oneneuron-model})}}+2\ln L_{\text{Model (\ref{pstar-twoneuron-model})}} = 40.33$ with $df = 4$, P-value $< 0.05$).
\begin{table}[h]
	\centering
	\begin{tabular}{cccc}
		\toprule
		\multirow{2}{*}{Models}
		& PSAR-ANN & PSTAR-ANN & PSTAR-ANN\\
		&(one neuron)&(one neuron)&(two neurons)\\\midrule
		$\#$ Parameters& $9$ & $10$& $14$\\
		\multirow{2}{*}{Moran's Test}
		& $0.0745$&$0.2336$&$0.3368$\\ [-2pt]
		& $(1.7836)$& $(-1.1910)$& $(-0.9604)$\\
		$-\ln L$& $1879.35$&$1734.5$&$1710.33$\\
		$AIC$&$3776.17$ &$3489$&$3448.669$\\
		\bottomrule     
	\end{tabular}
	\caption{Model Comparisons: PSAR-ANN model with one neuron (\ref{psar-ann-model}), PSTAR-ANN models with one (\ref{pstar-oneneuron-model}) and two neurons (\ref{pstar-twoneuron-model})}
	\label{models-compare}
\end{table}
The covariance matrices for the parameter estimates of model (\ref{pstar-oneneuron-model}) and (\ref{pstar-twoneuron-model}) are calculated and the $95\%$ confidence intervals for the model parameters are shown in Tables \ref{CI} and \ref{CI_PSTAR_ANN_2}. 
\setlength{\extrarowheight}{5pt}
\begin{table}[h!]
	\centering
	\begin{tabular}{cccc}
		\toprule
		Parameter& Estimate&Std.&$95\%$ C.I. \\ \midrule
		$\phi_0$ & 0.425 & 0.0086&  $(0.4081, 0.4419)$ \\ [-5pt]
		$\phi_1$& 0.464& 0.0182& $(0.4283, 0.4997)$\\ [-5pt]
		$\beta_0$& -1.173 & 0.3283 & $(-1.8165, -0.5295)$\\[-5pt]
		$\beta_1$& 0.148 & 0.0697 & $(0.0114, 0.2846)$\\ [-5pt]
		$\beta_2$& -1.177 & 0.1638 & $(-1.4980, -0.8560)$\\ [-5pt]
		$\beta_3$& -0.153 & 0.1079 & $(-0.3645, 0.0585)^{\ast}$\\[-5pt]
		$\lambda$& 3.056& 0.6397& $(1.8022, 4.3098)$\\ [-5pt]
		$\gamma_1$& -0.722& 0.1278& $(-0.9725, -0.4715)$\\ [-5pt]
		$\gamma_2$&1.689& 0.1762& $(1.3436, 2.0344)$\\ [-5pt]
		$\gamma_3$& 0.248& 0.1890& $(-0.1224, 0.6184)^{\ast}$\\
		\bottomrule       
	\end{tabular}
	\caption{Parameter estimates of PSTAR-ANN model (\ref{pstar-oneneuron-model}) parameters with $95\%$ confidence intervals ($\ast$ indicates the insignificance)}
	\label{CI}
\end{table}
From Table \ref{CI}, all the parameters, except $X_{3,t}$ (\texttt{pctpoor}), are significant at 0.05 significance level. Table \ref{CI_PSTAR_ANN_2} shows the $95\%$ level of parameter estimates in model (\ref{pstar-twoneuron-model}).
\setlength{\extrarowheight}{5pt}
\begin{table}[h!]
	\centering
	\begin{tabular}{cccc}
		\toprule
		Parameter& Estimate&Std.&$95\%$ C.I. \\ \midrule
		$\phi_0$ & 0.417 & 0.0086&  $(0.4000, 0.4339)$ \\ [-5pt]
		$\phi_1$& 0.467& 0.0178& $(0.4321, 0.5019)$\\ [-5pt]
		$\beta_0$& -1.576 & 0.3063 & $(-2.1764, -0.9756)$\\[-5pt]
		$\beta_1$& 0.203 & 0.0731 & $(0.0598, 0.3462)$\\ [-5pt]
		$\beta_2$& -1.222 & 0.1507 & $(-1.5174, -0.9266)$\\ [-5pt]
		$\beta_3$& -0.057 & 0.0926 & $(-0.2385, 0.1245)^{\ast}$\\[-5pt]
		$\lambda_1$& 3.180& 1.2624& $(0.7057, 5.6543)$\\ [-5pt]
		$\gamma_{11}$& -0.621& 0.1193& $(-0.8548, -0.3872)$\\ [-5pt]
		$\gamma_{12}$&1.621& 0.1521& $(1.3230, 1.9190)$\\ [-5pt]
		$\gamma_{13}$& 0.495& 0.1063& $(0.2866, 0.7034)$\\[-5pt]
		$\lambda_2$& 0.699& 0.2397& $(0.2291, 1.1689)$\\ [-5pt]
		$\gamma_{21}$& -1.294& 0.6060& $(-2.4368, -0.0612)$\\ [-5pt]
		$\gamma_{22}$&0.084& 0.4859& $(-0.8683, 1.0363)^{\ast}$\\ [-5pt]
		$\gamma_{23}$& -3.247& 0.3570& $(-3.9469, -2.5471)$\\
		\bottomrule       
	\end{tabular}
	\caption{Parameter estimates of PSTAR-ANN model (\ref{pstar-twoneuron-model}) parameters with $95\%$ confidence intervals ($\ast$ indicates the insignificance)}
	\label{CI_PSTAR_ANN_2}
\end{table}

From Table \ref{CI} and \ref{CI_PSTAR_ANN_2}, we can see that values of $y_{s,t}$ are positively spatially correlated in both space and time. 
Looking at the signs of parameter estimates of coefficients, we can see that the sign of variable \texttt{UNDER18} in model (\ref{psar-ann-model}) is negative while positive in model (\ref{pstar-oneneuron-model}) and (\ref{pstar-twoneuron-model}). Considering its parameter estimate significant in all models, this indicates that age and vote-shares for Democratic candidates can be dependent but the percent residents under 18 may not be a good measurement for this social factor. We should consider using other age related variables to predict $Y_t$ such as the percent young voters between 18 and 30 years old.
Variable \texttt{WHITE} is negatively correlated with $Y_t$ in all three fitted models and this negative correlation accords with our common sense that white voters tend to support the Republican candidate.
The last variable \texttt{pctpoor} is bit tricky because it is not significant in model (\ref{pstar-oneneuron-model}) but is significant in the neural network component in model (\ref{pstar-twoneuron-model}). Regarding to this, it needs further assessment to decide if \texttt{pctpoor} should be included in the model.
In chapter 3, we will further discuss the model selection in detail. To conclude, our proposed model PSTAR-ANN appears to successfully capture some presidential election dynamics over both space and time. It allows for non-Gaussian random errors and is flexible in learning nonlinear relationships between the response and exogenous variables.

\small{
	\singlespacing
	\bibliography{references}
}
\end{document}